\newcommand{\tilx}{\tilde{\vec{x}}}
\newcommand{\tils}{\tilde{\vec{s}}}
\newcommand{\tild}{\tilde{\vec{d}}}
\newcounter{qst}
\crefname{qst}{Question}{Questions}
\newtheorem{definition}{Definition}
\newtheorem{lemma}{Lemma}
\newtheorem{proposition}{Proposition}
\newtheorem{theorem}{Theorem}
\newtheorem{corollary}{Corollary}
\newtheorem{assumption}{Assumption}
\DeclareMathOperator{\reg}{Reg}
\DeclareMathOperator{\tildereg}{{\tilde R}eg}
\title{Near-Optimal No-Regret Learning Dynamics for General\\ Convex Games}
\author[1]{Gabriele Farina\footnote{Equal contribution.}}
\author[2]{Ioannis Anagnostides$^*$}
\author[3]{Haipeng Luo}
\author[4]{Chung-Wei Lee}
\author[5]{Christian Kroer}
\author[6]{Tuomas Sandholm}
\affil[1,2,6]{Carnegie Mellon University}
\affil[5]{Columbia University}
\affil[3,4]{University of Southern California}
\affil[6]{Strategy Robot, Inc.}
\affil[6]{Optimized Markets, Inc.}
\affil[6]{Strategic Machine, Inc.}
\affil[ ]{\texttt {\{gfarina,ianagnos\}@cs.cmu.edu}, \texttt {\{haipengl,leechung\}@usc.edu}\\ \texttt{christian.kroer@columbia.edu} and \texttt{sandholm@cs.cmu.edu}}
    \patchcmd\algocf@Vline{\vrule}{\vrule \kern-0.4pt}{}{}
    \patchcmd\algocf@Vsline{\vrule}{\vrule \kern-0.4pt}{}{}
\definecolor{darkgrey}{gray}{0.3}
\definecolor{commentcolor}{gray}{0.5}
\crefname{algocf}{Algorithm}{Algorithms}
\newcommand*{\N}{{\mathbb{N}}}
\newcommand*{\R}{{\mathbb{R}}}
\newcommand{\lip}{L}
\newcommand{\regdep}{\mathfrak{R}}
\newcommand{\gradbnd}{B}
\newcommand{\mul}{\mu}
\newcommand{\ind}{r}
\newcommand{\tilu}{\tilde{\vec{u}}}
\newcommand{\cR}{\mathcal{R}}
\newcommand{\cA}{\mathcal{A}}
\newcommand{\cQ}{\mathcal{Q}}
\newcommand{\cZ}{\mathcal{Z}}
\newcommand{\cS}{\mathcal{S}}
\DeclareMathOperator{\polylog}{polylog}
\DeclareMathOperator{\poly}{poly}
\newcommand{\cX}{\mathcal{X}}
\newcommand{\defeq}{\coloneqq}
\renewcommand{\^}[1]{^{(#1)}}
\newcommand{\bbR}{\mathbb{R}}
\newcommand{\algoshort}{\texttt{LRL-OFTRL}\xspace}
\DeclareMathOperator*{\argmax}{arg\,max}
\DeclareMathOperator*{\argmin}{arg\,min}
\renewcommand{\vec}[1]{\bm{#1}}
\newcommand{\vstack}[2]{\begin{pmatrix} #1 \\ #2 \end{pmatrix}}
\newcommand{\ls}[1]{\tilde{\vec \ut}^{(#1)}}
\newcommand{\pz}{\tilde{\vec x}}
\newcommand{\p}[1]{\tilde{\vec x}^{(#1)}}
\newcommand{\ppz}{\tilde{\vec z}}
\newcommand{\pp}[1]{\tilde{\vec z}^{(#1)}}
\newcommand{\Ls}[1]{\tilde{\Ut}^{(#1)}}
\newcommand{\ru}[1]{\psi(#1)}
\newcommand{\bg}[2]{D_\psi\left(#1\,\middle\|\,#2\right)}
\newcommand{\ir}[1]{\left\langle#1\right\rangle}
\newcommand{\cp}{\tilde{\vec x}}
\newcommand{\nms}[2]{\left\|#1\right\|_{#2}^2}
\newcommand{\pnms}[2]{\nms{#1}{\nabla^2\ru{#2}}}
\newcommand{\etamin}{\frac{1}{50}}
\newcommand{\range}[1]{[\![#1]\!]}
\newcommand{\ut}{\vec{u}}
\newcommand{\Ut}{\vec{U}}
\newcommand{\vx}{\vec{x}}
\newcommand{\vg}{\vec{g}}
\newcommand{\mat}[1]{\mathbf{#1}}
\newcommand{\OFTRL}{\texttt{OFTRL}\xspace}
\def\[#1\]{%
  \begin{align*}%
    #1%
  \end{align*}%
}
\NewDocumentCommand{\numberthis}{om}{%
\IfNoValueTF{#1}{%
    \refstepcounter{equation}\tag{\theequation}%
  }{%
    \tag{#1}%
  }%
  \label{#2}%
}
\newcommand{\pnt}[1]{\mleft\|#1\mright\|_t} 
\newcommand{\dnt}[1]{\mleft\|#1\mright\|_{*,t}} 
\begin{document}

\maketitle

\begin{abstract}
      A recent line of work has established uncoupled learning dynamics such that, when employed by all players in a game, each player's \emph{regret} after $T$ repetitions grows polylogarithmically in $T$, an exponential improvement over the traditional guarantees within the no-regret framework. However, so far these results have only been limited to certain classes of games with structured strategy spaces---such as normal-form and extensive-form games. The question as to whether $O(\polylog T)$ regret bounds can be obtained for general convex and compact strategy sets---which occur in many fundamental models in economics and multiagent systems---while retaining efficient strategy updates is an important question.
      In this paper, we answer this in the positive by establishing the first uncoupled learning algorithm with $O(\log T)$ per-player regret in general \emph{convex games}, that is, games with concave utility functions supported on arbitrary convex and compact strategy sets. Our learning dynamics are based on an instantiation of optimistic follow-the-regularized-leader over an appropriately \emph{lifted} space using a \emph{self-concordant regularizer} that is peculiarly not a barrier for the feasible region.
      Our learning dynamics are efficiently implementable given access to a proximal oracle for the convex strategy set, leading to $O(\log\log T)$ per-iteration complexity; we also give extensions when access to only a \emph{linear} optimization oracle is assumed. Finally, we adapt our dynamics to guarantee $O(\sqrt{T})$ regret in the adversarial regime.
      Even in those special cases where prior results apply, our algorithm improves over the state-of-the-art regret bounds either in terms of the dependence on the number of iterations or on the dimension of the strategy sets.

\end{abstract}

\section{Introduction}
\label{section:introduction}

\emph{Regret minimization} is a celebrated framework that has been central in the development of online learning and the theory of multiagent systems. Indeed, fundamental connections have been forged between no-regret learning and game-theoretic solution concepts~\citep{Freund99:Adaptive, Hart00:Simple,Foster97:Calibrated,Roughgarden15:Intrinsic}. More broadly, regret is an intrinsic measure of performance in online learning and games. Furthermore, regret minimization algorithms have enjoyed a remarkable practical success, being a primary component in recent landmark results in AI~\citep{Bowling15:Heads,Moravvcik17:DeepStack, Brown17:Superhuman,Brown19:Superhuman}. These advances were guided by game-theoretic principles, made possible by training the AI agents using \emph{self-play} under regret-minimizing algorithms, an approach that has proven to be more scalable compared to linear programming techniques. Nevertheless, the traditional no-regret framework is overly pessimistic, insisting on modeling the environment in a fully adversarial way. While this well-understood worst-case view might be justifiable for applications such a security games, it could be far from optimal in more benign and \emph{predictable} environments, including the setting of training agents using self-play. This begs the question: \emph{What are the optimal performance guarantees we can obtain when learning agents are competing against each other in general games?}

This fundamental question was first formulated and addressed by~\citet{Daskalakis11:Near} within the context of \emph{zero-sum games}. Since then, there has been a considerable interest in extending their guarantee to more general settings~\citep{Rakhlin13:Optimization,Syrgkanis15:Fast,Foster16:Learning,Chen20:Hedging,Daskalakis21:Fast,Piliouras22:Optimal}. In particular, \citet{Daskalakis21:Near} recently established that when all players in a general \emph{normal-form game} employ an \emph{optimistic} variant of \emph{multiplicative weights update (MWU)}, the regret of each player grows \emph{nearly-optimally} as $O(\log^4 T)$ after $T$ repetitions of the game, leading to an \emph{exponential improvement} over the guarantees obtained using traditional techniques within the no-regret framework. However, while normal-form games are a common way to represent strategic interactions in theory, most settings of practical significance inevitably involve more complex strategy spaces. For those settings, any faithful approximation of the game using the normal form is typically inefficient, requiring an action space that is exponential in the natural parameters of the problem, thereby limiting the practical implications of those prior results. This motivates our central question:\vspace{.4cm}

\begin{tikzpicture}
    \node[text width=14cm,align=center,inner sep=0pt] at (0,-0.3) {\textit{Can we establish near-optimal, efficiently implementable, and strongly uncoupled no-regret learning dynamics in general convex games?}};
    \node at (7.7cm,-0.3) {\refstepcounter{qst}\label{qst}($\clubsuit$)};
    \node at (-7.7cm,-0.6) {};
\end{tikzpicture}
\vspace{.4cm}

\emph{Convex games} are a rich class of games wherein the strategy space of each player is an arbitrary convex and compact set, while the utility of each player is an arbitrary concave function (see \cref{section:prel} for a formal description). As such, convex games encompass normal-form and \emph{extensive-form games}, but go well-beyond
to many other fundamental settings in economic theory including routing games, resource allocation problems, and competition between firms. Our primary contribution in this paper is to substantially extend prior results to all such games, addressing \cref{qst}.

\subsection{Our Contributions}

In this paper we introduce a novel no-regret learning algorithm, which we coin \emph{lifted log-regularized optimistic follow the regularized leader (\algoshort)}. \algoshort settles \cref{qst} in the positive, as summarized in the following theorem.\footnote{For simplicity in the exposition we use the $O(\cdot)$ notation in our introduction to suppress time-independent parameters that depend (polynomially) on the game; precise statements are deferred to \cref{section:main}.}

\begin{theorem}[Detailed version in \cref{theorem:main-detailed}]
    \label{theorem:main}
    Consider any general convex game. When all players employ our strongly uncoupled learning dynamics ($\algoshort$), the regret of each player grows as $O(\log T)$. At the same time, if the player is facing adversarial utilities we guarantee $O(\sqrt{T})$ regret.
\end{theorem}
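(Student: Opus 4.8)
The plan is to route everything through a single per-player \emph{regret bounded by variation in utilities} (RVU) inequality for \OFTRL run on the lifted decision set with the self-concordant regularizer $\psi$, and then to exploit the smoothness of the game to cancel the variation terms across players. First I would establish that, for \emph{any} sequence of utility vectors (possibly adversarial), the \algoshort iterates $\p{t}$ in the lifted space satisfy a bound of the form
\[
\reg^T \le \frac{\psi(\vec{z}^\star) - \psi(\p{1})}{\eta} + \eta \sum_{t=1}^{T} \dnt{\ut^{(t)} - \ut^{(t-1)}}^2 - \frac{c}{\eta} \sum_{t=1}^{T} \pnt{\p{t+1} - \p{t}}^2 ,
\]
where $\vec{z}^\star$ is a (lifted) comparator, the norms are the local primal/dual norms induced by $\nabla^2\psi$, and $c>0$ is an absolute constant. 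The key enabling fact is self-concordance: consecutive iterates stay inside a Dikin ellipsoid, so the local norms at successive points are multiplicatively comparable, which is precisely what lets the negative stability term be stated in the local norm at time $t$. I would derive this by the standard \OFTRL potential/telescoping argument, inserting one self-concordance step to pass between $\nabla^2\psi(\p{t})$ and $\nabla^2\psi(\p{t+1})$.

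Next I would translate the lifted-space regret into regret for the original game. Because $\psi$ is a barrier for the lifted cone but stays \emph{bounded} on the original feasible set, the divergence term $\psi(\vec{z}^\star)-\psi(\p{1})$ is controlled by taking the comparator to be a fixed strategy pushed a distance $\Theta(1/T)$ toward the interior along the lifting coordinate: the resulting approximation error is $O(1)$ while the barrier contributes only $O(\log T)$. This single logarithm is the \emph{sole} source of the $O(\log T)$ rate. I would also verify here that the lifting is set up so that $\langle \ut^{(t)}, \p{t}\rangle$ reproduces the player's realized utility and that every comparator of the original set appears among the feasible lifted comparators, so that lifted regret genuinely upper-bounds the true regret.

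With the per-player RVU bound in hand, I would invoke concavity and smoothness of the utilities to bound the utility variation by the strategy variation, $\dnt{\ut_i^{(t)} - \ut_i^{(t-1)}}^2 \lesssim \sum_{j} \pnt{\tilde{\vec x}_j^{(t)} - \tilde{\vec x}_j^{(t-1)}}^2$, i.e.\ player $i$'s observed gradient changes slowly whenever all players move slowly. Summing the RVU bounds over the $n$ players, the aggregate positive variation is dominated (for a sufficiently small constant step size $\eta$) by the aggregate negative stability term; combined with a standard self-bounding argument on the iterate movements, this bounds the total second-order path length $\sum_{t}\sum_{i}\pnt{\tilde{\vec x}_i^{(t+1)} - \tilde{\vec x}_i^{(t)}}^2$ by an absolute constant. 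Feeding that bound back into each individual RVU inequality makes player $i$'s variation term $O(1)$, leaving $\reg_i^T \le O(\log T)$.

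For the adversarial guarantee, the same inequality gives $\reg^T \le O(\log T)/\eta + \eta\sum_t \dnt{\ut^{(t)} - \ut^{(t-1)}}^2 = O(\log T/\eta + \eta T)$ since utilities are bounded, so tuning $\eta = \Theta(1/\sqrt{T})$ yields $O(\sqrt T)$; to obtain both regimes simultaneously without knowing which one holds, I would replace the fixed $\eta$ by a data-dependent step size driven by the observed cumulative variation, which keeps the game analysis intact while degrading gracefully against an adversary. The hard part will be entirely in the first step: designing the lifted regularizer to be self-concordant yet non-barrier on the feasible set, and proving the local-norm stability term with the correct sign and constant. The multiplicative comparability of the Hessian across the optimistic prediction/update pair is the delicate point, and everything downstream—the $\log T$ comparator trick and the cross-player cancellation—hinges on securing that negative term.
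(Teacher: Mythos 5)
Your architecture matches the paper's at a high level (lifting, logarithmic self-concordant non-barrier regularizer, RVU bound in local norms with a negative stability term, multiplicative stability via Dikin-ellipsoid arguments, the $\Theta(1/T)$ comparator trick producing the $\log T$, summing over players, and feeding the path length back into each player's bound), but the pivotal step---extracting a bound on the aggregate second-order path length from the summed RVU inequalities---has a genuine gap. Summing the per-player RVU bounds gives an inequality of the form
\begin{equation*}
\sum_{i=1}^n \reg_i^T \;\le\; C \,+\, \frac{n(d+1)\log T}{\eta} \,-\, \frac{c}{\eta}\sum_{i=1}^n\sum_{t=1}^{T-1}\bigl(\text{movement of player } i \text{ at time } t\bigr)^2,
\end{equation*}
and to conclude anything about the path length you need a \emph{lower} bound on $\sum_i \reg_i^T$. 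In a general convex game the sum of external regrets can be arbitrarily negative, so no ``standard self-bounding argument'' applies. The paper resolves this with a property your proposal never secures: the lifted utility $\tilde{\vec\ut}^{(t)} = (-\langle \vec\ut^{(t)}, \vec x^{(t)}\rangle, \vec\ut^{(t)})$ is constructed to be \emph{orthogonal} to the played lifted iterate, which yields $\tildereg^T = \max\{0,\reg^T\} \ge 0$ (\cref{theorem:nonnegative}); nonnegativity of each $\tildereg_i^T$ is exactly what turns the summed RVU inequality into the path-length bound of \cref{theorem:traj}. Your lifting requirement is different---you ask that $\langle \tilu^{(t)}, \p{t}\rangle$ ``reproduce the realized utility'' and that lifted regret upper-bound true regret---and such a lifting does not give nonnegativity, so the cross-player cancellation step fails as described.

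Two smaller points. First, even with nonnegativity in hand, the path length is bounded by $O(\log T)$, not by ``an absolute constant'' as you claim: the summed inequality carries the $n(d+1)\log T/\eta$ regularizer term, and this is what \cref{theorem:traj} shows; your final conclusion survives this correction, since feeding an $O(\log T)$ path length back into each player's RVU bound still gives $O(\log T)$ regret for constant $\eta$. Second, your smoothness step is stated as a local-norm-to-local-norm inequality across players, but the game's smoothness (\cref{assumption:smooth}) is expressed in $\ell_\infty/\ell_1$ norms of the \emph{original} strategies; the paper inserts \cref{lemma:gamma} (via multiplicative stability) and \cref{cor:rvu} precisely to pass from local norms in the lifted space to $\ell_1$ norms on $\cX$ before invoking smoothness. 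Finally, your adversarial-regime mechanism (a variation-adaptive step size) differs from the paper's, which monitors the self-play path-length certificate and switches to a standard adversarial algorithm once it is violated; your variant is plausible but needs care so that adaptivity does not destroy the constant step size on which the self-play cancellation relies.
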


Importantly, our learning dynamics are efficiently implementable given access to a \emph{proximal oracle} for the set (\cref{eq:quad_approx}), requiring only $O(\log \log T)$ operations per iteration (\Cref{theorem:prox-Newton}); such an oracle is weaker than the---relatively standard in convex optimization---\emph{quadratic optimization oracle}. We also point out extensions under a weaker \emph{linear optimization oracle}, albeit with a worse per-iteration complexity (\Cref{theorem:FW}). Our no-regret learning dynamics imply the first efficiently implementable and near-optimal regret guarantees in general convex games, significantly extending the scope of prior $O(\polylog T)$-regret guarantees~\citep{Daskalakis21:Near,Farina22:Kernelized}; a comparison with prior approaches is included in~\cref{table:results}. We remark that \cref{theorem:main} establishes near-optimal regret both under \emph{self-play}, and in the adversarial regime---meaning that the other players act so as to minimize the player's utility; the latter feature of adversarial robustness has been a central desideratum in this line of work (\emph{e.g.}, see the discussion in~\citep{Kangarshahi18:Lets,Daskalakis11:Near}).

Our proposed learning dynamics lie within the general framework of \emph{optimistic} no-regret learning, pioneered by~\citet{Chiang12:Online} and \citet{Rakhlin13:Optimization}. We leverage the \OFTRL algorithm of~\citet{Syrgkanis15:Fast}, but with some important twists. First, as detailed in \cref{algo:ours}, the \OFTRL optimization step is performed over a ``lifted'' space. While prior work in online learning has employed similar in spirit approaches~\citep{Lee20:Bias,Luo22:Adaptive}, our lifting is quite different, ensuring that the regret incurred by \OFTRL is \emph{nonnegative} (\cref{theorem:nonnegative}). Further, we employ a \emph{logarithmic self-concordant regularizer}; interestingly, and perhaps surprisingly, this is not a \emph{barrier} for the underlying feasible set. This deviates substantially from the typical use of self-concordant regularization (especially within the bandit setting~\citep{Abernethy08:Competing,Wei18:More,Bubeck19:Improved}). A pictorial overview of our construction is given in the caption of \Cref{algo:ours}.

The use of the logarithmic regularizer serves two main purposes. First, we show that it guarantees \emph{multiplicative stability} of the strategies, a refined notion of stability that is also leveraged in the work of~\citet{Daskalakis21:Near}. Nonetheless, we are the first to leverage such properties in general domains, going well beyond the guarantees of (Optimistic) MWU on the simplex~\citep{Daskalakis21:Near}. Further, the \emph{local norm} induced by the logarithmic regularizer enables us to cast regret bounds from the lifted space to the original space, while preserving the \emph{RVU property}~\citep[Definition 3]{Syrgkanis15:Fast}. In turn, this implies near-optimal regret by establishing that the \emph{second-order path lengths} up to time $T$ are bounded by $O(\log T)$ (\cref{theorem:traj}), building on a recent technique of~\citet{Anagnostides22:Uncoupled} which crucially leverages the nonnegativity of swap regret.\footnote{To see why nonnegativity is crucial, note that the RVU bound implies optimal \textit{sum} of players' regrets~\citep{Syrgkanis15:Fast}. Thus, nonnegativity would imply the same bound for each player's regret.}

\begin{table}[t]
     \newcommand{\ldarrow}{\raisebox{-.7mm}{\tikz \draw[->] (0,0) -- (.25,0) -- +(0, -.2);}}%
    \scalebox{.876}{
    \begin{tabular}{m{4cm}m{3.8cm}m{3.2cm}m{5.9cm}}
        \bf Method                   & \bf Applies to                              & \bf Regret bound                              & \bf Cost per iteration         \\
        \toprule
        OFTRL / OMD\newline\citep{Syrgkanis15:Fast}               & General convex set                         & $O(\sqrt{n}\, \regdep T^{1/4})$ & Regularizer- \& oracle- dependent          \\
        \midrule
        OMWU\newline\citep{Daskalakis21:Near}\!             & Simplex $\Delta^d$                          & $O(n \log d \log^4 T)$                    & $O(d)$                         \\
        \midrule
        Clairvoyant MWU\newline\citep{Piliouras22:Optimal}             & Simplex $\Delta^d$  & $O(n \log d)$\newline Subsequence only~$\ddagger$ & $O(d)$                                                                                                  \\
        \midrule
        Kernelized OMWU\newline\citep{Farina22:Kernelized}        & Polytope $\Omega = \textrm{co} \mathcal{V}$\newline with $\mathcal{V}\subseteq \{0,1\}^d$ & $O(n \log |\mathcal{V}| \log^4 T)$       & $d \times \text{cost of kernel}$\\
        \midrule
        \rowcolor{gray!20}\algoshort\newline\textbf{[This paper]} & General convex set\newline $\cX\subseteq \bbR^d$                          & $O(n d \|\cX\|_1^2 \log T)$                  & Oracle-dependent:\newline
            $\bullet$~~$O(\log\log T)$ proximal oracle calls\newline
            $\bullet$~~$O(\poly T)$ linear opt. oracle calls\\
        \bottomrule
    \end{tabular}}
    \caption{Comparison of prior results on minimizing external regret in games. For simplicity, we have suppressed dependencies on the smoothness and the range of the utilities. 
    We use $n$ to denote the number of players; $T$ to denote the number of repetitions; $\regdep$ to indicate a parameter that depends on the regularizer;
    $\textrm{co} \mathcal{V}$ to denote the convex hull of $\mathcal{V}$; and $\|\cX\|_1$ to denote a bound on the maximum $\ell_1$ norm of any strategy. $\ddagger$ Unlike all other algorithms, the full sequence of iterates produced by Clairvoyant MWU (CMWU) is not known to achieve sublinear regret. Rather, after running CMWU for $T$ iterations, only a smaller subsequence of length $\Theta(T/\log T)$ iterates is known to attain the regret stated in the table. So, we remark that in order to achieve a comparable approximation of a coarse correlated equilibrium, CMWU needs to be run for $\Theta(T \log T)$ iterations.
    }
    \label{table:results}
\end{table}

\subsection{Further Related Work}

The rich line of work pursuing improved regret guarantees in games was pioneered by~\citet{Daskalakis11:Near}. Specifically, they developed \emph{strongly uncoupled} learning dynamics so that the players' regrets grow as $O(\log T)$, an exponential improvement over the guarantee one could hope for in adversarial environments~\citep{Shalev-Shwartz12:Online,Cesa-Bianchi06:Prediction}. Their result was significantly simplified by~\citet{Rakhlin13:Optimization}---again in zero-sum games---who introduced a simple variant of \emph{mirror descent} with a \emph{recency bias}---a.k.a. \emph{optimistic} mirror descent (OMD). It is worth noting that, beyond the benefits of optimism from an optimization standpoint~\citep{Polyak87:Introduction}, recency bias has been experimentally documented in natural learning environments in economics~\citep{Fudenberg14:Recency}.

Subsequently, \citet{Syrgkanis15:Fast} crystallized the \emph{RVU property}, an adversarial regret bound applicable for a broad class of optimistic no-regret learning algorithms. Using that property, they showed that the individual regret of each player grows as $O(T^{1/4})$ in general games, thereby converging to the set of \emph{coarse correlated equilibria} with a rate of $O(T^{-3/4})$. A near-optimal bound of $O(\polylog(T))$ in normal-form games was finally established by~\citet{Daskalakis21:Near}, while~\citet{Farina22:Kernelized} generalized that result in a class of polyhedral games that includes extensive-form games. Some extensions of the previous results have also been established for the stronger notion of \emph{no-swap-regret} learning dynamics in normal-form games~\citep{Chen20:Hedging,Anagnostides21:Near,Anagnostides22:Uncoupled}.
In particular, our work builds on a very recent technique of~\citet{Anagnostides22:Uncoupled}, which established $O(\log T)$ swap regret in normal-form games using as a regularizer a self-concordant \emph{barrier} function. 
On the other hand, establishing even sublinear $o(T)$ swap regret in extensive-form games is a notorious open question. Finally, an interesting new approach for obtaining near-optimal external regret in normal-form games was recently proposed in concurrent work by~\citet{Piliouras22:Optimal}.\footnote{An earlier version of the paper~\citep{Piliouras21:Optimal} proposed a preliminary and \emph{not uncoupled} version of the Clairvoyant MWU algorithm whose iterates were guaranteed to be no-regret and require $O(d\log T)$ per-iteration complexity. The 2022 revision of that paper provides an \emph{uncoupled} version with time-independent $O(d)$ per-iteration complexity, albeit at the cost of losing the no-regret guarantee on the entire sequence of iterates. See also footnote $\ddagger$ in \cref{table:results}.} 

Games with continuous strategy spaces have received a lot of attention in the literature; \emph{e.g.}, see \citep{Roughgarden15:Local,Even-Dar09:On,Harks11:Demand,Hsieh21:Adaptive,Mertikopoulos19:Learning,Stein11:Correlated,Stoltz07:Learning}, and references therein. Such games encompass a wide variety of applications in economics and multiagent systems; we give several examples in \Cref{section:prel}. Indeed, in many applications of interest a faithful approximation of the game requires an extremely large or even infinite action space; such settings could be abstracted as \emph{Littlestone games} in the sense of the recent work of~\citet{Daskalakis21:Fast}.

\section{No-Regret Learning and Convex Games}
\label{section:prel}

In this section we review the general setting of \emph{convex games}\footnote{Sometimes these are referred to as \emph{concave games}~\citep{Rosen65:Existence} or \emph{continuous games}~\citep{Hsieh21:Adaptive}.} which encompasses a number of important applications, as explained in \cref{section:applications}. We then formally define the framework of uncoupled and online no-regret learning in games in \cref{section:regret}.

\paragraph{Notation} We let $\N = \{1, 2, \dots, \}$ be the set of natural numbers. For a vector $\vec{x} \in \R^d$ we denote by $\vec{x}[r]$ its $r$-th coordinate, for some index $r \in \range{d} \defeq \{1, 2, \dots, d\}$. We will typically represent the players using subscripts; superscripts are reserved for the time index, denoted by the variable $t$.

\subsection{Convex Games}

Let $\range{n} \defeq \{1, 2, \dots, n\}$ be a set of players, with $n \in \N$. In a \emph{convex game}, every player $i \in \range{n}$ has a nonempty convex and compact set of strategies $\cX_i \subseteq \R^{d_i}$. For a \emph{joint strategy profile} $\vec{x} = (\vec{x}_1, \dots, \vec{x}_n) \in \bigtimes_{j=1}^n \cX_j$, the reward of player $i$ is given by a continuously differentiable utility function $u_i : \bigtimes_{j=1}^n \cX_j \to \R$ subject to the following standard assumption.
\begin{assumption}[Convex games]
      \label{assumption:smooth}
      The utility function $u_i(\vec x_1, \dots, \vec x_n)$ of any player $i \in \range{n}$ satisfies the following properties:
      \begin{enumerate}[nosep, leftmargin=5mm]
          \item (Concavity) $u_i(\vec{x}_i, \vec{x}_{-i})$ is \emph{concave} in $\vec{x}_i$ for $\vec{x}_{-i} = (\vec{x}_1, \dots, \vec{x}_{i-1}, \vec{x}_{i+1}, \dots, \vec{x}_n) \in \bigtimes_{j \neq i} \cX_j$;
          \item (Bounded gradients) for any $(\vec x_1, \dots, \vec x_n) \in \bigtimes_{j=1}^n \cX_j$, $\|\nabla_{\vec x_i} u_i(\vec x_1, \dots, \vec x_n) \|_\infty \le \gradbnd$, for some parameter $\gradbnd > 0$; and
          \item ($L$-smoothness) there exists $\lip > 0$ so that for any two joint strategy profiles $\vec{x}, \vec{x}' \in \bigtimes_{j=1}^n \cX_j$,
          \[
                \|\nabla_{\vec{x}_i} u_i(\vec{x}) - \nabla_{\vec{x}_i} u_i(\vec{x}') \|_\infty \leq L \sum_{j \in \range n} \| \vec{x}_j - \vec{x}_j' \|_1.
          \]
      \end{enumerate}
\end{assumption}

\subsection{Applications and Examples of Convex Games}
\label{section:applications}

Here we discuss several different classes of games which can all be analyzed under the common framework of convex games. For simplicity, we describe Cournot competion in the one-dimensional setting, but it can be readily generalized in more general domains. For more examples, we refer to~\citep{Even-Dar09:On,Hsieh21:Adaptive}, and references therein.

\begin{description}[itemsep=1mm,leftmargin=6mm]
\item[Normal-Form Games] In \emph{normal-form games (NFGs)} every player $i \in \range{n}$ has a finite and nonempty set of strategies $\cA_i$. Player $i$'s strategy set contains all probability distributions supported on $\cA_i$; that is, $\cX_i = \Delta(\cA_i)$. The utility of player $i$ can be expressed as the \emph{multilinear} function $u_i(\vec{x}) \defeq \mathbb{E}_{\vec{a} \sim \vec{x}} [\mathcal{U}_i(\vec{a})] 
$, for some arbitrary function $\mathcal{U}_i: \bigtimes_{j=1}^n \cA_j \to \R$.

\item[Extensive-Form Games] \emph{Extensive-form games (EFGs)} generalize NFGs by capturing both sequential and simultaneous moves, stochasticity from the environment, as well as \emph{imperfect information}. EFGs are abstracted on a directed tree. Once the game reaches a terminal (or leaf) node $z \in \cZ$, each player $i \in \range{n}$ receives a utility $\mathcal{U}_i(z)$, for some $\mathcal{U}_i: \cZ \to \R$. The strategy space of each player $i \in \range{n}$ can be compactly represented using the \emph{sequence-form polytope} $\cQ_i$~\citep{Romanovskii62:Reduction,Koller96:Efficient}. If $p_c(z)$ is the probability of reaching terminal node $z \in \cZ$ over ``chance moves'', the utility of player $i$ can be expressed as $u_i(\vec{q}) \defeq \sum_{z \in \cZ} p_c(z) \mathcal{U}_i(z) \prod_{j \in \range{n}} \vec{q}_j[\sigma_{j, z}]$, where $\vec{q} = (\vec{q}_1, \dots, \vec{q}_n) \in \bigtimes_{j=1}^n \cQ_j$ is the joint strategy profile, and $\vec{q}_{j}[\sigma_{j, z}]$ is the probability mass assigned to the last \emph{sequence} $\sigma_{j, z}$ encountered by player $j$ before reaching $z$. The smoothness and the concavity of the utilities follow directly from multilinearity; for a more detailed account on EFGs we refer the interested reader to the excellent book of~\citet{Shoham08:Multiagent}.

\item[Splittable Routing Games] In these games~\citep{Roughgarden15:Local} every player has to route a flow $f_i$ from a source to a destination in an undirected graph $G = (V,E)$. Every edge $e \in E$ is associated with a latency function $\ell_e(f_e)$ mapping the amount of flow passing through the edge to some latency. The set of strategies of player $i$ corresponds to the possible ways of ``splitting'' the flow $f_i$ into paths from the source to the destination. Under suitable restrictions on the latency functions, those games satisfy \cref{assumption:smooth} (see~\citep{Syrgkanis15:Fast}).

\item[Cournot Competition] This game is played among $n$ firms (players). Every firm $i$ decides the \emph{quantity} $s_i \in \cS_i \subseteq \bbR_{\geq 0}$ of a common good to produce, where $\cS_i$ is an interval. Further, a cost function $c_i : \cS_i \to \R$ assigns a \emph{production cost} to a given quantity, while $p : \bigtimes \cS_i \to \R_{\geq 0}$ is the price of the good determined by the the joint choice of quantity $\vec{s} = (s_1, \dots, s_n)$ across the firms. Then, the utility of firm $i$ is defined as $u_i(\vec{s}) \defeq s_i p(\vec{s}) - c_i(s_i)$. In \emph{linear Cournot competition}, $p(\vec{s}) \defeq a - b \left( \sum_{i=1}^n s_i \right)$, for some $a, b > 0$, while the cost functions $c_i$ are assumed to be smooth and convex~\citep{Even-Dar09:On}.

\end{description}

\subsection{Online Linear Optimization and No-Regret Learning}
\label{section:regret}

In the \emph{online learning framework} a learning agent has to select a strategy $\vec{x}\^t \in \cX \subseteq \R^d$ at every time $t \in \N$. Then, in the \emph{full information} model, the learner receives as feedback from the environment a \emph{linear} utility function $\vec{x} \mapsto \langle \vec{x}, \vec{u}\^t \rangle$, for some vector $\vec{u}\^t \in \R^d$. The canonical measure of performance is the notion of \emph{regret}, defined for a time horizon $T \in \N$ as follows.
\begin{equation}
      \label{eq:linreg}
      \reg^T \defeq \max_{\vec{x}^* \in \cX} \left\{ \sum_{t=1}^T \langle \vec{x}^*, \vec{u}\^t \rangle \right\} - \sum_{t=1}^T \langle \vec{x}\^t, \vec{u}\^t \rangle.
\end{equation}
That is, the performance of the agent is compared to the optimal \emph{fixed} strategy in hindsight. It is important to note that \emph{regret can be negative}. In the context of convex games, it is assumed that every player $i \in \range{n}$ receives at time $t$ the ``linearized'' utility function $\vec{x}_i \mapsto \langle \vec{x}_i, \vec{u}_i\^t \rangle$, where $\vec{u}_i^{(t)} \defeq \nabla_{\vec{x}_i} u_i(\vec{x}\^t)$. By concavity (\Cref{assumption:smooth}),
\begin{equation*}
      \max_{\vec{x}_i^* \in \cX_i} \sum_{t=1}^T \left( u_i(\vec{x}_i^*, \vec{x}\^t_{-i}) - u_i(\vec{x}\^t) \right) \leq \max_{\vec{x}_i^* \in \cX_i} \sum_{t=1}^T \langle \vec{x}_i^{*} - \vec{x}_i\^t, \nabla_{\vec{x}_i} u_i(\vec{x}\^t) \rangle.
\end{equation*}
As a result, a regret bound on the linearized regret---in the sense of \eqref{eq:linreg}---automatically translates to a regret bound in the convex game. 

\paragraph{Strongly Uncoupled Learning Dynamics} In this setting, all learning dynamics are \emph{uncoupled} in the sense of ~\citet{Hart03:Uncoupled}: every player is oblivious to the other players' utilities. In fact, players need not have any prior knowledge about the game, even about their own utilities; this captures the condition of \emph{strong uncoupledness} of~\citet{Daskalakis11:Near}, along with a suitable bound on the memory of each player.


\section{Near-Optimal No-Regret Learning in Convex Games}
\label{section:main}

In this section we describe our algorithm, \emph{Log-Regularized Lifted Optimistic FTRL} (henceforth $\algoshort$). The central result of this section, \cref{theorem:main-detailed}, asserts that when all players learn using \algoshort, their regret only grows logarithmically with respect to the number of repetitions of the game. Detailed proofs for this section are available in \cref{app:proofs}. 

\subsection{Setup}\label{sec:setup}
In the sequel, we will define and analyze the regret cumulated by \algoshort from the perspective of a generic player, omitting player subscripts.

We denote the set of strategies of the player by $\cX \subseteq \bbR^d$. 
Without loss of generality, we will assume that $\cX \subseteq [0,+\infty)^d$; otherwise, it suffices to first shift the set. Furthermore, we assume without loss of generality that there is no index $\ind \in \range d$ such that $\vec x[\ind] = 0$ for all $\vec x \in \cX$---if not, dropping the identically-zero dimension would not alter regret.
We define the \emph{lifting of set $\cX$} as the following set: 
\begin{equation}
    \label{eq:lifting}
    \bbR^{d+1}\supseteq \tilde\cX \defeq \mleft\{ (\lambda, \vec y) : \lambda \in [0,1], \vec y \in \lambda\cX \mright\}.
\end{equation}
Further, we define the $\ell_1$-norm $\| \cX \|_1$ of $\cX$ as the maximum $\ell_1$-norm of any vector $\vec x \in \cX$, that is, $\|\cX\|_1 \defeq \max_{\vec{x} \in \cX} \| \vec{x} \|_1$; for example, $\|\Delta^d\|_1 = 1$.

The \emph{logarithmic regularizer} for $\bbR^{d+1}$ is the function
\[\cR(\lambda, \vec{y}) \defeq - \log \lambda - \sum_{\ind=1}^d \log \vec{y}[\ind], \qquad \forall (\lambda,\vec y) \in \bbR^{d+1}_{> 0}.
\]
Given any vector $(\lambda, \vec y) \in \tilde\cX \cap \bbR_{>0}^{d+1}$, we denote with $\|\cdot\|_{(\lambda, \vec y)}$ and $\|\cdot\|_{*,(\lambda, \vec y)}$ the \emph{local norms centered at $(\lambda,\vec y)$} induced by $\cR(\lambda,\vec{y})$, defined as
\[
    \mleft\|\vstack{a}{\vec z}\mright\|_{(\lambda, \vec y)} \defeq \sqrt{\mleft(\frac{a}{\lambda}\mright)^2 + \sum_{\ind=1}^d \mleft(\frac{\vec z[\ind]}{\vec y[\ind]}\mright)^2}, \qquad
    \mleft\|\vstack{a}{\vec z}\mright\|_{*,(\lambda, \vec y)} \defeq \sqrt{(a \lambda)^2 + \sum_{\ind=1}^d \mleft(\vec z[\ind]\vec y[\ind]\mright)^2} 
\]
for any $(a,\vec z) \in \bbR^{d+1}$. These are the norms induced by the Hessian matrix of $\cR$ at $(\lambda, \vec y)$ and its inverse. It is a well-known fact that $\|\cdot\|_{*,(\lambda, \vec y)}$ is the dual norm of $\|\cdot\|_{(\lambda, \vec y)}$, and \emph{vice versa}.

\subsection{Overview of Our Algorithm} Our algorithm (Algorithm~\ref{algo:ours}) leverages \emph{optimistic follow the regularized leader} (\OFTRL), a simple variant of \texttt{FTRL} introduced by~\citet{Syrgkanis15:Fast}, but with some important twists. First, the optimization is performed over the lifting $\tilde{\cX}$ of the set $\cX$. More precisely, at every iteration the observed utility $\vec{u}\^t \in \R^d$ will be transformed to $\tilu\^t \in \R^{d+1}$ according to \cref{line:lift}; this ensures that $\tilu\^t$ is orthogonal to the vector $(1, \vec{x}\^t)$. Then, this utility vector $\tilu\^t$ is given as input to a regret minimizer operating over $\tilde{\cX}$, employing \OFTRL under the logarithmic regularizer $\cR(\lambda, \vec{y})$; this step is described in \cref{line:oftrl}. We discuss how such an optimization problem can be solved efficiently in \cref{sec:implementation}. Below we point out that \Cref{line:oftrl} is indeed well-defined. 

\begin{restatable}{proposition}{propuniqoftrl}
    \label{proposition:unique}
    For any $\eta \ge 0$ and at all times $t \in \N$, the \OFTRL optimization problem on \cref{line:oftrl} of \cref{algo:ours} admits a unique optimal solution $(\lambda^{(t)}, \vec{y}^{(t)}) \in \tilde \cX \cap \bbR^{d+1}_{> 0}$.
\end{restatable}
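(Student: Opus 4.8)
The plan is to read the \OFTRL step on \cref{line:oftrl} as the maximization of a single strictly concave function over a bounded convex set, and then to argue in turn (i) that a maximizer exists, (ii) that it is unique, and (iii) that every one of its coordinates is strictly positive. Writing the objective as
\[
    F(\lambda, \vec y) \defeq \eta\, \ir{\vg\^t, (\lambda, \vec y)} - \cR(\lambda, \vec y),
\]
where $\vg\^t \in \bbR^{d+1}$ is the aggregate (optimistic) utility fed to the regret minimizer at time $t$, \cref{line:oftrl} computes $(\lambda\^t, \vec y\^t) = \argmax_{(\lambda,\vec y) \in \tilde\cX} F(\lambda, \vec y)$. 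Since $\cR$ is a finite sum of terms $-\log(\cdot)$, it is strictly convex on $\bbR^{d+1}_{>0}$, so $F$ is strictly concave there for every $\eta \ge 0$; hence a maximizer, if one exists, is automatically unique. Moreover $\cR(\lambda, \vec y) = +\infty$ (hence $F = -\infty$) whenever $\lambda = 0$ or $\vec y[\ind] = 0$ for some $\ind \in \range d$, so the effective domain on which $F$ is finite is exactly $S \defeq \tilde\cX \cap \bbR^{d+1}_{>0}$, a convex and bounded set.

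Next I would check that $S \ne \emptyset$, which is where the standing assumptions on $\cX$ enter. For each coordinate $\ind \in \range d$, the hypothesis that no coordinate is identically zero furnishes a point $\vec x_\ind \in \cX$ with $\vec x_\ind[\ind] > 0$; since $\cX \subseteq [0,+\infty)^d$ is convex, the average $\bar{\vec x} \defeq \tfrac1d \sum_{\ind=1}^d \vec x_\ind \in \cX$ has $\bar{\vec x}[\ind] > 0$ for every $\ind$. Then $(\lambda, \lambda \bar{\vec x}) \in S$ for any $\lambda \in (0,1]$, giving a feasible point at which $F$ is finite; call its value $F_0$.

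For existence I would exhibit a compact superlevel set. The set $\tilde\cX$ is compact: it is bounded because $\lambda \in [0,1]$ and $\|\vec y\|_1 = \lambda\|\vec y/\lambda\|_1 \le \|\cX\|_1$, and it is closed by a routine limit argument (a limit with $\lambda_* > 0$ keeps $\vec y_*/\lambda_* \in \cX$ since $\cX$ is closed, while a limit with $\lambda_* = 0$ forces $\vec y_* = \vec 0$, and $(0, \vec 0) \in \tilde\cX$). On $S$ the linear part of $F$ is bounded, and $-\cR$ is bounded above because the coordinates of $(\lambda,\vec y)$ are bounded above there, so $\sup_S F$ is finite. Because $F \to -\infty$ along any sequence in $S$ approaching a point of $\tilde\cX$ at which some coordinate vanishes (the log-barrier behavior on the coordinate faces), the superlevel set $\{(\lambda, \vec y) \in S : F(\lambda, \vec y) \ge F_0\}$ stays at positive distance from those faces; together with boundedness and closedness of $\tilde\cX$, its closure is a compact subset of $S$. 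Weierstrass then produces a maximizer of the continuous function $F$ on this compact set, which is global over $S$ (outside the superlevel set $F < F_0$), and strict concavity makes it the unique maximizer. Its membership in $S = \tilde\cX \cap \bbR^{d+1}_{>0}$ is precisely the asserted positivity $\lambda\^t > 0$ and $\vec y\^t[\ind] > 0$.

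The main subtlety—and the point I would be most careful about—is that $\cR$ is deliberately \emph{not} a barrier for all of $\partial\tilde\cX$: it blows up only on the faces where some coordinate of $(\lambda, \vec y)$ vanishes, and is perfectly finite on the ``top'' face $\{\lambda = 1\}$ and on the lateral faces coming from $\partial\cX$ with strictly positive coordinates. Consequently the maximizer is permitted to sit on $\partial\tilde\cX$, and the proposition claims only coordinatewise positivity, not interiority of $\tilde\cX$. The argument must therefore invoke the barrier property on exactly the coordinate-vanishing faces (and the nonemptiness of $S$, which rests on convexity together with the no-identically-zero-coordinate assumption) to confine the optimum to $\bbR^{d+1}_{>0}$, without asserting the stronger and false statement that $\cR$ confines it to the relative interior of $\tilde\cX$.
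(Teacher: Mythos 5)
Your proof is correct and follows essentially the same route as the paper's: uniqueness from strict concavity of the log-regularized objective, nonemptiness of $\tilde\cX \cap \bbR^{d+1}_{>0}$ via the same coordinate-averaging construction, and existence by Weierstrass after confining the optimum away from the coordinate-vanishing faces. The only difference is cosmetic: the paper makes the confinement quantitative by exhibiting an explicit threshold $m > 0$ below which no coordinate of an optimizer can lie, whereas you argue it qualitatively through compactness of the closure of a superlevel set; both hinge on the same observation that the logarithmic terms blow up only on the coordinate faces, not on all of $\partial\tilde\cX$.
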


Finally, given the iterate $(\lambda\^t, \vec{y}\^t)$ output by the \OFTRL step at time $t$, our regret minimizer over $\cX$ selects the next strategy $\vec{x}\^t \defeq {\vec{y}\^t}/{\lambda\^t}$ (\cref{line:norm}); this is indeed a valid strategy in $\cX$ by definition of $\tilde{\cX}$ in \eqref{eq:lifting}, as well as the fact that $\lambda\^t > 0$ as asserted in \cref{proposition:unique}.

\begin{algorithm}[htp]
    \caption{Log-Regularized Lifted Optimistic FTRL (\algoshort)\newline
        \makebox[\columnwidth][c]{\protect\scalebox{.93}{
    \begin{tikzpicture}[yscale=1.1,xscale=1.3]
        \fill[black!10] (0.6,0.3) rectangle (9.7,1.7) node[fitting node] (outer) {};
        \draw[fill=white,semithick] (1,0.5) rectangle (2.7,1.5) node[fitting node] (lifting) {};
        \node[text width=1cm,align=center] at (lifting.center) {Lifting};

        \draw[fill=white,semithick] (3.5,0.5) rectangle (5.5,1.5) node[fitting node] (oftrl) {};
        \node[text width=2.5cm,align=center] at (oftrl.center) {\OFTRL with log regularizer};

        \draw[fill=white,semithick] (7,0.5) rectangle (9,1.5) node[fitting node] (norm) {};
        \node[text width=2cm,align=center] at (norm.center) {Normalization};

        \draw[semithick,->] (-.0,1) node[above,xshift=3mm,yshift=1mm,inner ysep=0,fill=white]{$\ut\^t$} -- (lifting);
        \draw[semithick,->] (lifting) -- (oftrl) node[above,pos=.6] {$\tilde{\ut}\^t$};
        \draw[semithick,->] (oftrl) -- (norm) node[above,pos=.5] {\small$(\lambda\^t, \vec y\^t)$};
        \path (norm) -- +(2.5,0) node[above=.1mm,inner ysep=1.2mm,pos=.5] {$\vec x\^t \defeq \frac{\vec y\^t}{\lambda\^t}$};
        \draw[semithick,->] (norm) -- +(2.5,0);

    \end{tikzpicture}
}}\vspace{2mm}
    }\label{algo:ours}
    \DontPrintSemicolon
    \KwData{Learning rate $\eta$}\vspace{2mm}
    Set $\tilde{\Ut}\^1, \vec{u}^{(0)} \gets \vec{0} \in \bbR^{d+1}$\;
    \For{$t=1,2,\dots, T$}{
    Set  $\displaystyle \vstack{\lambda\^t}{\vec y\^t} \gets \argmax_{(\lambda, \vec y)\in\tilde\cX}\mleft\{ {\eta} \mleft\langle \tilde{\Ut}\^t + \tilde{\vec\ut}\^{t-1}, \vstack{\lambda}{\vec y} \mright\rangle + \log \lambda + \sum_{\ind=1}^d \log \vec{y}[\ind]\mright\}$\Comment*{\color{commentcolor}\texttt{OFTRL}]\!\!\!\!\!}\medskip \label{line:oftrl}
    Play strategy $\displaystyle\vec x\^t \defeq \frac{\vec y\^t}{\lambda\^t} \in \cX$\Comment*{\color{commentcolor}Normalization]\!\!\!\!\!} \label{line:norm}
    Observe $\vec\ut\^t \in \bbR^d$\;
    Set $\displaystyle\tilde{\vec\ut}\^t \gets \vstack{-\langle \vec\ut\^t, \vec x\^t\rangle}{\vec\ut\^t}$\Comment*{\color{commentcolor}Lifting]\!\!\!\!\!} \label{line:lift}
    Set $\tilde{\Ut}\^{t+1} \gets \tilde{\Ut}\^t + \tilde{\vec\ut}\^t$
    }
\end{algorithm}

\subsection{Regret Analysis}\label{sec:regret}

In this section, we study the regret of $\algoshort$ under the idealized assumption that the optimization problem on \cref{line:oftrl} (\OFTRL step) is solved exactly at each time $t$. In \cref{sec:implementation} we will relax that assumption, and study the regret of \algoshort when the solution to \cref{line:oftrl} is approximated using variants of Newton's method.

To study the regret $\reg^T$ of \algoshort, defined in \eqref{eq:linreg}, it is useful to introduce the quantity $\tildereg^T$, which measures the regret incurred \emph{by the internal \OFTRL algorithm} (\cref{line:oftrl}) up to a time $T \in \N$ in the lifted space $\tilde \cX$, \emph{i.e.},
\[
    \tildereg^T \defeq \max_{(\lambda^*, \vec{y}^*)\in\tilde\cX} \sum_{t=1}^T \mleft\langle \tilde{\vec\ut}\^t, \vstack{\lambda^*}{\vec{y}^*} - \vstack{\lambda\^t}{\vec y\^t}\mright\rangle.
\]
As the following theorem clarifies, there is a strong connection between $\tildereg^T$ and $\reg^T$.
\begin{restatable}{theorem}{nonnegative}
    \label{theorem:nonnegative}
    For any time $T \in \N$ it holds that $\tildereg^T = \max\{0, \reg^T\}$. In particular, it follows that $\tildereg^T \ge 0$ and $\reg^T \le \tildereg^T$ for any $T \in \N$.
\end{restatable}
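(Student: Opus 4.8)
The plan is to rewrite $\tildereg^T$ entirely in terms of quantities living in the original space $\cX$, exploiting the specific structure of both the lifting \eqref{eq:lifting} and the lifted utility vector constructed on \cref{line:lift}. The crucial observation, already flagged in the algorithm overview, is that $\tilde{\vec u}^{(t)}$ is orthogonal to $(1, \vec x^{(t)})$: by construction $\langle \tilde{\vec u}^{(t)}, (1, \vec x^{(t)})\rangle = -\langle \vec u^{(t)}, \vec x^{(t)}\rangle + \langle \vec u^{(t)}, \vec x^{(t)}\rangle = 0$. Since the iterate satisfies $(\lambda^{(t)}, \vec y^{(t)}) = \lambda^{(t)}\,(1, \vec x^{(t)})$ by \cref{line:norm} (with $\lambda^{(t)} > 0$ guaranteed by \cref{proposition:unique}), this immediately yields $\langle \tilde{\vec u}^{(t)}, (\lambda^{(t)}, \vec y^{(t)})\rangle = 0$ for every $t$. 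Hence the ``played'' term in $\tildereg^T$ vanishes and
\[
\tildereg^T = \max_{(\lambda^*, \vec y^*)\in\tilde\cX} \sum_{t=1}^T \mleft\langle \tilde{\vec u}^{(t)}, \vstack{\lambda^*}{\vec y^*}\mright\rangle.
\]

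Next I would parametrize the comparator. By definition of $\tilde\cX$, every feasible $(\lambda^*, \vec y^*)$ has the form $\lambda^* \in [0,1]$ and $\vec y^* = \lambda^* \vec x^*$ for some $\vec x^* \in \cX$ (with $\vec y^* = \vec 0$ when $\lambda^* = 0$). Substituting and using $\tilde{\vec u}^{(t)} = (-\langle \vec u^{(t)}, \vec x^{(t)}\rangle, \vec u^{(t)})$ collapses each summand to
\[
\mleft\langle \tilde{\vec u}^{(t)}, \vstack{\lambda^*}{\lambda^* \vec x^*}\mright\rangle = \lambda^* \mleft( \langle \vec u^{(t)}, \vec x^*\rangle - \langle \vec u^{(t)}, \vec x^{(t)}\rangle \mright),
\]
so that the entire sum factors as $\lambda^*\, R(\vec x^*)$, where $R(\vec x^*) \defeq \sum_{t=1}^T \langle \vec u^{(t)}, \vec x^* - \vec x^{(t)}\rangle$ is exactly the single-comparator regret against $\vec x^*$ in the original space, and $\reg^T = \max_{\vec x^* \in \cX} R(\vec x^*)$ by \eqref{eq:linreg}.

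Finally, the maximization separates over $\lambda^*$ and $\vec x^*$. For fixed $\vec x^*$, optimizing the scalar factor gives $\max_{\lambda^* \in [0,1]} \lambda^* R(\vec x^*) = \max\{0, R(\vec x^*)\}$, since the linear map $\lambda^* \mapsto \lambda^* R(\vec x^*)$ is maximized at $\lambda^* = 1$ when $R(\vec x^*) \ge 0$ and at $\lambda^* = 0$ otherwise. Taking the outer maximum over $\vec x^* \in \cX$ and using monotonicity of $t \mapsto \max\{0, t\}$ to pull the max inside then gives $\tildereg^T = \max_{\vec x^* \in \cX} \max\{0, R(\vec x^*)\} = \max\{0, \reg^T\}$, which is the claim; the two stated consequences $\tildereg^T \ge 0$ and $\reg^T \le \tildereg^T$ are immediate. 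I do not expect a genuine obstacle: the crux is simply recognizing the orthogonality identity, after which the argument is bookkeeping. The only points deserving care are the edge case $\lambda^* = 0$ (where $\vec x^*$ drops out but is harmlessly absorbed into the $\max\{0,\cdot\}$) and the well-definedness of $\vec x^{(t)} = \vec y^{(t)}/\lambda^{(t)}$, which rests on $\lambda^{(t)} > 0$ from \cref{proposition:unique}.
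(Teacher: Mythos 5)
Your proposal is correct and follows essentially the same route as the paper's proof: both hinge on the orthogonality identity $\langle \tilde{\vec{u}}^{(t)}, (\lambda^{(t)}, \vec{y}^{(t)})\rangle = 0$ and on parametrizing the comparator in $\tilde\cX$ as $\lambda^*(1, \vec{x}^*)$ with $\lambda^* \in [0,1]$, so that maximizing the scalar $\lambda^*$ yields the $\max\{0,\cdot\}$ operation. The only cosmetic difference is direction—the paper transforms $\max\{0, \reg^T\}$ into $\tildereg^T$ while you run the same chain of equalities from $\tildereg^T$ back to $\max\{0, \reg^T\}$.
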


The nonnegativity of $\tildereg^T$ will be a crucial property in establishing \cref{theorem:traj}. Further, \cref{theorem:nonnegative} implies that a guarantee over the lifted space can be automatically translated to a regret bound over the original space $\cX$. Now let
\begin{equation}
    \label{eq:local_norm}
        \pnt{\ \cdot\ } \defeq \|\cdot\|_{(\lambda\^t, \vec y\^t)}\quad\qquad\text{and}\quad\qquad
        \dnt{\ \cdot\ }  \defeq \|\cdot\|_{*, (\lambda\^t, \vec y\^t)}
\end{equation}
be the local norms centered at point $(\lambda\^t, \vec y\^t)$ produced by \OFTRL at time $t$ (\cref{line:oftrl}). In the next proposition we establish a refined RVU (Regret bounded by Variation in Utilities) bound in terms of this primal-dual norm pair. 
\begin{restatable}[RVU bound of \OFTRL in local norms]{proposition}{proplocalrvu}
    \label{thm:RVU}
    Let $\tildereg^T$ be the regret cumulated up to time $T$ by the internal \OFTRL algorithm. If $\|\vec{u}\^t \|_\infty\|\vec{x} \|_1 \leq 1$ at all times $t \in \range T$, then for any time horizon $T \in \N$ and learning rate $\eta \leq \etamin$,
    \[\tildereg^T \le 4 + \frac{(d+1)\log T}{\eta} + 5\eta\sum_{t=1}^T \dnt{ \tilde{\vec\ut}\^t - \tilde{\vec\ut}\^{t-1} }^2 - \frac{1}{27\eta}\sum_{t=1}^{T-1} \pnt{\vstack{\lambda\^{t+1}}{\vec y\^{t+1}} - \vstack{\lambda\^t}{\vec y\^t}}^2.
    \]
\end{restatable}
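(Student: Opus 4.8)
The plan is to establish the claimed RVU bound by specializing the standard OFTRL analysis to the logarithmic regularizer $\cR$ and expressing every term in the local norms $\pnt{\cdot}, \dnt{\cdot}$ centered at the current iterate. The starting point is the generic OFTRL regret guarantee: for a regularizer used with weight $1/\eta$, the regret against any comparator decomposes into (i) a \emph{range} term bounding the regularizer difference, here $\frac{1}{\eta}(\cR(\lambda^*,\vec y^*) - \cR(\lambda^{(1)},\vec y^{(1)}))$ or more precisely $\frac{1}{\eta}\max_{\tilde\cX}\cR - \min_{\tilde\cX}\cR$ restricted appropriately, (ii) a \emph{prediction-error} term of the form $\eta\sum_t \dnt{\tilu^{(t)} - \tilu^{(t-1)}}^2$ measuring how well the optimistic prediction $\tilu^{(t-1)}$ anticipates $\tilu^{(t)}$, and (iii) a \emph{negative stability} term $-\frac{1}{\eta}\sum_t \pnt{\text{consecutive iterate gap}}^2$ arising from the strong convexity of $\cR$ in its local norm. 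The three terms in the statement match this template exactly, so the work is in pinning down the constants.

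First I would bound the range term. Because $\cR(\lambda,\vec y) = -\log\lambda - \sum_r \log \vec y[r]$ and the lifting $\tilde\cX$ lives in $[0,1]\times\lambda\cX$, the relevant coordinates are bounded away from $0$ and above by polynomial quantities; the comparator can be taken with $\lambda^* $ bounded below by $1/T$ (a small perturbation toward the analytic center costs only $O(1/T)$ in utility, hence $O(1)$ in regret after scaling by $\eta T$), which yields the $(d+1)\log T$ numerator and the additive constant $4$ after absorbing the perturbation slack. This ``$\lambda^*\ge 1/T$'' trick is the standard device for log-type regularizers and is what converts a potentially unbounded $\cR$ into the clean $\frac{(d+1)\log T}{\eta}$ bound.

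The heart of the argument is the local-norm stability estimate, and this is the step I expect to be the main obstacle. The generic OFTRL proof controls the regret by the quantity $\langle \tilu^{(t)}-\tilu^{(t-1)}, (\lambda^{(t)},\vec y^{(t)}) - (\hat\lambda,\hat{\vec y})\rangle$ where $(\hat\lambda,\hat{\vec y})$ is an auxiliary ``look-ahead'' point, and then applies Cauchy--Schwarz in the dual/primal local norm pair together with the self-concordance of $\cR$ to relate norms at nearby points. The delicate points are: (a) the local norms are centered at the \emph{current} iterate $(\lambda^{(t)},\vec y^{(t)})$ rather than at a fixed point, so I must invoke the self-concordance of $\cR$ (the logarithmic regularizer is $1$-self-concordant) to argue that the Hessian, and hence the local norm, changes only by a constant factor between consecutive iterates $(\lambda^{(t)},\vec y^{(t)})$ and $(\lambda^{(t+1)},\vec y^{(t+1)})$ --- this is precisely where the constraint $\eta \le \etamin = 1/50$ is used, to guarantee the iterates stay within a small self-concordant neighborhood so that norm-equivalence holds with explicit constants; and (b) carefully tracking these constants is what produces the $5$ in front of the prediction term and the $\frac{1}{27}$ in front of the negative term, rather than nicer numbers. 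I would therefore prove a helper bound of the form $\frac{1}{2}\pnt{\Delta^{(t)}}^2 \le \|\cR\|$-curvature slack and combine it with the multiplicative stability guaranteed by the logarithmic regularizer (the same stability the paper highlights in its overview).

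Concretely, the ordering I would follow is: invoke the first-order optimality conditions for the two consecutive OFTRL problems at $t$ and $t+1$ to derive a variational inequality; subtract them to isolate $\langle \tilu^{(t)}-\tilu^{(t-1)}, \text{iterate difference}\rangle$; apply the self-concordance-based norm equivalence to pass all local norms to the common center $(\lambda^{(t)},\vec y^{(t)})$; use Cauchy--Schwarz with the Fenchel--Young (AM--GM) inequality $\langle a,b\rangle \le \frac{\eta}{2}\dnt{a}^2 + \frac{1}{2\eta}\pnt{b}^2$ to split into a positive prediction term and a negative stability term; and finally sum over $t$, folding the telescoped regularizer difference into the range bound. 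The constants $5$ and $\frac{1}{27}$ emerge from choosing the split parameters so that the neighborhood condition from $\eta\le 1/50$ is respected with room to spare. The subtlety I would flag is ensuring the auxiliary look-ahead point and the negative term are aligned so that the latter genuinely controls $\pnt{(\lambda^{(t+1)},\vec y^{(t+1)}) - (\lambda^{(t)},\vec y^{(t)})}^2$ (the quantity appearing in the statement) and not merely the gap to the look-ahead point; reconciling these via the triangle inequality in the local norm, while keeping the constant below $\frac{1}{27}$, is the technically finicky part.
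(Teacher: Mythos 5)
Your proposal is correct and follows essentially the same route as the paper's proof: the same three-term OFTRL decomposition, the same trick of mixing the comparator with the regularizer minimizer (weight $1/T$) to obtain the $\frac{(d+1)\log T}{\eta}$ range term plus the additive constant, and the same multiplicative-stability/Hessian-equivalence argument under $\eta \le \frac{1}{50}$ to control the cross term in local norms and to express the negative Bregman terms as $-\frac{1}{27\eta}\sum_{t}\pnt{\p{t+1}-\p{t}}^2$ over consecutive actual iterates. The only differences are cosmetic: the paper bounds the cross term via H\"older together with the stability estimate $\pnt{\pp{t+1}-\p{t}}\le 5\eta\,\dnt{\ls{t}-\ls{t-1}}$ rather than your AM--GM split, and it proves the Hessian equivalence directly (its multiplicative-stability lemma) rather than invoking self-concordance abstractly.
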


(We recall that $\tilu^{(0)} \defeq \vec{0}$.) \Cref{thm:RVU} differs from prior analogous results in that the regularizer is not a \emph{barrier} over the feasible set. Next, we show that the iterates produced by \OFTRL satisfy a refined notion of stability, which we refer to as \emph{multiplicative stability}.
\begin{restatable}[Multiplicative Stability]{proposition}{propftrlstab}
    \label{cor:ftrl stability}
    For any time $t \in \N$ and learning rate $\eta \leq \etamin$, if $\|\vec{u}\^t \|_\infty\|\vec{x} \|_1 \leq 1$, 
    \[
        \pnt{\vstack{\lambda\^{t+1}}{\vec y\^{t+1}} - \vstack{\lambda\^t}{\vec y\^t}} \le 22\eta.
    \]
\end{restatable}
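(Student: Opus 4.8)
The plan is to read the two consecutive iterates as solutions of two nearby regularized problems and to quantify how far a bounded perturbation of the linear term moves the minimizer, measured in the local norm. Write $\vec z\^t \defeq \vstack{\lambda\^t}{\vec y\^t}$. Since $\log\lambda + \sum_{\ind}\log\vec y[\ind] = -\cR(\lambda,\vec y)$, the \OFTRL step (\cref{line:oftrl}) makes $\vec z\^t$ and $\vec z\^{t+1}$ the (interior, by \cref{proposition:unique}) minimizers over $\tilde\cX$ of $\cR(\cdot) - \eta\ir{\vg\^t,\cdot}$ and $\cR(\cdot) - \eta\ir{\vg\^{t+1},\cdot}$, where $\vg\^t \defeq \tilde{\Ut}\^t + \tilde{\vec\ut}\^{t-1}$ is the linear coefficient handed to \OFTRL. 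Because $\tilde{\Ut}\^{t+1} = \tilde{\Ut}\^t + \tilde{\vec\ut}\^t$, the two objectives differ only through $\vg\^{t+1} - \vg\^t = 2\tilde{\vec\ut}\^t - \tilde{\vec\ut}\^{t-1}$, so stability will follow once I control how much the minimizer reacts to this perturbation. Throughout I fix $t$ and set $\vec\delta \defeq \vec z\^{t+1} - \vec z\^t$ and $r \defeq \pnt{\vec\delta}$.

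First I would write the first-order optimality condition of the time-$t$ problem, $\ir{\nabla\cR(\vec z\^t) - \eta\vg\^t,\ \vec z - \vec z\^t} \ge 0$ for all $\vec z\in\tilde\cX$, test it at $\vec z = \vec z\^{t+1}$, do the same for the time-$(t+1)$ problem tested at $\vec z = \vec z\^t$, and add the two inequalities. The linear parts recombine into the standard monotonicity estimate
\begin{equation*}
    \ir{\nabla\cR(\vec z\^{t+1}) - \nabla\cR(\vec z\^t),\ \vec\delta} \le \eta\,\ir{\vg\^{t+1} - \vg\^t,\ \vec\delta}.
\end{equation*}

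The core of the argument is to bound the two sides in the primal–dual pair $\pnt\cdot,\dnt\cdot$. For the right-hand side, generalized Cauchy–Schwarz gives $\eta\,\ir{\vg\^{t+1}-\vg\^t,\vec\delta} \le \eta\,\dnt{\vg\^{t+1}-\vg\^t}\,r$, and the dual local norm is controlled from the explicit lifting: since $\tilde{\vec\ut}\^s = \vstack{-\ir{\vec\ut\^s,\vec x\^s}}{\vec\ut\^s}$ and $\vec y\^t = \lambda\^t\vec x\^t$, a direct computation using $\|\vec\ut\^s\|_\infty\|\cX\|_1 \le 1$ and $\lambda\^t \le 1$ yields $\dnt{\tilde{\vec\ut}\^s} \le \sqrt2$, hence $\dnt{\vg\^{t+1}-\vg\^t} \le 3\sqrt2$ by the triangle inequality. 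For the left-hand side I would avoid any black-box barrier estimate and instead exploit the explicit logarithmic form: writing $w_1,\dots,w_{d+1}$ for the coordinates of $\vec z\^t$ and $\rho_k \defeq \vec\delta[k]/w_k$, one has coordinatewise $\bigl(\nabla\cR(\vec z\^{t+1}) - \nabla\cR(\vec z\^t)\bigr)[k]\,\vec\delta[k] = \rho_k^2/(1+\rho_k)$, so the left-hand side equals $\sum_k \rho_k^2/(1+\rho_k)$. As $\sum_k\rho_k^2 = \pnt{\vec\delta}^2 = r^2$ forces $\rho_k \le r$ for every $k$, and $1+\rho_k > 0$ by interiority of $\vec z\^{t+1}$, the left-hand side is at least $r^2/(1+r)$.

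Combining the two bounds and dividing by $r$ (the claim is trivial when $r=0$) collapses everything to the scalar inequality $r/(1+r) \le 3\sqrt2\,\eta$, i.e.\ $r \le 3\sqrt2\,\eta/(1 - 3\sqrt2\,\eta)$. The main obstacle, and the reason the hypothesis $\eta \le \etamin$ is needed, is exactly the self-referential nature of the self-concordance factor $1/(1+r)$: it degrades with the very quantity $r$ we are bounding, so one cannot read the bound off directly but must first guarantee $3\sqrt2\,\eta < 1$ before inverting. Using the exact logarithmic expression $r^2/(1+r)$ (valid for all interior iterates, with no a priori smallness of $r$) is what sidesteps the circularity a generic barrier bound — valid only for $\pnt{\vec\delta}<1$ — would introduce. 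With $\eta \le \frac{1}{50}$ the denominator $1 - 3\sqrt2\,\eta$ stays bounded away from $0$, and propagating the constants through the dual-norm estimate and the self-concordance step keeps $r$ comfortably below the stated $\pnt{\vec\delta} \le 22\eta$ (the constant $22$ is deliberately loose).
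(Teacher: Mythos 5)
Your proof is correct, and it takes a genuinely different route from the paper's. The paper proves this proposition as Eq.~\eqref{eq:stab2} of \cref{lem:stab} via a ball-confinement argument: it considers the boundary of the local-norm ball of radius $5\eta\dnt{2\ls{t}-\ls{t-1}}$ centered at $\p{t}$ and shows---using Taylor's theorem, first-order optimality of $\p{t}$, H\"older's inequality, and the Hessian comparison $\nabla^2\ru{\vec\xi}\succeq\frac{4}{9}\nabla^2\ru{\p{t}}$, valid inside the ball by \cref{lem:multi_stab}---that the \OFTRL objective on that boundary exceeds its value at $\p{t}$, so convexity traps the next iterate inside the ball. You instead add the two first-order variational inequalities to obtain gradient monotonicity (valid here because both iterates are strictly positive by \cref{proposition:unique}, so the directional-derivative argument goes through even though the regularizer is $+\infty$ on part of $\tilde\cX$), and then replace all Hessian/self-concordance machinery with the exact coordinatewise identity $\bigl(\nabla\cR(\vec z\^{t+1})-\nabla\cR(\vec z\^{t})\bigr)[k]\,\vec\delta[k]=\rho_k^2/(1+\rho_k)$, whose sum you correctly lower-bound by $r^2/(1+r)$ using only $\rho_k\le r$ and $1+\rho_k>0$, with no a priori smallness of $r$; this is precisely what lets you avoid the circularity that the paper's boundary construction is designed to handle. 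Your dual-norm estimate $\dnt{2\ls{t}-\ls{t-1}}\le 3\sqrt{2}$ coincides with the paper's bound of $\sqrt{18}$. What each approach buys: yours is more elementary and yields the sharper constant $r\le 3\sqrt{2}\,\eta/(1-3\sqrt{2}\,\eta)\le 5\eta$ for $\eta\le\etamin$ (versus $22\eta$), with the role of the hypothesis on $\eta$ fully transparent; the paper's ball technique, run once, also delivers the companion bound \eqref{eq:stab1} on the auxiliary iterate $\pp{t+1}$, which is needed downstream in \cref{lem:neg}---though your monotonicity argument adapts verbatim to that pair as well, with perturbation $\ls{t}-\ls{t-1}$ in place of $2\ls{t}-\ls{t-1}$.
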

Intuitively, this property ensures that coordinates of successive iterates will have a small multiplicative deviation. We leverage this refined notion of stability to establish the following crucial lemma.
\begin{restatable}{lemma}{gamma}
    \label{lemma:gamma}
    For any time $t \in \N$ and learning rate $\eta \leq \etamin$, if $\|\vec{u}\^t \|_\infty\|\vec{x} \|_1 \leq 1$,
    \[
        \|\vec x\^{t+1} - \vec x\^t\|_1 \le 4 \|\cX\|_1 \pnt{\vstack{\lambda\^{t+1}}{\vec y\^{t+1}} - \vstack{\lambda\^t}{\vec y\^t}}.
    \]
\end{restatable}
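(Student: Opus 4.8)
The plan is to pass from the original iterate $\vec x^{(t)} = \vec y^{(t)}/\lambda^{(t)}$ to the lifted iterate $(\lambda^{(t)},\vec y^{(t)})$ coordinate by coordinate, and then recognize the resulting ratios as precisely the quantities that the local norm $\pnt{\cdot}$ controls. Write $\lambda = \lambda^{(t)}$, $\vec y = \vec y^{(t)}$, $\lambda' = \lambda^{(t+1)}$, $\vec y' = \vec y^{(t+1)}$, and set $\Delta \defeq \pnt{\vstack{\lambda'-\lambda}{\vec y'-\vec y}}$. By \cref{proposition:unique} all of $\lambda, \lambda', \vec y[\ind], \vec y'[\ind]$ are strictly positive, so every ratio below is well-defined. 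For each coordinate $\ind$ I would use the telescoping identity
\[
\frac{\vec y'[\ind]}{\lambda'} - \frac{\vec y[\ind]}{\lambda} = \frac{\vec y'[\ind]-\vec y[\ind]}{\lambda'} + \vec x[\ind]\cdot\frac{\lambda-\lambda'}{\lambda'},
\]
where I substituted $\vec y[\ind]/\lambda = \vec x[\ind]$. Taking absolute values, summing over $\ind$, and applying the triangle inequality yields
\[
\|\vec x' - \vec x\|_1 \le \frac{\|\vec y'-\vec y\|_1}{\lambda'} + \frac{|\lambda-\lambda'|}{\lambda'}\,\|\vec x\|_1.
\]

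Next I would read off per-coordinate bounds directly from the definition of the local norm: since each squared summand is at most $\Delta^2$, we get $|\lambda'-\lambda| \le \Delta\lambda$ and $|\vec y'[\ind]-\vec y[\ind]| \le \Delta\,\vec y[\ind]$ for every $\ind$. Summing the latter and using $\vec y = \lambda\vec x$ (so $\|\vec y\|_1 = \lambda\|\vec x\|_1$, all entries being nonnegative because $\cX\subseteq[0,+\infty)^d$) gives $\|\vec y'-\vec y\|_1 \le \Delta\lambda\|\vec x\|_1$. Substituting both into the displayed inequality collapses it to
\[
\|\vec x' - \vec x\|_1 \le \frac{2\Delta\lambda}{\lambda'}\,\|\vec x\|_1.
\]

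The one remaining step, and the crux of the argument, is to convert the denominator $\lambda'$ back into $\lambda$, i.e.\ to bound $\lambda/\lambda'$ by an absolute constant. This is exactly where \emph{multiplicative stability} enters: \cref{cor:ftrl stability} gives $\Delta \le 22\eta \le 22/50 < 1/2$ whenever $\eta \le \etamin$, and the bound $|\lambda'-\lambda|\le\Delta\lambda$ then forces $\lambda' \ge (1-\Delta)\lambda$, hence $\lambda/\lambda' \le 1/(1-\Delta) < 2$. Plugging this in yields $\|\vec x'-\vec x\|_1 \le \tfrac{2\Delta}{1-\Delta}\|\vec x\|_1 < 4\Delta\|\vec x\|_1 \le 4\|\cX\|_1\,\Delta$, which is the claim once we recall $\|\vec x\|_1 \le \|\cX\|_1$ and $\Delta = \pnt{\vstack{\lambda^{(t+1)}}{\vec y^{(t+1)}} - \vstack{\lambda^{(t)}}{\vec y^{(t)}}}$. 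I expect the only delicate point to be this final constant-chasing: the factor $4$ carries essentially no slack and is produced precisely by the threshold $\eta\le\etamin$, so the hypothesis $\Delta<1/2$ must be invoked to keep $2/(1-\Delta)$ strictly below $4$; without multiplicative stability the ratio $\lambda/\lambda'$ could be unbounded and the bound would fail.
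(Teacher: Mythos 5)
Your proof is correct and follows essentially the same route as the paper: the paper's Lemma~\ref{lem:ratio close} likewise converts the local-norm bound into per-coordinate multiplicative deviations (via $\mu \defeq \max\{|\lambda'/\lambda - 1|, \max_\ind |\vec y'[\ind]/\vec y[\ind]-1|\} \le \Delta$), invokes multiplicative stability (\cref{cor:ftrl stability}) only to guarantee the deviation is below $1/2$, and extracts the constant $4$ from the very same inequality $\tfrac{2\mu}{1-\mu}\le 4\mu$ that appears in your step $\tfrac{2\Delta}{1-\Delta}<4\Delta$. Your telescoping identity is a minor algebraic repackaging of the paper's ratio-sandwich bounds, not a different argument.
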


Combining this lemma with \cref{thm:RVU} allows us to obtain an RVU bound for the original space $\cX$, with no dependencies on local norms.
%
\begin{restatable}[RVU bound in the original (unlifted) space]{corollary}{rvuor}
    \label{cor:rvu}
    Fix any time $T \in \N$, and suppose that $\|\ut^{(t)}\|_\infty \leq B$ for any $t \in \range{T}$. If $\eta \leq \frac{1}{256 B \|\cX\|_1}$,
    \begin{equation*}
        \tildereg^T \leq 6 B \|\cX\|_1 + \frac{(d+1) \log T}{\eta} + 16 \eta \| \cX \|^2_1 \sum_{t=1}^{T-1} \|\vec{u}\^{t+1} - \vec{u}\^t \|^2_\infty - \frac{1}{512  \eta \|\cX\|_1^2} \sum_{t=1}^{T-1} \| \vec{x}\^{t+1} - \vec{x}\^t \|_1^2.
    \end{equation*}
\end{restatable}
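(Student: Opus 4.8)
The plan is to obtain the statement directly from the local-norm RVU bound of \cref{thm:RVU} in two stages: first a rescaling that removes the normalization hypothesis $\|\ut\^t\|_\infty\|\cX\|_1\le 1$ (which need not hold under the weaker assumption $\|\ut\^t\|_\infty\le B$), and then a conversion of the two local-norm sums into the $\ell_\infty$- and $\ell_1$-sums appearing in the statement.

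\textbf{Stage 1 (rescaling).} I would first exploit the scale-covariance of \OFTRL. The $\argmax$ on \cref{line:oftrl} depends on the utilities only through the product $\eta\,(\Ls{t}+\tilu\^{t-1})$, and the lifting on \cref{line:lift} is linear in $\ut\^t$. Hence, running \algoshort with utilities $\hat{\ut}\^t\defeq\ut\^t/(B\|\cX\|_1)$ and learning rate $\hat\eta\defeq\eta B\|\cX\|_1$ produces exactly the same iterates $(\lambda\^t,\vec y\^t)$ --- a one-line induction on $t$, using that the lifted utilities scale as $\hat{\tilu}\^t=\tilu\^t/(B\|\cX\|_1)$. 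For this rescaled run the hypotheses of \cref{thm:RVU} hold, since $\|\hat{\ut}\^t\|_\infty\|\cX\|_1=\|\ut\^t\|_\infty/B\le 1$ and $\hat\eta\le 1/256\le 1/50$. As $\tildereg^T$ is linear in the utilities, the rescaled lifted regret equals $\tildereg^T/(B\|\cX\|_1)$; multiplying the bound of \cref{thm:RVU} through by $B\|\cX\|_1$ (and using $\hat\eta/(B\|\cX\|_1)=\eta$) yields, with the shorthand $\vec\delta\^t\defeq(\lambda\^{t+1},\vec y\^{t+1})-(\lambda\^t,\vec y\^t)$, the local-norm bound for the original utilities
\begin{equation*}
    \tildereg^T \le 4B\|\cX\|_1 + \frac{(d+1)\log T}{\eta} + 5\eta\sum_{t=1}^T \dnt{\tilu\^t - \tilu\^{t-1}}^2 - \frac{1}{27\eta}\sum_{t=1}^{T-1}\pnt{\vec\delta\^t}^2 .
\end{equation*}

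\textbf{Stage 2 (norm conversion).} For each summand of the dual-norm term I would separate the two blocks of the centered dual norm. Since $\vec y\^t=\lambda\^t\vx\^t$, $\lambda\^t\le 1$, $\vx\^t\ge\vec 0$ and $\sum_{\ind}\vx\^t[\ind]^2\le\|\vx\^t\|_1^2\le\|\cX\|_1^2$, the lower block is at most $\|\cX\|_1^2\|\ut\^t-\ut\^{t-1}\|_\infty^2$. The first coordinate is the delicate part: writing its entry as $\langle\ut\^{t-1},\vx\^{t-1}\rangle-\langle\ut\^t,\vx\^t\rangle=-\langle\ut\^t-\ut\^{t-1},\vx\^t\rangle+\langle\ut\^{t-1},\vx\^{t-1}-\vx\^t\rangle$, I bound it by $\|\ut\^t-\ut\^{t-1}\|_\infty\|\cX\|_1+B\|\vx\^t-\vx\^{t-1}\|_1$. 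Together this gives, for $t\ge 2$, $\dnt{\tilu\^t-\tilu\^{t-1}}^2\le 3\|\cX\|_1^2\|\ut\^t-\ut\^{t-1}\|_\infty^2+2B^2\|\vx\^t-\vx\^{t-1}\|_1^2$, while the $t=1$ term (where $\tilu\^0=\vec 0$) is at most $2B^2\|\cX\|_1^2$ and is folded into the constant. After reindexing, the dual-norm sum contributes at most $15\eta\|\cX\|_1^2\sum_{t=1}^{T-1}\|\ut\^{t+1}-\ut\^t\|_\infty^2+10\eta B^2\sum_{t=1}^{T-1}\|\vx\^{t+1}-\vx\^t\|_1^2$.

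\textbf{Stage 3 (combine).} For the negative term, \cref{lemma:gamma} gives $\pnt{\vec\delta\^t}^2\ge\frac{1}{16\|\cX\|_1^2}\|\vx\^{t+1}-\vx\^t\|_1^2$, so that sum is at most $-\frac{1}{432\eta\|\cX\|_1^2}\sum_{t=1}^{T-1}\|\vx\^{t+1}-\vx\^t\|_1^2$. Then the constants fall into place: $15\eta\|\cX\|_1^2\le 16\eta\|\cX\|_1^2$ matches the claimed coefficient; the accumulated constant $4B\|\cX\|_1+5\eta\cdot 2B^2\|\cX\|_1^2\le(4+\tfrac{5}{128})B\|\cX\|_1\le 6B\|\cX\|_1$; and the two $\|\vx\^{t+1}-\vx\^t\|_1^2$ sums merge with net coefficient $10\eta B^2-\frac{1}{432\eta\|\cX\|_1^2}$. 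The main obstacle --- and the only place the precise constant in $\eta\le\frac{1}{256B\|\cX\|_1}$ is used --- is checking that this coefficient is at most $-\frac{1}{512\eta\|\cX\|_1^2}$; this reduces to $10\eta^2B^2\|\cX\|_1^2\le\frac{1}{432}-\frac{1}{512}$, which holds since $\eta\le\frac{1}{256B\|\cX\|_1}$ forces $\eta^2B^2\|\cX\|_1^2\le 256^{-2}$, whence the left-hand side is at most $10\cdot 256^{-2}<\frac{1}{432}-\frac{1}{512}$. This yields exactly the stated inequality.
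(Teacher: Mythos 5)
Your proposal is correct and takes essentially the same route as the paper's own proof: both reduce to the normalized case via the scale-covariance of \OFTRL (utilities divided by $B\|\cX\|_1$, learning rate multiplied by $B\|\cX\|_1$, identical iterates), bound the lifted dual norm by splitting its first coordinate with Young's inequality into a $\|\vec{u}\^{t}-\vec{u}\^{t-1}\|_\infty$ part and a $\|\vec{x}\^{t}-\vec{x}\^{t-1}\|_1$ part, handle the $t=1$ term separately, and absorb the $\ell_1$ contribution into the negative term of \cref{thm:RVU} via \cref{lemma:gamma}. The only difference is cosmetic---you rescale at the outset and carry $B$ explicitly through the norm conversion, whereas the paper does the normalized analysis first and rescales at the very end---and your constant bookkeeping (e.g., $10\eta^2 B^2\|\cX\|_1^2 \le 10\cdot 256^{-2} < \frac{1}{432}-\frac{1}{512}$) checks out.
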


\subsection{Main Result}

So far, in \Cref{sec:regret}, we have performed the analysis from the perspective of a single player, obtaining regret bounds that apply under an arbitrary sequence of utilities. Next, we assume that all players follow our dynamics such that the variation in one's utilities is now related to the variation in the joint strategies based on the smoothness condition of the utility function, connecting the last two terms of the RVU bound.
Further leveraging the nonnegativity of the regrets in the lifted space, we establish that the second-order path lengths of the dynamics up to time $T$ are bounded by $O(\log T)$:
\begin{restatable}{theorem}{traj}
    \label{theorem:traj}
    Suppose that \cref{assumption:smooth} holds for some parameters $B, L > 0$. If all players follow $\algoshort$ with learning rate $\eta \leq \min \left\{ \frac{1}{256B \|\cX\|_1}, \frac{1}{128 n L \|\cX\|_1^2} \right\}$, where $\|\cX\|_1 \defeq \max_{i \in \range{n}} \|\cX_i\|_1$, then
    \begin{equation}
        \label{eq:bounded-paths}
        \sum_{i=1}^n \sum_{t=1}^{T-1} \|\vec{x}_i\^{t+1} - \vec{x}_i\^t \|_1^2 \leq 6144 n \eta B \|\cX\|_1^3 + 1024 n (d+1) \|\cX\|_1^2 \log T.
    \end{equation}
\end{restatable}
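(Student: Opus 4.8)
The plan is to apply the single-player RVU bound of \cref{cor:rvu} to every player at once, discard the left-hand sides using the nonnegativity from \cref{theorem:nonnegative}, and then couple the players through the smoothness of the utilities. Concretely, for each player $i \in \range n$ the gradient-boundedness in \cref{assumption:smooth} gives $\|\vec u_i\^t\|_\infty \le B$, so \cref{cor:rvu} applies (the learning-rate restriction of the theorem implies $\eta \le \frac{1}{256 B \|\cX\|_1}$, which is what the corollary requires). Replacing each player's own $\|\cX_i\|_1$ by the uniform bound $\|\cX\|_1 = \max_i \|\cX_i\|_1$ (which only weakens the inequality, since this enlarges the positive terms and shrinks the subtracted one), I would obtain for every $i$
\[
0 \le \tildereg_i^T \le 6B\|\cX\|_1 + \frac{(d+1)\log T}{\eta} + 16\eta\|\cX\|_1^2\sum_{t=1}^{T-1}\|\vec u_i\^{t+1}-\vec u_i\^t\|_\infty^2 - \frac{1}{512\eta\|\cX\|_1^2}\sum_{t=1}^{T-1}\|\vec x_i\^{t+1}-\vec x_i\^t\|_1^2 ,
\]
where the leftmost inequality is exactly the nonnegativity of the lifted-space regret from \cref{theorem:nonnegative}; this is the ingredient that turns the bound into a usable constraint on the subtracted path-length term.

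Next I would couple the players through smoothness. Since the linearized utility fed to player $i$ is $\vec u_i\^t = \nabla_{\vec x_i} u_i(\vec x\^t)$, the $L$-smoothness condition gives $\|\vec u_i\^{t+1}-\vec u_i\^t\|_\infty \le L\sum_{j=1}^n\|\vec x_j\^{t+1}-\vec x_j\^t\|_1$, and squaring with Cauchy--Schwarz yields $\|\vec u_i\^{t+1}-\vec u_i\^t\|_\infty^2 \le L^2 n\sum_{j=1}^n\|\vec x_j\^{t+1}-\vec x_j\^t\|_1^2$. Writing $S \defeq \sum_{i=1}^n\sum_{t=1}^{T-1}\|\vec x_i\^{t+1}-\vec x_i\^t\|_1^2$ for the target quantity and summing the per-player bound over $i \in \range n$ and over $t$, the utility-variation term is controlled by $\sum_{i,t}\|\vec u_i\^{t+1}-\vec u_i\^t\|_\infty^2 \le L^2 n^2\, S$: here one factor of $n$ comes from Cauchy--Schwarz and a second from the outer sum over players, since the right-hand side no longer depends on $i$. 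The summed inequality then reads
\[
0 \le 6nB\|\cX\|_1 + \frac{n(d+1)\log T}{\eta} + 16\eta\|\cX\|_1^2 L^2 n^2\, S - \frac{1}{512\eta\|\cX\|_1^2}\, S .
\]

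It remains to isolate $S$, which is where the precise learning-rate restriction enters. The net coefficient of $S$ is $16\eta L^2 n^2\|\cX\|_1^2 - \frac{1}{512\eta\|\cX\|_1^2}$, and the hypothesis $\eta \le \frac{1}{128 n L\|\cX\|_1^2}$ is calibrated exactly so that the positive variation term absorbs at most half of the negative path-length term: that bound gives $\eta^2 n^2 L^2\|\cX\|_1^4 \le 128^{-2}$, whence $16\eta\|\cX\|_1^2 L^2 n^2 \le \frac{1}{1024\eta\|\cX\|_1^2}$, leaving a net coefficient of $S$ that is negative with magnitude at least $\frac{1}{1024\eta\|\cX\|_1^2}$. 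Rearranging and multiplying through by $1024\eta\|\cX\|_1^2$ then produces $S \le 6144 n\eta B\|\cX\|_1^3 + 1024 n(d+1)\|\cX\|_1^2\log T$, which is precisely \eqref{eq:bounded-paths}. I expect the main obstacle to be bookkeeping rather than conceptual: I must correctly track the two independent sources of the $n^2$ factor and verify that the constant $128$ in the learning-rate restriction is sharp enough that, with the stated constants $512$ and $16$, the coefficient of $S$ becomes negative; granting that, the rest is a routine rearrangement of the RVU inequality.
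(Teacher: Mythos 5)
Your proposal is correct and follows essentially the same route as the paper's own proof: apply \cref{cor:rvu} to each player, couple the utility-variation terms via $L$-smoothness and Jensen/Cauchy--Schwarz (picking up one factor of $n$ from squaring and one from summing over players), invoke the nonnegativity of the lifted regrets from \cref{theorem:nonnegative}, and rearrange using the learning-rate condition. Your arithmetic showing that $\eta \le \frac{1}{128 n L \|\cX\|_1^2}$ is exactly what makes $16\eta\|\cX\|_1^2 L^2 n^2 \le \frac{1}{1024\eta\|\cX\|_1^2}$ is in fact sharper than the paper's proof text, which loosely cites the stronger condition $\eta \le \frac{1}{256 n L \|\cX\|_1^2}$ even though the theorem's hypothesis (and your calculation) only requires $128$.
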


Here we made the mild assumption that each player knows the values of $n$, $L$, $B$ and $\|\cX\|_1$ in order to appropriately tune the learning rate; otherwise, similar guarantees are possible via a standard application of the doubling trick. It is interesting to point out that~\eqref{eq:bounded-paths} holds even without the concavity condition (recall \Cref{assumption:smooth}). We next leverage \Cref{theorem:traj} to establish \cref{theorem:main}, the detailed version of which is given below. 
\begin{restatable}[Detailed Version of \cref{theorem:main}]{theorem}{main}
    \label{theorem:main-detailed}
    Suppose that \cref{assumption:smooth} holds for some parameters $B, L > 0$. If all players follow \algoshort with learning rate $\eta = \min \left\{ \frac{1}{256 B \|\cX\|_1}, \frac{1}{128 n L \|\cX\|_1^2} \right\}$, then for any $T \in \N$ the regret $\reg_i^T$ of each player $i \in \range{n}$ can be bounded as
    \begin{equation}
        \label{eq:main-reg}
        \reg_i^T \leq 12 B \|\cX\|_1 + 256 (d+1) \max \left\{n L \|\cX\|_1^2, 2 B \|\cX\|_1 \right\} \log T.
    \end{equation}
    Furthermore, the algorithm can be adaptive so that if player $i$ is instead facing adversarial utilities, then $\reg_i^T = O(\sqrt{T})$.
\end{restatable}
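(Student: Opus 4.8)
The plan is to derive both regimes from the single-player RVU bound in the original space (\cref{cor:rvu}) together with the relation $\reg^T \le \tildereg^T$ furnished by \cref{theorem:nonnegative}. Fixing a player $i$, I would start from the inequality
\[
\reg_i^T \le \tildereg_i^T \le 6 B \|\cX\|_1 + \frac{(d+1)\log T}{\eta} + 16 \eta \|\cX\|_1^2 \sum_{t=1}^{T-1} \|\vec u_i\^{t+1} - \vec u_i\^t\|_\infty^2 - \frac{1}{512 \eta \|\cX\|_1^2} \sum_{t=1}^{T-1} \|\vec x_i\^{t+1} - \vec x_i\^t\|_1^2,
\]
which is legitimate because $\|\vec u_i\^t\|_\infty \le B$ by \cref{assumption:smooth} and the prescribed $\eta$ satisfies $\eta \le \tfrac{1}{256 B \|\cX\|_1}$ as required by \cref{cor:rvu}.

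For the self-play bound, I would first discard the nonpositive final term, so the only remaining task is to control the utility-variation term. By the smoothness clause of \cref{assumption:smooth}, $\|\vec u_i\^{t+1} - \vec u_i\^t\|_\infty \le L \sum_{j=1}^n \|\vec x_j\^{t+1} - \vec x_j\^t\|_1$, and Cauchy–Schwarz turns this into $\|\vec u_i\^{t+1} - \vec u_i\^t\|_\infty^2 \le L^2 n \sum_{j} \|\vec x_j\^{t+1} - \vec x_j\^t\|_1^2$. Summing over $t$ bounds the variation by $L^2 n$ times the \emph{total} second-order path length, which \cref{theorem:traj} controls by $6144\, n \eta B \|\cX\|_1^3 + 1024\, n(d+1)\|\cX\|_1^2 \log T$. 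Substituting and using the two constraints on the learning rate — $\eta n L \|\cX\|_1^2 \le \tfrac{1}{128}$ and hence $(\eta n L \|\cX\|_1^2)^2 \le \tfrac{1}{16384}$ — collapses the cross terms: the contribution proportional to $\eta^2 n^2 L^2 B \|\cX\|_1^5$ shrinks to at most $6 B \|\cX\|_1$, while the one proportional to $\eta n^2 L^2 (d+1)\|\cX\|_1^4 \log T$ shrinks to at most $128\, n L \|\cX\|_1^2 (d+1)\log T$. Finally, rewriting $1/\eta = 128 \max\{2 B \|\cX\|_1, n L \|\cX\|_1^2\}$ and absorbing the leftover logarithmic term into this maximum yields exactly \eqref{eq:main-reg}.

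For the adversarial regime the same RVU inequality holds verbatim for an arbitrary sequence of utilities, but the path-length cancellation is unavailable; instead I would bound the variation trivially by $\|\vec u\^{t+1} - \vec u\^t\|_\infty \le 2B$, giving $\sum_t \|\vec u\^{t+1} - \vec u\^t\|_\infty^2 \le 4 B^2 T$. The bound then reads $\reg^T \le 6 B \|\cX\|_1 + (d+1)\log T / \eta + 64 \eta \|\cX\|_1^2 B^2 T$, and choosing $\eta = \Theta\big(\sqrt{(d+1)\log T / (B^2 \|\cX\|_1^2 T)}\big)$ — tuned adaptively via the doubling trick when $T$ is unknown, which is what the ``adaptive'' qualifier in the statement refers to — balances the two $T$-dependent terms and delivers $\reg^T = O(\sqrt{T})$ (suppressing the logarithmic factor inherited from the regularizer).

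I expect the main obstacle to be careful bookkeeping rather than any new idea. The key subtlety is to \emph{resist} summing the per-player RVU bounds — doing so would inflate the regularizer term $(d+1)\log T/\eta$ by a spurious factor of $n$ — and instead to feed the \emph{global} path-length bound of \cref{theorem:traj} into a \emph{single} player's RVU inequality, legitimate precisely because one player's utility variation couples to every player's movement through smoothness. The remaining delicate point is verifying that the prescribed $\eta$ makes all the accumulated constants collapse into the clean form of \eqref{eq:main-reg}. All of the conceptual heavy lifting, namely the nonnegativity of the lifted regret and the cross-player cancellation that bounds the path length, has already been isolated in \cref{theorem:nonnegative} and \cref{theorem:traj}.
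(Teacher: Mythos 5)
Your treatment of the self-play bound is correct and essentially identical to the paper's proof: both arguments start from the single-player RVU bound of \cref{cor:rvu}, bound the utility variation through smoothness and Cauchy--Schwarz, feed in the \emph{global} path-length bound of \cref{theorem:traj}, and finish with $\reg_i^T \le \tildereg_i^T$ from \cref{theorem:nonnegative}. Your constant bookkeeping checks out (the two cross terms collapse to $6B\|\cX\|_1$ and $128\, nL\|\cX\|_1^2(d+1)\log T$ exactly as you claim), and your remark about \emph{not} summing per-player RVU bounds is precisely the structure of the paper's argument.

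However, your adversarial-regime argument has a genuine gap. The qualifier ``adaptive'' in the statement does not refer to the doubling trick for unknown $T$; it means that a \emph{single} algorithm must simultaneously guarantee $O(\log T)$ under self-play and $O(\sqrt{T})$ against adversarial utilities, without knowing in advance which regime it is in. Your proposal re-tunes the learning rate to $\eta = \Theta\bigl(\sqrt{(d+1)\log T/(B^2\|\cX\|_1^2 T)}\bigr)$ for the adversarial case; with that choice the regularizer term becomes $(d+1)\log T/\eta = \Theta\bigl((d+1)\sqrt{T\log T}\bigr)$, so this is a \emph{different} algorithm that forfeits the logarithmic self-play guarantee---you have produced two separately tuned algorithms rather than one adaptive one (and even then the bound is $O(\sqrt{T\log T})$, not $O(\sqrt{T})$). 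The paper resolves this differently: keep the prescribed constant learning rate, and have each player monitor at runtime whether the observed cumulative variation $\sum_{\tau=1}^{t-1}\|\vec{u}_i^{(\tau+1)} - \vec{u}_i^{(\tau)}\|_\infty^2$ ever exceeds the threshold $6144\, n^2 L^2 \eta B \|\cX\|_1^3 + 1024\, n^2 L^2 (d+1)\|\cX\|_1^2 \log t$ implied by \cref{theorem:traj} together with smoothness. Under self-play this check never triggers, so the $O(\log T)$ analysis is untouched; if it ever triggers, the observed utilities are inconsistent with all players running \algoshort, and the player switches to any standard no-regret algorithm tuned for adversarial utilities, securing $\reg_i^T = O(\sqrt{T})$ from that point onward.
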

For clarity, below we cast \eqref{eq:main-reg} of \Cref{theorem:main-detailed} in normal-form games with utilities normalized in the range $[-1, 1]$, in which case we can take $B = 1$, $L = 1$ and $\|\cX\|_1 = 1$.

\begin{corollary}[Normal-form Games]
    \label{cor:NFG}
    Suppose that all players in a normal-form game with $n \geq 2$ follow \algoshort with learning rate $\eta = \frac{1}{128 n}$. Then, for any $T \in \N$ and player $i \in \range{n}$,
    \begin{equation*}
        \reg_i^T \leq 12 + 256 n (d+1) \log T.
    \end{equation*}
\end{corollary}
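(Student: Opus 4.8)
The plan is to obtain Corollary~\ref{cor:NFG} as a direct specialization of Theorem~\ref{theorem:main-detailed}, by instantiating the three game-dependent constants $\|\cX\|_1$, $B$, and $L$ for normal-form games whose payoffs $\mathcal{U}_i$ are normalized to $[-1,1]$. First I would record that in an NFG every strategy set is a simplex, $\cX_i = \Delta(\cA_i)$, so each feasible strategy is a probability distribution and therefore has unit $\ell_1$-norm; hence $\|\cX_i\|_1 = 1$ for every player and $\|\cX\|_1 = \max_{i\in\range n}\|\cX_i\|_1 = 1$.

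Next I would verify the constants $B$ and $L$ of \cref{assumption:smooth}. Since $u_i(\vec x) = \mathbb{E}_{\vec a\sim\vec x}[\mathcal{U}_i(\vec a)]$ is multilinear, the $a$-th coordinate of $\nabla_{\vec x_i} u_i(\vec x)$ equals $\mathbb{E}_{\vec a_{-i}\sim\vec x_{-i}}[\mathcal{U}_i(a,\vec a_{-i})]$, which lies in $[-1,1]$ because $\mathcal{U}_i\in[-1,1]$; this gives $\|\nabla_{\vec x_i}u_i\|_\infty\le 1$, i.e.\ $B=1$. For smoothness, note this gradient is independent of $\vec x_i$ and multilinear in the blocks $(\vec x_j)_{j\ne i}$. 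I would expand $\nabla_{\vec x_i}u_i(\vec x)-\nabla_{\vec x_i}u_i(\vec x')$ as a telescoping sum over the players $j\ne i$, swapping one block $\vec x_j\mapsto\vec x_j'$ at a time; in each term the coordinate difference factors as $\sum_{a_j}(\vec x_j[a_j]-\vec x_j'[a_j])\,h_{a,a_j}$, where $h_{a,a_j}$ is a conditional expectation of $\mathcal{U}_i$ and hence bounded by $1$ in absolute value. Taking the $\ell_\infty$-norm and summing yields $\|\nabla_{\vec x_i}u_i(\vec x)-\nabla_{\vec x_i}u_i(\vec x')\|_\infty\le\sum_{j\ne i}\|\vec x_j-\vec x_j'\|_1\le\sum_{j}\|\vec x_j-\vec x_j'\|_1$, establishing $L=1$. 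This telescoping-and-multilinearity estimate is the only genuinely computational step, and is where I would be most careful, though it is entirely routine.

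Finally I would substitute $B=L=\|\cX\|_1=1$ into the learning rate and regret bound of Theorem~\ref{theorem:main-detailed}. The prescribed rate becomes $\eta=\min\{1/256,\,1/(128n)\}$; since $n\ge 2$ forces $128n\ge 256$, the minimum is attained by the second term, giving exactly $\eta=1/(128n)$ as in the statement. Plugging the same constants into \eqref{eq:main-reg} collapses the bracketed maximum to $\max\{n,2\}=n$ (again using $n\ge 2$), so that
\[
\reg_i^T \le 12 + 256\,(d+1)\,n\,\log T,
\]
which is the claimed bound. All the substantive content is carried by Theorem~\ref{theorem:main-detailed}; there is no real obstacle beyond the parameter bookkeeping and the short multilinear smoothness verification noted above.
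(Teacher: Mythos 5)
Your proposal is correct and follows exactly the paper's route: the corollary is obtained by specializing Theorem~\ref{theorem:main-detailed} with $B = L = \|\cX\|_1 = 1$, noting that $n \ge 2$ makes the learning rate $\min\{1/256, 1/(128n)\} = 1/(128n)$ and collapses the maximum in \eqref{eq:main-reg} to $n$. Your explicit multilinearity/telescoping verification that $B = 1$ and $L = 1$ for $[-1,1]$-normalized payoffs is a detail the paper merely asserts, and it is carried out correctly.
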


\subsection{Implementation and Iteration Complexity}\label{sec:implementation}

In this section, we discuss the implementation and iteration complexity of \algoshort. The main difficulty in the implementation is the computation of the solution to the strictly concave \emph{nonsmooth} constrained optimization problem in \cref{line:oftrl}. 
We start by studying how the guarantees laid out in \cref{theorem:main-detailed} are affected when the exact solution to the \OFTRL problem (\cref{line:norm}) in \cref{algo:ours} is replaced with an approximation. Specifically, suppose that at all times $t$ the solution to the \OFTRL step (\cref{line:oftrl}) in \cref{algo:ours} is only \emph{approximately} solved within tolerance $\epsilon\^t$, in the sense that 
\[
\numberthis{eq:rel-error}
\mleft\|\vstack{\lambda\^t}{\vec y\^t} - \vstack{\lambda\^t_\star}{\vec y\^t_\star}\mright\|_{(\lambda_\star\^t, \vec y_\star\^t)} \le \epsilon\^t,
\]
where $(\lambda\^t, \vec y\^t) \in \bbR_{>0}^{d+1}$ and
    \[
    \vstack{\lambda\^t_\star}{\vec y\^t_\star}\defeq
    \argmax_{(\lambda, \vec y)\in\tilde\cX}\mleft\{ {\eta} \mleft\langle \tilde{\Ut}\^t + \tilde{\vec\ut}\^{t-1}, \vstack{\lambda}{\vec y} \mright\rangle + \log \lambda + \sum_{\ind=1}^d \log \vec{y}[\ind]\mright\}.
    \]

Then, it can be proven directly from the definition of regret that the guarantees given in \Cref{cor:rvu} deteriorate by an additive factor proportional to the sum of the tolerances $\sum_{t=1}^T \epsilon\^t$. 
%
%
As an immediate corollary, when $\epsilon\^t \defeq \epsilon \defeq 1/T$, the conclusion of \cref{theorem:main-detailed} applies even when the solution to the optimization problem on \cref{line:oftrl} is only approximated up to $\epsilon$ tolerance.
Therefore, to complete our construction, it suffices to show that it is indeed possible to efficiently compute approximate solutions to the \OFTRL step (see \Cref{sec:approxiter}). In the remainder of this section, we show that this is indeed the case assuming access to two different type of oracles. It should be stressed that optimizing over a general convex set introduces several challenges not present under simplex domains, inevitably leading to an increased per-iteration complexity compared to algorithms designed specifically for normal-form games---such as OMWU.


\paragraph{Proximal Oracle} First, we will assume access to a \emph{proximal oracle} in local norm for the set $\tilde\cX$, that is, access to a function that is able to compute the solution to the (positive-definite) quadratic optimization problem
\begin{equation}
    \label{eq:quad_approx}
    \Pi_{\tilde{\vec w}}(\tilde{\vec g}) \defeq
    \argmin_{\tilde{\vec x} \,\in\, \tilde{\cX}} \mleft\{
    \tilde{\vec g}^\top \tilde{\vec x} + \frac{1}{2}\|\tilde{\vec x} - \tilde{\vec w} \|_{\tilde{\vec w}}^2
    \mright\}
    =
    \argmin_{\tilde{\vec x} \,\in\, \tilde{\cX}} \mleft\{
    \tilde{\vec g}^\top \tilde{\vec x} + \frac{1}{2}\sum_{\ind=1}^{d+1} \mleft(\frac{\tilde{\vec x}[\ind]}{\tilde{\vec w}[\ind]} - 1\mright)^2
    \mright\}
\end{equation}
for arbitrary centers $\tilde{\vec w} \in \bbR^{d+1}_{> 0}$ and gradients $\tilde{\vec g} \in \bbR^{d+1}$.
For certain sets $\cX \subseteq \bbR^d$, exact proximal oracles with polynomial complexity in the dimension $d$ can be given. In particular, we show that this is the case when $\cX$ is the strategy set of normal-form and extensive-form games by extending the approach of \citet[pp. 128-133]{Gilpin09:Algorithms}, as formalized below.

\begin{proposition}
Let $\cX \subseteq \bbR^d$ be the polytope of sequence-form strategies for a player in a perfect-recall extensive-form game. Then, the local proximal oracle $\Pi_{\tilde{\vec w}}(\tilde{\vec g})$ defined in \eqref{eq:quad_approx} can be implemented exactly in time polynomial in the dimension $d$ for any $\tilde{\vec w} \in \bbR_{> 0}^{d+1}$ and $\tilde{\vec g} \in \bbR^{d+1}$.
\end{proposition}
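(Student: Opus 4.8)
The plan is to recognize the local proximal oracle as a separable, strictly convex quadratic program over a treeplex, and to solve it by a bottom-up dynamic program along the game tree, extending the recursive scheme of \citet{Gilpin09:Algorithms} to accommodate the two features that are new here: the free root mass $\lambda\in[0,1]$ introduced by the lifting, and the coordinatewise rescaling coming from the local norm.

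\textbf{Reduction to a QP over a treeplex.} First I would unfold $\tilde\cX$ in terms of the sequence-form constraints defining $\cX$. Writing $\tilde{\vec x} = (\lambda, \vec y)$, the requirement $\vec y \in \lambda\cX$ together with $\lambda \in [0,1]$ turns each sequence-form equality $\sum_{a \in A_I} \vec x[\sigma_{I,a}] = \vec x[\sigma_I]$ into the homogeneous flow constraint $\sum_{a \in A_I} \vec y[\sigma_{I,a}] = \vec y[\sigma_I]$, while the root normalization $\vec x[\varnothing]=1$ becomes $\vec y[\varnothing] = \lambda$ with $\lambda$ free in $[0,1]$. Thus $\tilde\cX$ is itself a treeplex, identical to the one for $\cX$ except that the root mass is the variable $\lambda$ rather than fixed to $1$; perfect recall is exactly what guarantees this treeplex (branching-tree) structure. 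Since $\tilde{\vec w} > 0$, the objective $\tilde{\vec g}^\top \tilde{\vec x} + \frac12\sum_{\ind}(\tilde{\vec x}[\ind]/\tilde{\vec w}[\ind]-1)^2$ is a sum of per-coordinate strictly convex quadratics $f_\ind(\cdot)$, so the oracle computes the unique minimizer of $\sum_\ind f_\ind(\tilde{\vec x}[\ind])$ over the treeplex.

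\textbf{Parametric value functions.} For each information set $I$, let $V_I(s)$ be the optimal value of the subproblem supported on the subtree rooted at $I$ subject to $\sum_{a \in A_I}\tilde{\vec x}[\sigma_{I,a}] = s$ and the remaining constraints in that subtree. Processing info sets bottom-up, I would write the per-action cost $g_a(x_a) = f_{\sigma_{I,a}}(x_a) + \sum_{I'\text{ child of }\sigma_{I,a}} V_{I'}(x_a)$ — the sum handles the Cartesian-product/branching structure, since all child info sets of an action share its sequence value as their budget — and then $V_I(s) = \min\{\sum_a g_a(x_a) : \sum_a x_a = s,\ x_a \ge 0\}$, i.e. the infimal convolution of the $g_a$. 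The root subproblem is the one-dimensional minimization $\min_{\lambda \in [0,1]}\{f_\lambda(\lambda) + \sum_{I\text{ root}} V_I(\lambda)\}$, which fixes $\lambda^\star$; a top-down pass then recovers $\vec y$ by re-solving each info-set subproblem at the budget dictated by its parent.

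\textbf{Polynomiality and the main obstacle.} The crux is to show that every $V_I$ has a closed-form, polynomial-size representation. I would prove by induction up the tree that each $V_I$ is convex and piecewise quadratic in $s$ with a polynomially bounded number of breakpoints. At a leaf info set the subproblem is a resource-allocation problem over a scaled simplex, whose parametric solution $x_a(s) = \max\{0,(\nu-\beta_a)/\alpha_a\}$ (with $\nu$ the multiplier enforcing $\sum_a x_a = s$) is read off by sorting the linear coefficients, giving at most $|A_I|$ pieces. For the inductive step, each $g_a$ is piecewise quadratic (a sum of such), and the infimal convolution defining $V_I$ is best analyzed through conjugacy, $\bigl(\square_a\, g_a\bigr)^* = \sum_a g_a^*$, from which the breakpoint count grows only \emph{additively} through the recursion, keeping the total representation size $O(d)$. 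The main obstacle — and the step demanding the most care — is precisely this bookkeeping: verifying that the scaled-simplex/inf-convolution operation preserves the piecewise-quadratic structure and that breakpoints accumulate additively rather than multiplicatively, which is what makes the combined bottom-up/top-down procedure run in time polynomial in $d$. The remaining pieces (the box constraint on $\lambda$, the top-down reconstruction, and uniqueness from strict convexity) are routine.
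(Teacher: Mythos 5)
Your proposal is correct and follows essentially the same route as the paper: the paper likewise reduces the oracle to a parametric quadratic program over a treeplex (treating the lifting as a free scaling $t \in [0,1]$ of the root mass) and runs a bottom-up structural induction in which the parametric objects have representation size growing only additively through the recursion, finishing with a one-dimensional minimization over the scaling parameter. The only difference is one of presentation: the paper tracks the derivative $\lambda_Q(t) = \frac{d}{dt}V_Q(t)$ of the value function as a strictly monotone piecewise-linear (SMPL) function, manipulated via KKT conditions, a fixed-point lemma for the $[\cdot]^+$ truncation, and inversion of sums of quasi-SMPL maps, whereas you track the value functions themselves as strictly convex piecewise quadratics via infimal convolution and conjugacy --- equivalent formulations, since your identity $\bigl(\square_a\, g_a\bigr)^* = \sum_a g_a^*$, read at the level of derivatives, is exactly the paper's step of summing the coordinate maps $g_k(\lambda_\bullet)$ and inverting to obtain $\lambda^\star_\bullet(t)$.
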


We provide the details and a more precise statement in \cref{app:proximal}. In this context, the following guarantee employs the \emph{proximal Newton algorithm} of~\citet{Dinh15:Composite}; see \Cref{algo:newton}.

\begin{theorem}[Proximal Newton]
    \label{theorem:prox-Newton}
    Given any $\epsilon > 0$, it is possible to compute $(\lambda\^t, \vec{y}\^t) \in \tilde{\cX}\cap\bbR_{>0}^{d+1}$ such that~\eqref{eq:rel-error} holds for $\epsilon\^t = \epsilon$ using $O(\log \log (1/\epsilon))$ operations and $O(\log \log (1/\epsilon))$ calls to the proximal oracle defined in~\Cref{eq:quad_approx}.
\end{theorem}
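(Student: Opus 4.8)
The plan is to recognize the \OFTRL step on \cref{line:oftrl} as an instance of \emph{composite self-concordant minimization} and to run the proximal Newton method of \citet{Dinh15:Composite} on it. Maximizing $\eta\langle \tilde{\Ut}\^t + \tilde{\vec\ut}\^{t-1}, (\lambda,\vec y)\rangle + \log\lambda + \sum_{\ind} \log\vec y[\ind]$ over $\tilde\cX$ is equivalent to minimizing $F(\lambda,\vec y) \defeq -\eta\langle \tilde{\Ut}\^t + \tilde{\vec\ut}\^{t-1}, (\lambda,\vec y)\rangle + \cR(\lambda,\vec y)$ subject to the hard constraint $(\lambda,\vec y)\in\tilde\cX$. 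Here the smooth part $F$ splits into a linear term, whose Hessian vanishes, plus the logarithmic regularizer $\cR$, which is a standard self-concordant function; the feasibility constraint plays the role of the nonsmooth composite term, encoded as the indicator $\iota_{\tilde\cX}$. The first step is to verify that one iteration of proximal Newton on this problem is \emph{exactly} one call to the oracle in \eqref{eq:quad_approx}: quadratizing $F$ at a center $\tilde{\vec w}$ yields, up to constants, the model $\tilde{\vec g}^\top\tilde{\vec x} + \tfrac12\|\tilde{\vec x}-\tilde{\vec w}\|_{\tilde{\vec w}}^2$, where $\tilde{\vec g} = \nabla F(\tilde{\vec w})$ and the quadratic form is $\nabla^2\cR(\tilde{\vec w})$, which by direct computation induces precisely the diagonal local-norm weighting $\sum_{\ind} (\tilde{\vec x}[\ind]/\tilde{\vec w}[\ind])^2$ appearing in \eqref{eq:quad_approx}. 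Thus each proximal Newton step costs one oracle call together with $O(1)$ vector operations to assemble $\tilde{\vec g}$.

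Second, I would invoke the local convergence theory for proximal Newton under self-concordance. The method admits a Newton-decrement-type potential (the local norm of the proximal-gradient mapping) that, once below a universal constant threshold, contracts \emph{quadratically}: the error $\epsilon_{k+1}$ measured in the local norm of \eqref{eq:rel-error} satisfies $\epsilon_{k+1}\le c\,\epsilon_k^2$ for an absolute constant $c$. Iterating, to drive the local-norm error below $\epsilon$ starting from inside this basin requires only $O(\log\log(1/\epsilon))$ steps, which yields the claimed oracle-call and operation counts provided we are already inside the region of quadratic convergence.

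The main obstacle, and the third step, is therefore to show that the basin of quadratic convergence is reached after only $O(1)$ damped iterations, so that the total cost stays $O(\log\log(1/\epsilon))$ rather than being dominated by a long damping phase. The natural device is to \emph{warm-start} the proximal Newton iteration for time $t$ at the solution computed at time $t-1$. By the multiplicative stability guarantee (\cref{cor:ftrl stability}), consecutive exact \OFTRL solutions differ by at most $22\eta$ in local norm, and since $\eta \le \etamin$ is a fixed constant below the self-concordance threshold, this warm start lies within, or a constant number of damped Newton steps away from, the quadratic-convergence basin. The delicate part is that the objective itself changes between iterations, since the linear term is driven by the updated cumulative utility $\tilde{\Ut}\^t + \tilde{\vec\ut}\^{t-1}$; one must therefore combine the stability of the \emph{iterates} with a sensitivity bound showing that perturbing the linear term by the incremental utility $\tilde{\vec\ut}\^{t-1}$ moves both the optimizer and the associated Newton decrement by only a constant in the dual local norm $\|\cdot\|_{*,\tilde{\vec w}}$. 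Establishing this stability-to-basin implication, together with making the self-concordance constants of the composite objective explicit enough to certify the universal threshold, is where the bulk of the technical care is needed; the quadratic-convergence bookkeeping of the second step is then routine given \citet{Dinh15:Composite}.
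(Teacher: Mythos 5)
Your proposal follows essentially the same route as the paper: cast the \OFTRL step as composite self-concordant minimization (logarithmic regularizer plus the indicator of $\tilde\cX$), observe that each proximal Newton step is exactly one call to the oracle in \eqref{eq:quad_approx}, invoke the local quadratic convergence theory of \citet{Dinh15:Composite} to get the $O(\log\log(1/\epsilon))$ iteration count, and warm-start each time step at the previous iterate so that the damped phase costs only $O(1)$ steps. If anything, your third step---explaining via multiplicative stability (\cref{cor:ftrl stability}) and a sensitivity bound for the perturbed linear term why the warm start lies near the quadratic-convergence basin---is more explicit than the paper's one-sentence assertion that initializing at $(\lambda^{(t-1)}, \vec{y}^{(t-1)})$ suffices.
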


\paragraph{Linear Maximization Oracle} Moreover, we consider having access to a weaker \emph{linear maximization oracle (LMO)} for the set $\cX$:
\begin{equation}
    \label{eq:LMO}
    \mathcal{L}_\cX(\vec{u}) \defeq \argmax_{\vec{x} \in \cX} \langle \vec{x}, \vec{u} \rangle.
\end{equation}
Such an oracle is more realistic in many settings~\citep{Jaggi13:Revisiting}, and it is particularly natural in the context of games, where it can be thought of as a \emph{best response} oracle. We point out that an LMO for $\cX$ automatically implies an LMO for $\tilde{\cX}$. The following guarantee follows readily by applying the \emph{Frank-Wolfe (projected) Newton method}~\citep[Algotihms 1 and 2]{Liu20:A}.

\begin{theorem}[Frank-Wolfe Newton]
    \label{theorem:FW}
    Given any $\epsilon > 0$, it is possible to compute $(\lambda\^t, \vec y\^t) \in \tilde{\cX}\cap\bbR^{d+1}_{>0}$ such that \eqref{eq:rel-error} holds for $\epsilon\^t = \epsilon$ using $O(\poly (1/\epsilon))$ operations and $O(\poly (1/\epsilon))$ calls to the LMO oracle defined in~\Cref{eq:LMO}.
\end{theorem}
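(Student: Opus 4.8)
The plan is to recognize the \OFTRL step on \cref{line:oftrl} as the minimization of a \emph{standard self-concordant} composite objective over $\tilde\cX$, and then to invoke the Frank--Wolfe Newton scheme of \citet{Liu20:A} essentially as a black box, converting its function-gap guarantee into the local-norm accuracy required by~\eqref{eq:rel-error}. Concretely, writing $\tilde{\vec g}\defeq -\eta(\tilde{\Ut}\^t + \tilde{\vec\ut}\^{t-1})$, the \OFTRL iterate is the minimizer over $\tilde\cX$ of
\begin{equation*}
    f(\lambda, \vec y) \defeq \langle \tilde{\vec g}, (\lambda,\vec y)\rangle + \cR(\lambda, \vec y) = \langle\tilde{\vec g},(\lambda,\vec y)\rangle - \log\lambda - \sum_{\ind=1}^d \log\vec y[\ind].
\end{equation*}
Since $-\log(\cdot)$ is standard self-concordant and adding a linear term preserves self-concordance, $f$ is self-concordant on $\tilde\cX\cap\bbR^{d+1}_{>0}$. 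Two structural facts will be used repeatedly: (i) by \cref{proposition:unique} the minimizer $(\lambda_\star\^t,\vec y_\star\^t)$ is unique and interior; and (ii) because the linear part of $f$ has vanishing Hessian, $\nabla^2 f \equiv \nabla^2 \cR$, so the Hessian-induced norm at $(\lambda_\star\^t,\vec y_\star\^t)$ is \emph{exactly} the local norm $\|\cdot\|_{(\lambda_\star\^t,\vec y_\star\^t)}$ appearing in~\eqref{eq:rel-error}. This identification is what makes the final accuracy conversion clean.

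Next I would instantiate the algorithm of \citet[Algorithms~1 and~2]{Liu20:A}, whose outer loop performs damped projected-Newton steps on the self-concordant $f$ while its inner loop solves each local quadratic (Newton) subproblem by Frank--Wolfe using only a linear maximization oracle for $\tilde\cX$. As already observed after~\eqref{eq:LMO}, an LMO for $\cX$ yields an LMO for $\tilde\cX$, so the required oracle access is available. Their analysis guarantees that driving the optimality gap $f(\lambda,\vec y) - f(\lambda_\star\^t,\vec y_\star\^t)$ below a target $\delta>0$ costs a total of $\poly(1/\delta)$ inner Frank--Wolfe iterations---and hence $\poly(1/\delta)$ LMO calls and arithmetic operations---with constants depending polynomially on the self-concordance parameter and on problem-dependent quantities.

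It then remains to convert a function-gap guarantee into~\eqref{eq:rel-error}. Here I would invoke the standard self-concordance lower bound together with the first-order optimality condition $\langle \nabla f(\lambda_\star\^t,\vec y_\star\^t), (\lambda,\vec y)-(\lambda_\star\^t,\vec y_\star\^t)\rangle \ge 0$, valid for all $(\lambda,\vec y)\in\tilde\cX$ since $(\lambda_\star\^t,\vec y_\star\^t)$ is a constrained minimizer. Combining these yields
\begin{equation*}
    \omega\!\left(\mleft\|\vstack{\lambda}{\vec y} - \vstack{\lambda_\star\^t}{\vec y_\star\^t}\mright\|_{(\lambda_\star\^t,\vec y_\star\^t)}\right) \le f(\lambda,\vec y) - f(\lambda_\star\^t,\vec y_\star\^t),
\end{equation*}
where $\omega(z) = z - \log(1+z)$. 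Since $\omega(z)=\Theta(z^2)$ for small $z$, forcing the gap below $\delta = \omega(\epsilon) = \Theta(\epsilon^2)$ guarantees the local-norm distance is at most $\epsilon$, which is exactly~\eqref{eq:rel-error} for $\epsilon\^t=\epsilon$. Substituting $\delta = \Theta(\epsilon^2)$ into the $\poly(1/\delta)$ complexity above gives the claimed $\poly(1/\epsilon)$ bound on operations and LMO calls.

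The main obstacle is precisely the self-concordant (hence non-Lipschitz-smooth) nature of $f$: its Hessian blows up as one approaches the boundary of the positive orthant, so vanilla Frank--Wolfe---which requires global smoothness---does not apply, and this is exactly the difficulty the local quadratic models of \citet{Liu20:A} are designed to circumvent. The delicate points in making the reduction rigorous are therefore (a) verifying that our composite objective meets their self-concordance hypotheses with explicit parameters, and (b) controlling the problem-dependent constants hidden in $\poly(1/\epsilon)$---most notably a quantitative interior bound certifying that $(\lambda_\star\^t,\vec y_\star\^t)$ is bounded away from $\partial\tilde\cX$, which keeps the relevant self-concordance constants finite and can be extracted from the optimality conditions of the \OFTRL problem.
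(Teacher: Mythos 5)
Your proposal is correct and follows essentially the same route as the paper, which proves \cref{theorem:FW} simply by invoking the Frank--Wolfe (projected) Newton method of \citet[Algorithms 1 and 2]{Liu20:A} on the self-concordant \OFTRL objective over $\tilde\cX$, using the fact that an LMO for $\cX$ yields an LMO for $\tilde\cX$. The details you supply---verifying self-concordance of the composite objective, identifying the Hessian norm with the local norm in \eqref{eq:rel-error}, and converting the function-gap guarantee to the local-norm guarantee via $\omega(z) = z - \log(1+z)$ together with first-order optimality, so that $\delta = \Theta(\epsilon^2)$ suffices---are exactly the steps the paper leaves implicit behind the phrase ``follows readily.''
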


\paragraph{Experiments} Finally, while our main contribution is of theoretical nature, we also support our theory by conducting experiments on some standard extensive-form games (\Cref{section:experiments}). The experiments verify that under~\algoshort the regret of each player grows as $O(\log T)$.

\section{Conclusions}
\label{section:conclusions}

In this paper we developed $\algoshort$, a novel no-regret learning algorithm. We showed that when all players in a general convex game employ $\algoshort$, the regret of each player grows only as $O(\log T)$, thereby significantly extending and strengthening the scope of all prior work. Further, our uncoupled no-regret learning dynamics can be efficiently implemented using, for example, a proximal oracle for the underlying feasible set.

One caveat of our framework applied to the special case of normal-form games is that the dependence on the dimension is linear (\Cref{cor:NFG}) as opposed to logarithmic~\citep{Daskalakis21:Near}. Whether the entropic regularizer---which induces OMWU---can be incorporated into our framework is an important open question. Another interesting avenue for future research would be to explore having access to different types of oracles. For example, is it possible to extend \Cref{theorem:main-detailed} using a \emph{separation oracle} for the underlying set of strategies? If so, the ellipsoid algorithm~\citep{Bubeck15:Convex} would be the obvious candidate en route to implementing $\algoshort$.

\section*{Acknowledgments}

We are grateful to anonymous NeurIPS reviewers for many helpful comments. This material is based on work supported by the National Science Foundation under grants IIS-1718457, IIS-1901403, IIS-1943607, and CCF-1733556, and the ARO under award W911NF2010081. Christian Kroer is supported by the Office of Naval Research Young Investigator Program under grant N00014-22-1-2530.

\bibliography{refs}

\newpage
\appendix
\section{Omitted Proofs}
\label{app:proofs}

In this section we include all of the proofs omitted from the main body. For the convenience of the reader, we will restate each claim before proceeding with its proof. 

\subsection{Preliminary Proofs}

We commence with the proof of \Cref{proposition:unique}.

\propuniqoftrl*
\begin{proof}
    Uniqueness follow immediately from strict convexity. In the rest of the proof we focus on the existence part.
    
    We start by showing that there exists a point $\tilde x\in\tilde\cX$ whose coordinates are all strictly positive. By hypothesis (see \cref{sec:setup}), for every coordinate $\ind \in \range d$, there exists a point $\vec x_\ind$ such that $\vec x_\ind[\ind] > 0$. Hence, by convexity of $\cX \subseteq [0,+\infty)^d$ and by definition of $\tilde\cX$, the point
    \[
        (1, \vec x^\circ) \defeq 
        \mleft(1,~~ \frac{1}{d}\sum_{\ind=1}^d \vec x_\ind\mright).
    \]
    is such that $(1, \vec x^\circ) \in \tilde\cX \cap \bbR_{>0}^{d}$.
    
    Let now $M$ be the $\ell_\infty$ norm of the linear part in the \OFTRL step (\cref{line:oftrl} of \cref{algo:ours}). Then, a \emph{lower bound} on the optimal value $v^\star$ of objective is obtained by plugging in the point $(1,\vec x^\circ)$ at least
    \[
        v^\star &\ge -M(1 + \|\cX\|_1) + \sum_{\ind=1}^d \log \vec x^\circ[\ind].\numberthis{eq:vstar}
    \]
    Let now
    \[
        m \defeq \exp\mleft\{-(2M+d)(1 + \|\cX\|_1) + \sum_{\ind=1}^d \log \vec x^\circ[\ind]\mright\} > 0.\numberthis{eq:m}
    \]
    We will show that any point $(\lambda, \vec y) \notin [m,+\infty) \cap \tilde\cX$ cannot be optimal for the \OFTRL objective. Indeed, take a point $(\lambda, \vec y)\notin [m,+\infty) \cap \tilde\cX$. Then, at least one coordinate of $(\lambda, \vec y)$ is strictly less than $m$. If $\lambda < m$, then the objective value at $(\lambda, \vec y)$ is at most
    \[
        M\lambda + M\|\cX\|_1 + \log \lambda + \sum_{\ind=1}^d \log \vec y[\ind] &\le M (1 + \|\cX\|_1) + \log m + \sum_{\ind=1}^d \log\|\cX\|_1\\
            &\le M(1+\|\cX\|_1) + \log m + d (\|\cX\|_1 - 1)\\
            &< (M+d)(1+\|\cX\|_1) + \log m\\
            &= -M(1 + \|\cX\|_1) + \sum_{\ind=1}^d \log \vec x^\circ[\ind] &&\text{(from \eqref{eq:m})}\\
            &\le v^*, &&\text{(from \eqref{eq:vstar})}
    \]
    where the first inequality follows from upper bounding any coordinate of $\vec y$ with $\|\cX\|_1$, and the second inequality follows from using the inequality $\log z \le z - 1$, valid for all $z \in (0, +\infty)$.
    Similarly, if $\vec y[s] < m$ for some $s \in \range d$, then we can upper bound the objective value at $(\lambda, \vec y)$ as
    \[
        M + M\|\cX\|_1 + \log 1 + \sum_{\ind=1}^d \log \vec y[\ind] &\le M(1 + \|\cX\|_1) + \log m + \sum_{\ind=1}^d \log\|\cX\|_1\\
            &\le M(1+\|\cX\|_1) + (d-1) (\|\cX\|_1 - 1) + \log m\\
            &< (M+d)(1+\|\cX\|_1) + \log m \le v^*.
    \]
    So, in either case, we see that no optimal point can have any coordinate strictly less than $m$. Consequently, the maximizer of the \OFTRL step lies in the set $\mathcal{S} \defeq [m, +\infty)^{d+1} \cap \tilde{\cX}$. Since both $[m, +\infty)^{d+1}$ and $\tilde\cX$ are closed, and since $\tilde\cX$ is bounded by hypothesis, the set $\cal S$ is compact. Furthermore, note that $\cal S$ is nonempty, as $(1,\vec x^\circ) \in \cal S$, as for any $s \in \range d$
    \[
        \log m &= -(2M+d)(1 + \|\cX\|_1) + \sum_{\ind=1}^d \log \vec x^\circ[\ind]\\
            &\le -(2M+d)(1 + \|\cX\|_1) + \log \vec x^\circ[s] + (d-1) \log \|\cX\|_1 \\
            &\le -(2M+d)(1 + \|\cX\|_1) + \log \vec x^\circ[s] + (d-1)(\|\cX\|_1 - 1) \\
            &\le \log \vec x^\circ[s],
    \]
    implying that $(1,\vec x^\circ) \in [m, +\infty)^{d+1}$. Since $\cal S$ is compact and nonempty and the objective function is continuous, the optimization problem attains an optimal solution on $\cal S$ by virtue of Weierstrass' theorem.
\end{proof}

\nonnegative*
\begin{proof}
      First, by definition of $\tilde{\vec\ut}\^t$ in \cref{line:lift}, it follows that for any $t$,
      \[
            \left\langle \tilde{\vec\ut}\^t, \vstack{\lambda\^t}{\vec y\^t}\right\rangle =  \left\langle \tilde{\vec\ut}\^t, \vstack{1}{\vec x\^t}\right\rangle = 0.
      \]
      As a result, we have that $\max\{0, \reg^T\}$ is equal to
      \[
            &\max\mleft\{0, \max_{\vec{x}^* \in\cX} \sum_{t=1}^T \langle \vec\ut\^t, \vec{x}^* - \vec x\^t\rangle\mright\} 
            =
            \max\mleft\{0, \max_{\vec{x}^* \in\cX} \sum_{t=1}^T \mleft\langle \tilde{\vec\ut}\^t, \vstack{1}{\vec{x}^*} - \vstack{1}{\vec x\^t}\mright\rangle\mright\}\\
            =
            &\max\mleft\{0, \max_{\vec{x}^* \in \cX} \sum_{t=1}^T \mleft\langle \tilde{\vec\ut}\^t, \vstack{1}{\vec{x}^*}\mright\rangle\mright\}
            =
            \max_{(\lambda^*, \vec{y}^* )\in\tilde\cX} \sum_{t=1}^T \mleft\langle \tilde{\vec\ut}\^t, \vstack{\lambda^*}{\vec{y}^*}\mright\rangle\\
            =
            &\max_{(\lambda^*, \vec{y}^*)\in\tilde\cX} \sum_{t=1}^T \mleft\langle \tilde{\vec\ut}\^t, \vstack{\lambda^*}{\vec{y}^* } - \vstack{\lambda\^t}{\vec y\^t}\mright\rangle
            =
            \tildereg^T,
      \]
      as we wanted to show.
\end{proof}

\subsection{Analysis of OFTRL with Logarithmic Regularizer}

For notational convenience, we define the log-regularizer $\psi: \tilde\cX \rightarrow \bbR_{\geq 0}$ as 
\[
    \ru{\pz} \defeq -\frac{1}{\eta}\sum_{\ind=1}^{d+1}\log\tilde{\vec{x}}[\ind],
\]
and its induced Bregman divergence
\[
      \bg{\pz}{\tilde{\vec z}} \defeq \frac{1}{\eta}\sum^{d+1}_{\ind=1}h\left(\frac{\tilde{\vec{x}}[\ind]}{\tilde{\vec z}[\ind]}\right),
      \quad\text{where}\; h(a)=a-1-\ln(a).
\]
Moreover, we define
\begin{equation}\label{eq:def_p}
      \p{t} = \argmax_{\pz\in\tilde\cX} -F_t(\pz)= \argmin_{\pz\in\tilde\cX} F_t(\pz), \quad\text{where}\; F_t(\pz) = -\ir{\Ls{t}+\ls{t-1},\pz}+\ru{\pz}.
\end{equation}
We note that $F_t$ is a convex function for each $t$ and $\p{t}$ is exactly equal to $\vstack{\lambda\^t}{\vec y\^t}$ computed by Algorithm~\ref{algo:ours}.
Further,  we define an auxiliary sequence $\{\pp{t}\}_{t=1,2,\ldots}$ defined as follows.
\begin{equation}\label{eq:def_pp}
      \pp{t} = \argmax_{\pz\in\tilde\cX} -G_{t}(\pz)=\argmin_{\pz\in\tilde\cX} G_{t}(\pz), \quad\text{where}\; G_t(\pz) = -\ir{\Ls{t},\pz}+\ru{\pz}.
\end{equation}
Similarly, $G_t$ is a convex function for each $t$.
We also recall the primal and dual norm notation:
\[
      \pnt{\ppz}&=\sum^{d+1}_{\ind=1}\left(\frac{\tilde{\vec{z}}[\ind]}{\tilde{\vec{x}}\^t[\ind]}\right)^2, \quad \dnt{\tilde{\vec z}}=\sum^{d+1}_{\ind=1}\left(\tilde{\vec{x}}\^t[\ind]\tilde{\vec z}[\ind]\right)^2.
\]
Finally, for a $(d+1) \times (d+1)$ positive definite matrix $\mat{M}$, we use $\|\tilde{\vec z}\|_\mat{M}$ to denote the induced quadratic norm $\sqrt{\tilde{\vec z}^\top \mat{M} \tilde{\vec z}}$. We are now ready to establish \cref{thm:RVU}.

\proplocalrvu*
\begin{proof}[Proof of \cref{thm:RVU}]
      For any comparator $\cp \in \tilde\cX$, define $\cp'=\frac{T-1}{T}\cdot \cp+\frac{1}{T}\cdot\p{1} \in \tilde\cX$, where we recall $\p{1} = \argmin_{\pz\in\tilde\cX} F_1(\pz) = \argmin_{\pz\in\tilde\cX} \ru{\pz}$. Then, we have
      \[
            \sum_{t=1}^{T}\ir{\cp-\p{t},\ls{t}} &= \sum_{t=1}^{T}\ir{\cp-\cp',\ls{t}}+\sum_{t=1}^{T}\ir{\cp'-\p{t},\ls{t}} \\
            &= \frac{1}{T}\sum_{t=1}^T \ir{\cp-\p{1}, \ls{t}} +\sum_{t=1}^{T}\ir{\cp'-\p{t},\ls{t}}
            \\
            &\le 4 +\sum_{t=1}^{T}\ir{\cp'-\p{t},\ls{t}},
      \]
      where the last inequality follows from Cauchy-Schwarz together with the assumption that $\|{\vec\ut}\^t\|_\infty \le \frac{1}{\| \cX \|_1}$.
      
      Now, by standard Optimistic FTRL analysis (see Lemma~\ref{lem:oftrl regret}), the last term $\sum_{t=1}^{T}\ir{\cp'-\p{t},\ls{t}}$ (cumulative regret against $\cp'$) is bounded by
      \[
      \sum_{t=1}^{T}\ir{\cp'-\p{t},\ls{t}} &\le
            \ru{\cp'}-\ru{\p{1}}+\sum^T_{t=1}\ir{\pp{t+1}-\p{t},\ls{t}-\ls{t-1}} \\
            &\hspace{3.4cm}-\sum^T_{t=1}\left(\bg{\p{t}}{\pp{t}}+\bg{\pp{t+1}}{\p{t}}\right).
      \]
      For the term $\ru{\cp'}-\ru{\p{1}}$, a direct calculation using definitions shows
      \[
            \ru{\cp'}-\ru{\p{1}} = \frac{1}{\eta}\sum_{i=1}^{d+1} \log \frac{\tilde{\vec x}^{(1)}[i]}{\cp'[i]} \leq \frac{d+1}{\eta}\log T.
      \]
      For the other terms, we apply Lemma~\ref{lem:stab} and Lemma~\ref{lem:neg}, which completes the proof.
\end{proof}

\begin{lemma}\label{lem:oftrl regret}
      The update rule~\eqref{eq:def_p} ensures the following for any $\cp \in \tilde\cX$:
      \[
            \sum^T_{t=1}\ir{\cp-\p{t},\ls{t}} &\le \ru{\cp}- \ru{\p{1}}+\sum^T_{t=1}\ir{\pp{t+1}-\p{t},\ls{t}-\ls{t-1}} \\
            &\quad\quad -\sum^T_{t=1}\left(\bg{\p{t}}{\pp{t}}+\bg{\pp{t+1}}{\p{t}}\right).
      \]
\end{lemma}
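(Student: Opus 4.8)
The plan is to establish this as the standard Optimistic-FTRL regret decomposition, working directly with the two minimizer sequences $\{\p{t}\}$ and $\{\pp{t}\}$ from \eqref{eq:def_p}--\eqref{eq:def_pp} and telescoping the \emph{optimal values} of the FTRL objectives $G_t$. The only analytic ingredient is that $F_t$ and $G_t$ are each of the form ``linear $+\,\ru{\cdot}$'', so their minimizers obey a Bregman-upgraded first-order optimality inequality: since by \cref{proposition:unique} the iterates lie in $\bbR^{d+1}_{>0}$, where $\ru{\cdot}$ is differentiable, for every $\pz\in\tilde\cX$ one has $F_t(\pz)\ge F_t(\p{t})+\bg{\pz}{\p{t}}$ and $G_t(\pz)\ge G_t(\pp{t})+\bg{\pz}{\pp{t}}$. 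Each follows by writing $\ru{\pz}-\ru{\text{minimizer}}$ through its Bregman remainder and discarding the nonnegative first-order term.

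The engine of the proof is a pair of algebraic identities that hold because $\Ls{t+1}=\Ls{t}+\ls{t}$, namely $F_t(\pz)=G_t(\pz)-\ir{\ls{t-1},\pz}$ and $G_{t+1}(\pz)=F_t(\pz)-\ir{\ls{t}-\ls{t-1},\pz}$. Writing $g_t\defeq G_t(\pp{t})$ and $f_t\defeq F_t(\p{t})$ for the optimal values and applying the two optimality inequalities at $\pz=\p{t}$ and $\pz=\pp{t+1}$ respectively, I would obtain the single-step estimates
\[
    f_t \ge g_t + \bg{\p{t}}{\pp{t}} - \ir{\ls{t-1},\p{t}}, \qquad g_{t+1} \ge f_t + \bg{\pp{t+1}}{\p{t}} - \ir{\ls{t}-\ls{t-1},\pp{t+1}}.
\]
Chaining these gives a lower bound on the increment $g_{t+1}-g_t$ carrying exactly the two Bregman terms $\bg{\p{t}}{\pp{t}}$ and $\bg{\pp{t+1}}{\p{t}}$, together with two linear terms.

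Summing over $t=1,\dots,T$ telescopes the left side to $g_{T+1}-g_1$, which I would bound using $g_1=\ru{\pp{1}}$ (as $\Ls{1}=\vec 0$) and $g_{T+1}\le -\sum_t\ir{\ls{t},\cp}+\ru{\cp}$ (feasibility of $\cp$ and minimality of $\pp{T+1}$). Rearranging to isolate $\sum_t\ir{\ls{t},\cp}$ and subtracting $\sum_t\ir{\ls{t},\p{t}}$ produces the regret $\sum_t\ir{\cp-\p{t},\ls{t}}$ on the left; on the right, the leftover linear pieces collapse via $\ir{\ls{t-1}-\ls{t},\p{t}}+\ir{\ls{t}-\ls{t-1},\pp{t+1}}=\ir{\pp{t+1}-\p{t},\ls{t}-\ls{t-1}}$, reproducing precisely the prediction-error term of the statement. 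I would close by noting $\pp{1}=\p{1}$, since both equal $\argmin_{\tilde\cX}\ru{\cdot}$ using $\ls{0}=\vec 0$, so $\ru{\pp{1}}=\ru{\p{1}}$. The main obstacle here is not analytic but bookkeeping: keeping the signs straight when recombining the four linear pieces after the subtraction, so that they fuse into the single stabilizing term $\ir{\pp{t+1}-\p{t},\ls{t}-\ls{t-1}}$ rather than leaving a residual.
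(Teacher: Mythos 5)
Your proposal is correct and follows essentially the same route as the paper's proof: the same Bregman-strengthened first-order optimality inequality for the minimizers of $F_t$ and $G_t$, the same one-step chaining of optimal values via the identities relating $F_t$, $G_t$, and $G_{t+1}$, and the same telescoping with the endpoint bounds $G_1(\pp{1}) = \ru{\p{1}}$ and $G_{T+1}(\pp{T+1}) \le -\ir{\cp,\Ls{T+1}} + \ru{\cp}$. The only difference is cosmetic bookkeeping (the paper recombines the linear terms per step before summing, while you sum first and then fuse them into $\ir{\pp{t+1}-\p{t},\ls{t}-\ls{t-1}}$), so the two arguments are the same proof.
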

\begin{proof}
      First note that for any convex function $F: \tilde\cX \rightarrow \bbR$ and a minimizer $\pz^\star$, we have for any $\pz \in \tilde\cX$:
      \[
            F(\pz^\star) = F(\pz) - \ir{\nabla F(\pz^\star),  \pz - \pz^\star} - D_F(\pz, \pz^\star)
            \leq F(\pz)- D_F(\pz, \pz^\star),
      \]
      where $D_F$ is the Bregman Divergence induced by $F$ and the inequality is by the first-order optimality.
      Using this fact and the optimality of $\pp{t}$, we have
      \[
            G_t(\pp{t}) &\le G_t(\p{t}) -\bg{\p{t}}{\pp{t}}\\
            &=F_t(\p{t})+\ir{\p{t},\ls{t-1}}-\bg{\p{t}}{\pp{t}}
      \]
      Similarly, using the optimality of $\p{t}$, we have
      \[
            F_t(\p{t}) &\le F_t(\pp{t+1}) -\bg{\pp{t+1}}{\p{t}}\\
            &= G_{t+1}(\pp{t+1})+\ir{\pp{t+1},\ls{t}-\ls{t-1}}-\bg{\pp{t+1}}{\p{t}}
      \]
      Combining the inequalities and summing over $t$, we have
      \[
            G_1(\pp{1}) &\le G_{T+1}(\pp{T+1}) +\sum^T_{t=1}\left(\ir{\p{t},\ls{t}}+\ir{\pp{t+1}-\p{t},\ls{t}-\ls{t-1}}\right)\\
            &\hspace{5cm}+\sum^T_{t=1}\left(-\bg{\p{t}}{\pp{t}}-\bg{\pp{t+1}}{\p{t}}\right).
      \]
      Observe that $G_1(\pp{1})=\ru{\p{1}}$ and $G_{T+1}(\pp{T+1}) \le -\ir{\cp,\Ls{T+1}}+\ru{\cp}$. Rearranging then proves the lemma.
\end{proof}

\begin{lemma}\label{lem:stab}
      If $\eta\le \etamin$, then we have
      \begin{align}
            \pnt{\pp{t+1}-\p{t}} & \le 5\eta\dnt{\ls{t}-\ls{t-1}} \le 10\sqrt{2}\eta \le 15\eta, \label{eq:stab1} \\
            \pnt{\p{t+1}-\p{t}}  & \le 5\eta\dnt{2\ls{t}-\ls{t-1}} \le 15\sqrt{2}\eta \le 22\eta. \label{eq:stab2}
      \end{align}
      \begin{proof}
            The second part of both inequalities is clear by definitions:
            \[
                  \dnt{\ls{t}-\ls{t-1}}^2 &= \left(\lambda\^t\left(\ir{\vec x\^t, \vec \ut\^t} - \ir{\vec x\^{t-1}, \vec \ut\^{t-1}}\right)\right)^2
                  + \sum_{\ind=1}^d \left(\vec y\^t[\ind] \left(\ut\^t[\ind] - \ut\^{t-1}[\ind]\right) \right)^2 \\
                  &\leq 4(\lambda\^t)^2 + \frac{4}{\|\cX\|_1^2}\sum_{\ind=1}^d \mleft(\vec y\^t[\ind]\mright)^2 \leq 8,
            \]
            where we use $\ir{\vec x\^\tau, \vec \ut\^\tau}\le \|\vec{x}\^\tau \|_1 \|\vec{u}\^\tau \|_\infty \leq 1$ and $|\vec{u}\^\tau[r]|\le\frac{1}{\|\cX\|_1}$ for any time $\tau$ and any coordinate $r$ by the assumption, and similarly,
            \[
                  \dnt{2\ls{t}-\ls{t-1}}^2 &= \left(\lambda\^t\left(2\ir{\vec x\^t, \vec \ut\^t} - \ir{\vec x\^{t-1}, \vec \ut\^{t-1}}\right)\right)^2
                  + \sum_{\ind=1}^d \left(\vec{y}\^t[\ind] \left(2\ut\^t[\ind] - \ut\^{t-1}[\ind]\right) \right)^2 \\
                  &\leq 9(\lambda\^t)^2 + \frac{9}{\|\cX\|_1^2}\sum_{\ind=1}^d \mleft(\vec y\^t[\ind]\mright)^2 \leq 18
            \]
            To prove the first inequality in Eq.~\eqref{eq:stab1},
            let $\mathcal{E}_t = \mleft\{\pz: \pnt{\pz-\p{t}}\le 5\eta\dnt{\ls{t}-\ls{t-1}}\mright\}$.
            Noticing that $\pp{t+1}$ is the minimizer of the convex function $G_{t+1}$,
            to show $\pp{t+1} \in \mathcal{E}_t$, it suffices to show that for all $\cp$ on the boundary of $\mathcal{E}_t$, we have $G_{t+1}(\cp)\ge G_{t+1}(\p{t})$.
            Indeed,
            using Taylor's theorem, for any such $\cp$, there is a point $\vec\xi$ on the line segment between $\p{t}$ and $\cp$ such that
            \begin{align*}
                  G_{t+1}(\cp) & =G_{t+1}(\p{t})+\ir{\nabla G_{t+1}(\p{t}),\cp-\p{t}}+\frac{1}{2}\nms{\cp-\p{t}}{\nabla^2G_{t+1}(\vec\xi)}                     \\
                               & =G_{t+1}(\p{t})-\ir{\ls{t}-\ls{t-1},\cp-\p{t}}+\ir{\nabla F_t(\p{t}),\cp-\p{t}}+\frac{1}{2}\pnms{\cp-\p{t}}{\vec\xi}          \\
                               & \ge G_{t+1}(\p{t})-\ir{\ls{t}-\ls{t-1},\cp-\p{t}}+\frac{1}{2}\pnms{\cp-\p{t}}{\vec\xi}\tag{by the optimality of $\p{t}$}      \\
                               & \ge G_{t+1}(\p{t})-\dnt{\ls{t}-\ls{t-1}}\pnt{\cp-\p{t}} +\frac{1}{2}\pnms{\cp-\p{t}}{\vec\xi}. \tag{by H\"older's inequality} \\
                               & \ge G_{t+1}(\p{t})-\dnt{\ls{t}-\ls{t-1}}\pnt{\cp-\p{t}} +\frac{2}{9\eta}\pnt{\cp-\p{t}}^2 \tag{$\star$}                       \\
                               & =G_{t+1}(\p{t}) + \frac{5}{9}\eta\dnt{\ls{t}-\ls{t-1}}^2  \tag{$\pnt{\cp-\p{t}}= 5\eta\dnt{\ls{t}-\ls{t-1}}$}                 \\
                               & \ge G_{t+1}(\p{t}).
            \end{align*}
            Here, the inequality $(\star)$ holds because Lemma~\ref{lem:multi_stab} (together with the condition $\eta \le \etamin$) shows $\frac{1}{2}\tilde{\vec x}\^t[i] \leq \cp[i] \leq \frac{3}{2}\tilde{\vec x}\^t[i]$, which implies $\frac{1}{2}\tilde{\vec x}\^t[i] \leq \vec\xi[i] \leq \frac{3}{2}\tilde{\vec x}\^t[i]$ as well, and thus $\nabla^2 \ru{\vec\xi}\succeq \frac{4}{9}\nabla^2 \ru{\p{t}}$.
            This finishes the proof for Eq.~\eqref{eq:stab1}.
            The first inequality of Eq.~\eqref{eq:stab2} can be proven in the same manner.
      \end{proof}
\end{lemma}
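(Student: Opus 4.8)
The plan is to prove both inequalities by a localization argument: the minimizer of a locally strongly convex objective cannot move far, in local norm, when the objective is perturbed by a small linear term. I would first dispatch the trailing numerical bounds ($\le 10\sqrt2\,\eta$ and $\le 15\sqrt2\,\eta$) by directly estimating the dual local norms. Recalling that the lifted utility is $\ls{t} = \vstack{-\ir{\ut\^t,\vec x\^t}}{\ut\^t}$ and that $\dnt{\,\cdot\,}$ weights the $0$-th coordinate by $\lambda\^t$ and the $\ind$-th coordinate by $\vec y\^t[\ind]$, I would expand $\dnt{\ls{t}-\ls{t-1}}^2$ coordinatewise. Using $\lambda\^t\le 1$, the hypotheses $\|\ut\^\tau\|_\infty\le 1/\|\cX\|_1$ and $|\ir{\ut\^\tau,\vec x\^\tau}|\le 1$, and the key identity $\sum_\ind (\vec y\^t[\ind])^2\le\big(\sum_\ind \vec y\^t[\ind]\big)^2=(\lambda\^t\|\vec x\^t\|_1)^2\le\|\cX\|_1^2$, each contribution is $O(1)$, giving the constants $8$ and $18$ and hence $\dnt{\ls{t}-\ls{t-1}}\le 2\sqrt2$ and $\dnt{2\ls{t}-\ls{t-1}}\le 3\sqrt2$; multiplying by $5\eta$ recovers the stated numbers.

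For the main inequality $\pnt{\pp{t+1}-\p{t}}\le 5\eta\dnt{\ls{t}-\ls{t-1}}$, I would use that $\pp{t+1}=\argmin_{\pz}G_{t+1}(\pz)$ with $G_{t+1}$ convex, and show the minimizer stays inside the local-norm ball $\mathcal E_t=\{\pz:\pnt{\pz-\p{t}}\le 5\eta\dnt{\ls{t}-\ls{t-1}}\}$. By convexity it suffices to certify $G_{t+1}(\cp)\ge G_{t+1}(\p{t})$ for every $\cp$ on the boundary of $\mathcal E_t$. For such $\cp$ I would Taylor-expand $G_{t+1}$ to second order around $\p{t}$, with Hessian evaluated at an intermediate point $\vec\xi$. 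The first-order term splits as $\ir{\nabla G_{t+1}(\p{t}),\cp-\p{t}}=\ir{\nabla F_t(\p{t}),\cp-\p{t}}-\ir{\ls{t}-\ls{t-1},\cp-\p{t}}$, since $G_{t+1}$ and $F_t$ differ only by the linear piece $\ir{\ls{t}-\ls{t-1},\cdot}$; optimality of $\p{t}$ for $F_t$ makes the first bracket $\ge 0$, and H\"older bounds the second by $-\dnt{\ls{t}-\ls{t-1}}\pnt{\cp-\p{t}}$. Lower bounding the quadratic term by $\frac{2}{9\eta}\pnt{\cp-\p{t}}^2$ and evaluating on the boundary yields $G_{t+1}(\cp)\ge G_{t+1}(\p{t})+\big(-5+\tfrac{50}{9}\big)\eta\dnt{\ls{t}-\ls{t-1}}^2=G_{t+1}(\p{t})+\tfrac59\eta\dnt{\ls{t}-\ls{t-1}}^2\ge G_{t+1}(\p{t})$.

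The crux, which I expect to be the main obstacle, is producing the Hessian factor $\frac{2}{9\eta}$. Because $\ru{\cdot}$ is a separable logarithmic regularizer, $\nabla^2\ru{\vec\xi}$ is diagonal with entries proportional to $1/\vec\xi[\ind]^2$, and this Hessian blows up near the boundary of the nonnegative orthant; so I genuinely need the intermediate point $\vec\xi$ to stay within a fixed multiplicative factor of $\p{t}$ coordinatewise. Concretely, if $\tfrac12\p{t}[\ind]\le\vec\xi[\ind]\le\tfrac32\p{t}[\ind]$ then $\nabla^2\ru{\vec\xi}\succeq\frac49\nabla^2\ru{\p{t}}$, and since $\nabla^2\ru{\cdot}=\frac1\eta\nabla^2\cR$ this gives $\frac12\|\cp-\p{t}\|^2_{\nabla^2\ru{\vec\xi}}\ge\frac{2}{9\eta}\pnt{\cp-\p{t}}^2$. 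This coordinatewise control is precisely the separately established multiplicative-stability property applied to points of $\mathcal E_t$ (valid once $\eta\le\etamin$), so the argument depends essentially on that lemma. Finally, the bound $\pnt{\p{t+1}-\p{t}}\le 5\eta\dnt{2\ls{t}-\ls{t-1}}$ follows by the identical boundary argument comparing the minimizers $\p{t+1}$ of $F_{t+1}$ and $\p{t}$ of $F_t$; here, using $\Ls{t+1}=\Ls{t}+\ls{t}$, the perturbing linear term is $\nabla F_{t+1}-\nabla F_t=-(2\ls{t}-\ls{t-1})$, which is exactly why $2\ls{t}-\ls{t-1}$ replaces $\ls{t}-\ls{t-1}$.
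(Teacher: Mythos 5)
Your proposal is correct and takes essentially the same approach as the paper's proof: the same coordinatewise estimate of the dual local norms (yielding the constants $8$ and $18$), and the same localization argument certifying $G_{t+1}(\cp)\ge G_{t+1}(\p{t})$ on the boundary of $\mathcal{E}_t$ via a second-order Taylor expansion, optimality of $\p{t}$ for $F_t$, H\"older's inequality, and the Hessian lower bound $\nabla^2\ru{\vec\xi}\succeq\frac{4}{9}\nabla^2\ru{\p{t}}$ obtained from the coordinatewise multiplicative control of points in $\mathcal{E}_t$ (the paper's Lemma~\ref{lem:multi_stab}). Your identification of the perturbing linear term $-(2\ls{t}-\ls{t-1})$ for the second inequality is exactly what the paper leaves implicit in its ``proven in the same manner'' remark.
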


\propftrlstab*
\begin{proof}
    The statement is proved in \cref{lem:stab}.
\end{proof}

\begin{lemma}\label{lem:multi_stab}
      If $\cp$ satisfies $\|\cp - \p{t}\|_t \leq \frac{1}{2}$, then $\frac{1}{2}\tilde{\vec x}\^t[i] \leq \cp[i] \leq \frac{3}{2}\tilde{\vec x}\^t[i]$ for every coordinate $i$.
\end{lemma}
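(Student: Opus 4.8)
The plan is to unfold the definition of the local norm $\pnt{\cdot}$ and observe that the hypothesis controls the relative deviation of $\cp$ from $\p{t} = \tilde{\vec x}\^t$ in \emph{every} coordinate at once. Writing $\tilde{\vec x}\^t = (\lambda\^t, \vec y\^t) \in \bbR^{d+1}_{>0}$---where the strict positivity of all coordinates is guaranteed by \cref{proposition:unique}---the squared local norm centered at $\p{t}$ is
\[
    \pnt{\cp - \p{t}}^2 = \sum_{\ind=1}^{d+1}\mleft(\frac{\cp[\ind] - \tilde{\vec x}\^t[\ind]}{\tilde{\vec x}\^t[\ind]}\mright)^2.
\]

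First I would feed the hypothesis $\pnt{\cp - \p{t}} \le \frac{1}{2}$ into this identity, so that the full sum on the right is at most $\frac{1}{4}$. Since every summand is nonnegative, each one is individually dominated by the whole sum, which forces $\mleft(\frac{\cp[\ind] - \tilde{\vec x}\^t[\ind]}{\tilde{\vec x}\^t[\ind]}\mright)^2 \le \frac{1}{4}$ for every coordinate $\ind \in \range{d+1}$. Taking square roots then yields the coordinatewise relative bound $\mleft|\frac{\cp[\ind]}{\tilde{\vec x}\^t[\ind]} - 1\mright| \le \frac{1}{2}$, or equivalently $-\frac{1}{2} \le \frac{\cp[\ind]}{\tilde{\vec x}\^t[\ind]} - 1 \le \frac{1}{2}$.

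The last step converts this relative bound into the desired two-sided multiplicative envelope. Because $\tilde{\vec x}\^t[\ind] > 0$, multiplying the displayed chain through by $\tilde{\vec x}\^t[\ind]$ preserves the direction of both inequalities, and adding $\tilde{\vec x}\^t[\ind]$ gives exactly $\frac{1}{2}\tilde{\vec x}\^t[\ind] \le \cp[\ind] \le \frac{3}{2}\tilde{\vec x}\^t[\ind]$, as claimed.

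There is essentially no genuine obstacle in this argument; it is a short direct computation. The only two points worth flagging are, first, the elementary observation that a bound on a \emph{sum} of squares is inherited by each summand (this is what lets the global norm hypothesis act coordinatewise), and second, the strict positivity of the center $\tilde{\vec x}\^t$ supplied by \cref{proposition:unique}---without it both the division defining the local norm and the final rescaling would be ill-posed. Both ingredients are already established, so the proof amounts to unwinding definitions.
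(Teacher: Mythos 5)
Your proof is correct and matches the paper's own argument, which is the same one-line observation: the local norm is a sum of squared coordinatewise relative deviations, so a bound on the norm is inherited by each coordinate, giving $\bigl|\cp[i]/\tilde{\vec x}\^t[i] - 1\bigr| \le \frac{1}{2}$ and hence the two-sided multiplicative bound. Your version merely spells out the intermediate steps (and the positivity of the center from \cref{proposition:unique}) that the paper leaves implicit.
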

\begin{proof}
      By definition,  $\|\cp - \p{t}\|_t \leq \frac{1}{2}$ implies for any $i$, $\frac{\left| \cp[i]-\tilde{\vec x}\^t[i]\right|}{\tilde{\vec x}\^t[i]}\le \frac{1}{2}$, and thus $\frac{1}{2}\tilde{\vec x}\^t[i] \leq \cp[i] \leq \frac{3}{2}\tilde{\vec x}\^t[i]$.
\end{proof}

\begin{lemma}\label{lem:neg}
      If $\eta \leq \frac{1}{50}$, then we have
      \begin{align*}
            \sum^T_{t=1}\left(\bg{\p{t}}{\pp{t}}+\bg{\pp{t+1}}{\p{t}}\right)\ge\frac{1}{27\eta}\sum^{T-1}_{t=1}\pnt{\p{t+1}-\p{t}}^2.
      \end{align*}
\end{lemma}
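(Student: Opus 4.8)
The plan is to discard all but two carefully chosen Bregman terms per index and recombine them through a triangle inequality. Since every summand $\bg{\p{t}}{\pp{t}}$ and $\bg{\pp{t+1}}{\p{t}}$ is nonnegative, it suffices, for each $t\in\{1,\dots,T-1\}$, to retain the term $\bg{\pp{t+1}}{\p{t}}$ (taken from the second family at index $t$) together with $\bg{\p{t+1}}{\pp{t+1}}$ (taken from the first family at index $t+1$), and to show that their sum dominates $\frac{1}{27\eta}\pnt{\p{t+1}-\p{t}}^2$. These two divergences govern exactly the two legs $\p{t}\to\pp{t+1}$ and $\pp{t+1}\to\p{t+1}$ of a path from $\p{t}$ to $\p{t+1}$, so once each leg is converted into a squared local norm they can be reassembled into the desired displacement. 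Across distinct $t$ these pairs use disjoint terms, and the unused terms (both boundary terms and every $\bg{\p{t}}{\pp{t}}$) are simply dropped.

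Next I would convert each retained divergence into a squared local norm using the separable structure of $\psi$. Writing $\bg{\tilde{\vec p}}{\tilde{\vec q}}=\frac1\eta\sum_{\ind}h(\tilde{\vec p}[\ind]/\tilde{\vec q}[\ind])$ with $h(a)=a-1-\ln a$, and substituting $a=1+r$ for the coordinatewise relative deviation $r$, one has $h(1+r)=r-\ln(1+r)$ with $h''(a)=1/a^2$; hence, as long as every ratio $\tilde{\vec p}[\ind]/\tilde{\vec q}[\ind]$ stays within a fixed band around $1$, Taylor's theorem yields a pointwise bound $h(a)\ge c\,(a-1)^2$ with an explicit constant (for instance $c=\tfrac29$ on the band $[\tfrac12,\tfrac32]$). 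This turns each divergence into a constant multiple of the squared local norm of the displacement centered at the divergence's second argument, i.e. $\bg{\pp{t+1}}{\p{t}}\gtrsim\frac1\eta\pnt{\pp{t+1}-\p{t}}^2$ and $\bg{\p{t+1}}{\pp{t+1}}\gtrsim\frac1\eta\|\p{t+1}-\pp{t+1}\|_{\pp{t+1}}^2$.

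The ratios are kept inside the band by multiplicative stability. Eq.~\eqref{eq:stab1} and Eq.~\eqref{eq:stab2} give $\pnt{\pp{t+1}-\p{t}}\le 15\eta$ and $\pnt{\p{t+1}-\p{t}}\le 22\eta$, both below $\tfrac12$ once $\eta\le\etamin$; since the local norm upper bounds every coordinatewise relative deviation, \cref{lem:multi_stab} shows that each coordinate of $\pp{t+1}$ and of $\p{t+1}$ is within a constant factor of the corresponding coordinate of $\p{t}$. This validates the $h$-bound and, crucially, makes the Hessians $\nabla^2\ru{\pp{t+1}}$ and $\nabla^2\ru{\p{t}}$ spectrally equivalent up to explicit constants, so the norm $\|\cdot\|_{\pp{t+1}}$ can be replaced by $\pnt{\cdot}$ at the cost of a constant factor. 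Both legs are then expressed as constant multiples of $\pnt{\cdot}^2$ centered at $\p{t}$.

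Finally I would recombine the two legs: the triangle inequality $\pnt{\p{t+1}-\p{t}}\le\pnt{\p{t+1}-\pp{t+1}}+\pnt{\pp{t+1}-\p{t}}$ followed by $(a+b)^2\le 2(a^2+b^2)$ lower-bounds the retained pair by a constant multiple of $\pnt{\p{t+1}-\p{t}}^2$; summing over $t$ and discarding the leftover nonnegative terms finishes the argument. I expect the only real difficulty to be constant bookkeeping. The second leg $\p{t+1}-\pp{t+1}$ is most naturally measured in the $\pp{t+1}$-centered norm, and at the boundary value $\eta=\etamin$ its coordinate ratios against $\p{t}$ can be pushed slightly outside the cleanest $[\tfrac12,\tfrac32]$ band, so the constant in the $h$-bound, the constant from the change of center, and the factor lost to the triangle inequality must be tracked jointly — and the stability radii must be small enough — to land the final coefficient at exactly $\frac1{27\eta}$ rather than a weaker one.
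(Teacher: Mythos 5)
Your proposal is correct and follows essentially the same route as the paper's proof: keep the pairs $\bg{\p{t+1}}{\pp{t+1}}+\bg{\pp{t+1}}{\p{t}}$, lower-bound $h$ by a quadratic on a band guaranteed by \cref{lem:stab} and \cref{lem:multi_stab}, recenter the norms at $\p{t}$, and recombine the two legs via the triangle inequality with $(a+b)^2\le 2(a^2+b^2)$. The constant-bookkeeping subtlety you flag is resolved in the paper exactly as you anticipate: since the cross ratio $\p{t+1}[i]/\pp{t+1}[i]$ only lies in $[\tfrac13,3]$, the paper uses the weaker bound $h(y)\ge\frac{(y-1)^2}{6}$ on that wider band, and the factors $\frac16\cdot\frac49\cdot\frac12$ then land precisely on $\frac{1}{27}$.
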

\begin{proof}
      Recall $h(a)=a-1-\ln(a)$ and  $\bg{\pz}{\tilde{\vec z}}=\frac{1}{\eta}\sum^{d+1}_{i=1}h\left(\frac{\tilde{\vec x}[i]}{ \tilde{\vec z}[i]}\right)$. We proceed as
      \begin{align*}
            \mathrlap{\sum^T_{t=1}\left(\bg{\p{t}}{\pp{t}}+\bg{\pp{t+1}}{\p{t}}\right)}\hspace{2cm}                                                                                                                                                                                            \\
             & \ge\sum^{T-1}_{t=1}\left(\bg{\p{t+1}}{\pp{t+1}}+\bg{\pp{t+1}}{\p{t}}\right)                                                                                                                                                                                                     \\
             & =\frac{1}{\eta}\sum^{T-1}_{t=1}\sum^{d+1}_{i=1}\left(h\left(\frac{\tilde{\vec x}\^{t+1}[i]}{\pp{t+1}[i]}\right)+h\left(\frac{\pp{t+1}[i]}{\tilde{\vec x}\^{t}[i]}\right)\right)                                                                                                 \\
             & \ge\frac{1}{6\eta}\sum^{T-1}_{t=1}\sum^{d+1}_{i=1}\left(\frac{(\tilde{\vec x}\^{t+1}[i]-\pp{t+1}[i])^2}{\left(\pp{t+1}[i]\right)^2}+\frac{(\pp{t+1}[i]-\tilde{\vec x}\^{t}[i])^2}{(\tilde{\vec x}\^{t}[i])^2}\right)\tag{$h(y)\ge\frac{(y-1)^2}{6}$ for $y\in[\frac{1}{3}, 3]$} \\
             & \ge\frac{2}{27\eta}\sum^{T-1}_{t=1}\sum^{d+1}_{i=1}\left(\frac{(\tilde{\vec x}\^{t+1}[i]-\pp{t+1}[i])^2}{\left(\tilde{\vec x}\^{t}[i]\right)^2}+\frac{(\pp{t+1}[i]-\tilde{\vec x}\^{t}[i])^2}{(\tilde{\vec x}\^{t}[i])^2}\right)                                                \\
             & \ge\frac{1}{27\eta}\sum^{T-1}_{t=1}\sum^{d+1}_{i=1}\left(\frac{(\tilde{\vec x}\^{t+1}[i]-\tilde{\vec x}\^{t}[i])^2}{\left(\tilde{\vec x}\^{t}[i]\right)^2}\right)=\frac{1}{27\eta}\sum^{T-1}_{t=1}\pnt{\p{t+1}-\p{t}}^2.
      \end{align*}

      Here, the second and the third inequality hold because by Lemma~\ref{lem:stab} and Lemma~\ref{lem:multi_stab}, we have $\frac{1}{2} \leq \frac{\tilde{\vec z}\^{t+1}[i]}{\tilde{\vec x}\^{t}[i]} \leq \frac{3}{2}$ and $\frac{1}{2} \leq \frac{\tilde{\vec x}\^{t+1}[i]}{\tilde{\vec x}\^{t}[i]} \leq \frac{3}{2}$, and thus $\frac{1}{3} \leq \frac{\tilde{\vec x}\^{t+1}[i]}{\tilde{\vec z}\^{t+1}[i]} \leq 3$.
\end{proof}

\subsection{RVU Bound in the Original Space}

Next, we establish an RVU bound in the original (unlifted) space, namely \Cref{cor:rvu}. To this end, we first proceed with the proof of \cref{lemma:gamma}, which boils down to the following simple claim.

\begin{lemma}\label{lem:ratio close}
    Let $(\lambda, \vec y), (\lambda',\vec y') \in \tilde\cX\cap\bbR^{d+1}_{> 0}$ be arbitrary points such that
    \[
        \mleft\|\vstack{\lambda'}{\vec y'} - \vstack{\lambda}{\vec y} \mright\|_{(\lambda, \vec y)} \le \frac{1}{2}.
    \]
    Then,
    \[
        \mleft\| \frac{\vec y}{\lambda} - \frac{\vec y'}{\lambda'} \mright\|_1 \le 4 \|\cX\|_1 \cdot \mleft\|\vstack{\lambda'}{\vec y'} - \vstack{\lambda}{\vec y} \mright\|_{(\lambda,\vec y)}.
    \]
\end{lemma}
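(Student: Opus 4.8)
The plan is to bound the $\ell_1$ distance between the two normalized points $\vec y/\lambda$ and $\vec y'/\lambda'$ by decomposing the difference coordinate-wise and controlling each term using the hypothesis that the local-norm distance is at most $\tfrac12$. First I would invoke \cref{lem:multi_stab} (applied with the center $(\lambda,\vec y)$): since $\|(\lambda',\vec y')-(\lambda,\vec y)\|_{(\lambda,\vec y)}\le\tfrac12$, every coordinate satisfies $\tfrac12\lambda\le\lambda'\le\tfrac32\lambda$ and $\tfrac12\vec y[\ind]\le\vec y'[\ind]\le\tfrac32\vec y[\ind]$. This multiplicative closeness is exactly what converts the local (relative) norm into a usable bound on ratios.

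Next I would write, for each coordinate $\ind$,
\[
    \frac{\vec y[\ind]}{\lambda}-\frac{\vec y'[\ind]}{\lambda'}
    = \frac{\vec y[\ind]\,\lambda' - \vec y'[\ind]\,\lambda}{\lambda\lambda'}.
\]
The cleanest route is to add and subtract a common term in the numerator, e.g.\ $\vec y[\ind]\lambda'-\vec y[\ind]\lambda+\vec y[\ind]\lambda-\vec y'[\ind]\lambda$, giving
\[
    \frac{\vec y[\ind]}{\lambda}-\frac{\vec y'[\ind]}{\lambda'}
    = \frac{\vec y[\ind]}{\lambda}\cdot\frac{\lambda'-\lambda}{\lambda'}
      - \frac{\vec y'[\ind]-\vec y[\ind]}{\lambda'}.
\]
Using $\lambda'\ge\tfrac12\lambda$ in both denominators bounds the prefactors $1/\lambda'$ by $2/\lambda$, so each coordinate is at most $\tfrac{2}{\lambda}\bigl(\tfrac{\vec y[\ind]}{\lambda}|\lambda'-\lambda| + |\vec y'[\ind]-\vec y[\ind]|\bigr)$. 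Summing over $\ind$ and recognizing that $\sum_\ind \vec y[\ind]/\lambda = \|\vec y/\lambda\|_1\le\|\cX\|_1$ (since $\vec y/\lambda\in\cX$) lets me pull out the factor $\|\cX\|_1$.

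The final step is to relate the resulting terms $\tfrac{|\lambda'-\lambda|}{\lambda}$ and $\sum_\ind\tfrac{|\vec y'[\ind]-\vec y[\ind]|}{\vec y[\ind]}$ back to the local norm $\|(\lambda',\vec y')-(\lambda,\vec y)\|_{(\lambda,\vec y)}$. Here I would use that the local norm is an $\ell_2$-type quantity in the relative coordinates $\tfrac{\lambda'-\lambda}{\lambda},\tfrac{\vec y'[\ind]-\vec y[\ind]}{\vec y[\ind]}$, so by Cauchy--Schwarz each relative deviation (and their sum) is controlled by the local norm, at the cost of at most a $\sqrt{d+1}$ factor; but since I have already extracted the bound $\vec y[\ind]/\lambda$ which sums to $\|\cX\|_1$, I expect the correct bookkeeping to convert $\tfrac{|\vec y'[\ind]-\vec y[\ind]|}{\vec y[\ind]}\cdot\tfrac{\vec y[\ind]}{\lambda}$ directly into a weighted sum that, together with the $\lambda$ term, telescopes into $\|\cX\|_1\cdot\|(\lambda',\vec y')-(\lambda,\vec y)\|_{(\lambda,\vec y)}$ without spurious dimension factors. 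The main obstacle I anticipate is precisely this last accounting: arranging the coordinate-wise estimates so that the constant comes out as $4$ and the only geometric factor is $\|\cX\|_1$ (not $\|\cX\|_1\sqrt{d+1}$). The key trick is to avoid a naive Cauchy--Schwarz over all coordinates and instead bound $\|\vec y/\lambda - \vec y'/\lambda'\|_1$ by splitting into the $\lambda$-contribution and the $\vec y$-contribution, using $\|\vec y/\lambda\|_1\le\|\cX\|_1$ to absorb the weights, so that each piece is at most $2\|\cX\|_1$ times the corresponding relative deviation, and the latter is at most the full local norm.
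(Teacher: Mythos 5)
Your proof is correct and follows essentially the same route as the paper's: both convert the local-norm hypothesis into per-coordinate multiplicative (relative) bounds, bound the difference of the normalized points coordinate-wise by a multiple of $\vec{y}[r]/\lambda$ so that summing pulls out $\|\cX\|_1$, and avoid any $\sqrt{d+1}$ factor by dominating each relative deviation by the full local norm rather than applying Cauchy--Schwarz across coordinates. The only difference is bookkeeping: the paper introduces the single quantity $\mu$ (the maximum relative deviation) and uses the sandwich bound $\tfrac{1+\mu}{1-\mu}-1\le 4\mu$ valid for $\mu\le\tfrac{1}{2}$, whereas you split the error into a $\lambda$-part and a $\vec{y}$-part, each contributing $2\|\cX\|_1$ times the local norm; both yield the constant $4$.
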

\begin{proof}
      Let $\mul$ be defined as
      \begin{equation}
            \label{eq:def-mul}
            \mul \defeq \max \left\{ \left| \frac{\lambda'}{\lambda} - 1 \right|, \max_{\ind \in \range{d}} \left| \frac{\vec{y}'[\ind]}{\vec{y}[\ind]} - 1 \right| \right\}.
      \end{equation}
      By definition,
      \begin{equation*}
            \left| \frac{\lambda'}{\lambda} - 1 \right| \leq \mul,
      \end{equation*}
      which in turn implies that
      \begin{equation}
            \label{eq:lam-stab}
            (1 - \mul ) \lambda \leq \lambda' \leq (1 + \mul ) \lambda.
      \end{equation}
      Similarly, for any $\ind \in \range{d}$,
      \begin{equation}
            \label{eq:y-stab}
            (1 - \mul) \vec{y}[\ind] \leq \vec{y}'[\ind] \leq (1 + \mul) \vec{y}[\ind].
      \end{equation}
      As a result, combining \eqref{eq:lam-stab} and \eqref{eq:y-stab} we get that for any $\ind \in \range{d}$,
      \begin{equation*}
            \frac{\vec{y}'[\ind]}{\lambda'} - \frac{\vec{y}[\ind]}{\lambda} \leq \left( \frac{1 + \mul }{1 - \mul } - 1 \right) \frac{\vec{y}[\ind]}{\lambda} \leq 4 \mul \frac{\vec{y}[\ind]}{\lambda} =  4 \mul \vec{x}[\ind],
      \end{equation*}
      since $\mul \leq \frac{1}{2}$. Similarly, by \eqref{eq:lam-stab} and \eqref{eq:y-stab},
      \begin{equation*}
            \frac{\vec{y}[\ind]}{\lambda} - \frac{\vec{y}'[\ind]}{\lambda'} \leq \left( 1 - \frac{1 - \mul }{1 + \mul } \right) \frac{\vec{y}[\ind]}{\lambda} \leq 2 \mul \frac{\vec{y}[\ind]}{\lambda} = 2 \mul \vec{x}[\ind].
      \end{equation*}
      Thus, it follows that
      \begin{equation*}
            \left| \frac{\vec{y}'[\ind]}{\lambda'} - \frac{\vec{y}[\ind]}{\lambda} \right| \leq 4 \mul \vec{x}[\ind],
      \end{equation*}
      in turn implying that
      \begin{equation}
            \label{eq:mul-bound}
            \| \vec{x}' - \vec{x} \|_1 = \sum_{\ind=1}^d \left| \frac{\vec{y}'[\ind]}{\lambda'} - \frac{\vec{y}[\ind]}{\lambda} \right| \leq 4 \mul \sum_{\ind=1}^d \vec{x}[\ind] \leq 4 \mul \| \cX\|_1.
      \end{equation}
      Moreover, by definition of \eqref{eq:def-mul},
      \begin{equation*}
            \left( \mul \right)^2 \leq \pnt{\vstack{\lambda'}{\vec y'} - \vstack{\lambda}{\vec y}}^2.
      \end{equation*}
      Finally, combining this bound with \eqref{eq:mul-bound} concludes the proof.
\end{proof}

\gamma*
\begin{proof}
      Since $\eta \le \frac{1}{50}$ by assumption, we have
      \[
        \pnt{\vx\^{t+1} - \vx\^t} \le 22\eta < \frac{1}{2}.
      \]
      Hence, we are in the domain of applicability of \cref{lem:ratio close}, which immediately yields the statement.
\end{proof}

\rvuor*

\begin{proof}
      At first, assume that $\|\vec{\ut}\^t\|_\infty \le 1/\|\cX\|_1$. By definition of the induced dual local norm in \eqref{eq:local_norm},
      \begin{align}
            \|\tilu\^t - \tilu\^{t-1} \|^2_{*, t} & \leq (\langle \vec{x}\^t, \vec{u}\^t \rangle - \langle \vec{x}\^{t-1}, \vec{u}\^{t-1} \rangle)^2 (\lambda\^t)^2 + \sum_{\ind=1}^d (\vec{y}[\ind])^2 (\vec{u}\^t[\ind] - \vec{u}\^{t-1}[\ind] )^2 \notag \\
                                                  & \leq (\langle \vec{x}\^t, \vec{u}\^t \rangle - \langle \vec{x}\^{t-1}, \vec{u}\^{t-1} \rangle)^2 + \sum_{\ind=1}^d (\vec{x}[\ind])^2 (\vec{u}\^t[\ind] - \vec{u}\^{t-1}[\ind] )^2 \notag                \\
                                                  & \leq \left( \langle \vec{x}\^t, \vec{u}\^t \rangle - \langle \vec{x}\^{t-1}, \vec{u}\^{t-1} \rangle \right)^2 + \|\cX\|_1^2 \| \vec{u}\^t - \vec{u}\^{t-1} \|^2_\infty, \label{align:mis-util}
      \end{align}
      for any $t \geq 2$. Further, by Young's inequality,
      \begin{align*}
            \left( \langle \vec{x}\^t, \vec{u}\^t \rangle - \langle \vec{x}\^{t-1}, \vec{u}\^{t-1} \rangle \right)^2 & \leq 2 \left( \langle \vec{x}\^t, \vec{u}\^t - \vec{u}\^{t-1} \rangle \right)^2 + 2 \left( \langle \vec{x}\^t - \vec{x}\^{t-1}, \vec{u}\^{t-1} \rangle \right)^2 \\
                                                                                                                     & \leq 2 \|\cX\|_1^2 \|\vec{u}\^t - \vec{u}\^{t-1} \|^2_\infty + \frac{2}{\|\cX\|_1^2}  \|\vec{x}\^t - \vec{x}\^{t-1} \|_1^2.
      \end{align*}
      Combining with \eqref{align:mis-util},
      \begin{equation*}
            \|\tilu\^t - \tilu\^{t-1} \|^2_{*, t} \leq 3 \|\cX\|_1^2 \|\vec{u}\^t - \vec{u}\^{t-1} \|^2_\infty + \frac{2}{\|\cX\|_1^2} \|\vec{x}\^t - \vec{x}\^{t-1} \|_1^2,
      \end{equation*}
      for $t \geq 2$, since $\|\vec{u}\|_\infty \leq \frac{1}{\|\cX\|_1} $ (by assumption). Further, $\|\tilu^{(1)} - \tilu^{(0)} \|^2_{*, t} = \|\tilu^{(1)} \|^2_{*, t} \leq 2$.
      Combining with \cref{thm:RVU} and \cref{lemma:gamma}, we get that $\tildereg^T$ is upper bounded by
      \begin{align*}
            6 + \frac{(d+1) \log T}{\eta} + 16 \eta \|\cX\|_1^2 \sum_{t=1}^{T-1} \|\vec{u}\^{t+1} - \vec{u}\^t \|_\infty^2 + \frac{1}{\|\cX\|_1^2} \left( 10 \eta - \frac{1}{432 \eta}  \right) \sum_{t=1}^{T-1} \|\vec{x}\^{t+1} - \vec{x}\^t \|_1^2 \\
            \leq 6 + \frac{(d+1) \log T}{\eta} + 16 \eta \|\cX\|_1^2 \sum_{t=1}^{T-1} \|\vec{u}\^{t+1} - \vec{u}\^t \|_\infty^2 - \frac{1}{512 \eta \|\cX\|_1^2} \sum_{t=1}^{T-1} \|\vec{x}\^{t+1} - \vec{x}\^t \|_1^2.
      \end{align*}

      Finally, we relax the assumption that $\|\vec{\ut}\^t\|_\infty \le 1/\|\cX\|_1$. In that case, one can reduce to the above analysis by first rescaling all utilities by the factor $1/(B\|\cX\|_1)$---which in turn is equivalent to rescaling the learning rate $\eta$ by $1/(B\|\cX\|_1)$. We then need to correct for the fact that the norm of the difference of utilities gets rescaled by a factor $1/(B\|\cX\|_1)^2$, and that the regret $\tildereg^T$ with respect to the original utilities is a factor $B\|\cX\|_1$ larger than the regret measured on the rescaled utilities. Taking these considerations into account leads to the statement.
\end{proof}

\subsection{Main Result: Proof of \texorpdfstring{\Cref{theorem:main-detailed}}{Theorem 4}}

Finally, we are ready to establish \Cref{theorem:main-detailed}. To this end, the main ingredient is the bound on the second-order path lengths predicted by \Cref{theorem:traj}, which is recalled below.

\traj*

\begin{proof}
      By \cref{assumption:smooth}, it follows that for any player $i \in \range{n}$,
      \begin{equation*}
            \left( \| \vec{u}_i\^{t+1} - \vec{u}_i\^t \|_\infty \right)^2 \leq \left( L \sum_{j = 1}^n \|\vec{x}_j\^{t+1} - \vec{x}_j\^t \|_1 \right)^2 \leq L^2 n \sum_{j=1}^n \|\vec{x}_j\^{t+1} - \vec{x}_j\^t \|_1^2,
      \end{equation*}
      by Jensen's inequality. Hence, by \cref{cor:rvu} the regret $\reg_i^T$ of each player $i \in \range{n}$ can be upper bounded by
      \begin{equation*}
            6 B \|\cX\|_1 + \frac{(d+1) \log T}{\eta} + 16 \eta \|\cX\|_1^2 L^2 n \sum_{j=1}^n \sum_{t=1}^{T-1} \|\vec{x}_j\^{t+1} - \vec{x}_j\^t \|_1^2  - \frac{1}{512 \eta \|\cX \|_1^2} \sum_{t=1}^{T-1} \|\vec{x}_i\^{t+1} - \vec{x}_i\^t \|_1^2,
      \end{equation*}
      Summing over all players $i \in \range{n}$, we have that
      \begin{align*}
            \sum_{i=1}^n \tildereg_i^T & \leq 6 n B \|\cX\|_1 + n \frac{(d+1) \log T}{\eta} +  \sum_{i=1}^n \left( 16 \eta \|\cX\|_1^2 L^2 n^2 - \frac{1}{512 \eta \|\cX\|_1^2} \right) \sum_{t=1}^{T-1} \|\vec{x}_i\^{t+1} - \vec{x}_i\^t \|_1^2 \\
                                       & \leq 6 n B \|\cX\|_1 + n \frac{(d+1) \log T}{\eta} - \frac{1}{1024 \eta \|\cX\|_1^2} \sum_{i=1}^n \sum_{t=1}^{T-1} \|\vec{x}_i\^{t+1} - \vec{x}_i\^t \|_1^2,
      \end{align*}
      since $\eta \leq \frac{1}{256 n L \|\cX\|_1^2}$. Finally, the theorem follows since $\sum_{i=1}^n \tildereg_i^T \geq 0$, which in turn follows directly from \cref{theorem:nonnegative}.
\end{proof}

\main*

\begin{proof}
      First of all, by \cref{assumption:smooth} we have that for any player $i \in \range{n}$,
      \begin{equation*}
            \|\vec{u}_i\^{t+1} - \vec{u}_i\^t\|^2_\infty \leq \left( L \sum_{j=1}^n \|\vec{x}_j\^{t+1} - \vec{x}_j\^t \|_1 \right)^2 \leq L^2 n \sum_{j=1}^n \|\vec{x}_j\^{t+1} - \vec{x}_j\^t \|_1^2.
      \end{equation*}
      Hence, summing over all $t$,
      \begin{align*}
            \sum_{t=1}^{T-1} \|\vec{u}_i\^{t+1} - \vec{u}_i\^t \|_\infty^2 & \leq L^2 n \sum_{t=1}^{T-1} \sum_{j=1}^n \|\vec{x}_j\^{t+1} - \vec{x}_j\^t \|_1^2 \\
                                                                           & \leq 6144 n^2 L^2 \eta B \|\cX\|_1^3 + 1024 n^2 L^2 (d+1) \|\cX\|_1^2 \log T,
      \end{align*}
      where the last bound uses \cref{theorem:traj}. As a result, from \cref{cor:rvu}, if $\eta = \frac{1}{128 n L \|\cX\|_1^2}$,
      \begin{align*}
            \tildereg_i^T & \leq 6 B \|\cX\|_1 + \frac{(d+1) \log T}{\eta} + 16 \eta \|\cX\|_1^2 \sum_{t=1}^{T-1} \|\vec{u}_i\^{t+1} - \vec{u}_i\^t \|^2_\infty \\
                          & \leq 12 B \|\cX\|_1 + 256 (d+1) n L \|\cX\|_1^2 \log T.
      \end{align*}
      Thus, the bound on $\reg_i^T$ follows directly since $\reg_i^T \leq \tildereg_i^T$ by \cref{theorem:nonnegative}. The case where $\eta = \frac{1}{256 B \|\cX\|_1}$ is analogous. 
      
      Next, let us focus on the adversarial bound. Each player can simply check whether there exists a time $t \in \range{T}$ such that
      \begin{equation}
            \label{eq:check}
            \sum_{\tau=1}^{t-1} \|\vec{u}_i^{(\tau+1)} - \vec{u}_i^{(\tau)} \|_\infty^2 > 6144 n^2 L^2 \eta B \|\cX\|_1^3 + 1024 n^2 L^2 (d+1) \|\cX\|_1^2 \log t.
      \end{equation}
      In particular, we know from \cref{theorem:traj} that when all players follow the prescribed protocol \eqref{eq:check} will never by satisfied. On the other hand, if there exists time $t$ so that \eqref{eq:check} holds, then it suffices to switch to any no-regret learning algorithm tuned to face adversarial utilities.
\end{proof}

\subsection{Extending the Analysis under Approximate Iterates}
\label{sec:approxiter}

In this subsection we describe how to extend our analysis, and in particular \Cref{theorem:main-detailed}, when the \OFTRL step of \Cref{algo:ours} at time $t$ is only computed with tolerance $\epsilon^{(t)}$, in the sense of \eqref{eq:rel-error}. We start by extending \Cref{thm:RVU} below.

\begin{proposition}[Extension of \Cref{thm:RVU}]
    \label{prop:approx-rvu}
Let $\tildereg^T$ be the regret cumulated up to time $T$ by the internal \OFTRL algorithm producing approximate iterates $(\lambda^{(t)}, \vec{y}^{(t)}) \in \tilde{\cX}$, for any $t \in \range{T}$. Then, for any $T \in \N$ and learning rate $\eta \leq \frac{1}{50}$,
\begin{align*}
    \tildereg^T \le 4 + \frac{(d+1)\log T}{\eta} + 5\eta\sum_{t=1}^T \dnt{ \tilde{\vec\ut}\^t - \tilde{\vec\ut}\^{t-1} }^2 - \frac{1}{27\eta}\sum_{t=1}^{T-1} \mleft\|\vstack{\lambda_\star\^{t+1}}{\vec{y}_\star\^{t+1}} - \vstack{\lambda\^t_\star}{\vec{y}\^t_\star}\mright\|_{(\lambda_\star\^t, \vec y_\star\^t)} \\
    + 2 \sum_{t=1}^T \mleft\|\vstack{\lambda\^t}{\vec y\^t} - \vstack{\lambda\^t_\star}{\vec y\^t_\star}\mright\|_{(\lambda_\star\^t, \vec y_\star\^t)},
\end{align*}
where
    \[
    \vstack{\lambda\^t_\star}{\vec y\^t_\star}\defeq
    \argmax_{(\lambda, \vec y)\in\tilde\cX}\mleft\{ {\eta} \mleft\langle \tilde{\Ut}\^t + \tilde{\vec\ut}\^{t-1}, \vstack{\lambda}{\vec y} \mright\rangle + \log \lambda + \sum_{\ind=1}^d \log \vec{y}[\ind]\mright\}.
    \]
\end{proposition}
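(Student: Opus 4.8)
The plan is to mirror the proof of \Cref{thm:RVU} exactly, but carry the approximation error through each inequality as an additive perturbation. The key observation is that the regret $\tildereg^T$ is measured against the \emph{played} iterates $(\lambda\^t, \vec y\^t)$, whereas the entire machinery of \Cref{lem:oftrl regret}, \Cref{lem:stab}, and \Cref{lem:neg} is an exact statement about the \emph{ideal} iterates $(\lambda\^t_\star, \vec y\^t_\star)$. So the strategy is: first bound the regret against the played iterates by the regret against the ideal iterates plus a correction term, and then apply the clean exact analysis to the ideal sequence.

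\textbf{Step 1: Decompose the regret.} I would write, for any comparator $(\lambda^*, \vec y^*) \in \tilde\cX$,
\[
    \sum_{t=1}^T \mleft\langle \ls{t}, \vstack{\lambda^*}{\vec y^*} - \vstack{\lambda\^t}{\vec y\^t} \mright\rangle
    = \sum_{t=1}^T \mleft\langle \ls{t}, \vstack{\lambda^*}{\vec y^*} - \vstack{\lambda\^t_\star}{\vec y\^t_\star} \mright\rangle
    + \sum_{t=1}^T \mleft\langle \ls{t}, \vstack{\lambda\^t_\star}{\vec y\^t_\star} - \vstack{\lambda\^t}{\vec y\^t} \mright\rangle.
\]
The first sum is exactly $\tildereg^T$ for the \emph{ideal} OFTRL run, so it obeys the clean bound from \Cref{thm:RVU}, namely $4 + \frac{(d+1)\log T}{\eta} + 5\eta\sum_t \dnt{\ls{t}-\ls{t-1}}^2 - \frac{1}{27\eta}\sum_t \|(\lambda_\star\^{t+1},\vec y_\star\^{t+1}) - (\lambda_\star\^t,\vec y_\star\^t)\|_{(\lambda_\star\^t,\vec y_\star\^t)}^2$, where the negative term is now phrased in terms of ideal iterates (this is precisely the form stated in the proposition). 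The second sum is the correction, which I bound by Cauchy--Schwarz in the local norm pair: $\langle \ls{t}, (\lambda_\star\^t,\vec y_\star\^t) - (\lambda\^t,\vec y\^t)\rangle \le \dnt{\ls{t}} \cdot \|(\lambda\^t,\vec y\^t) - (\lambda_\star\^t,\vec y_\star\^t)\|$. Here I must be careful about which center defines the local norm; since \eqref{eq:rel-error} measures the error in the $(\lambda_\star\^t, \vec y_\star\^t)$-centered norm, I would use Hölder with the dual norm centered at the ideal point.

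\textbf{Step 2: Control the dual norm of the utility.} The correction term carries a factor $\dnt{\ls{t}}$ (or the analogous dual norm centered at the ideal iterate). Using the same computation as in \Cref{lem:stab}, the assumption $\|\vec u\^t\|_\infty \|\cX\|_1 \le 1$ gives $\dnt{\ls{t}}^2 = (\lambda\^t \langle \vec x\^t, \vec u\^t\rangle)^2 + \sum_\ind (\vec y\^t[\ind] \vec u\^t[\ind])^2 \le (\lambda\^t)^2 + \frac{1}{\|\cX\|_1^2}\sum_\ind (\vec y\^t[\ind])^2 \le 2$, so the dual norm is at most $\sqrt 2 < 2$. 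This yields the advertised coefficient $2$ on the correction sum $\sum_{t=1}^T \|(\lambda\^t,\vec y\^t) - (\lambda_\star\^t,\vec y_\star\^t)\|_{(\lambda_\star\^t,\vec y_\star\^t)}$, matching the proposition statement.

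\textbf{Main obstacle.} The genuine difficulty is not the algebra but the \emph{consistency of the local norms}: the exact bound in \Cref{thm:RVU} implicitly centers its negative path-length term at the played iterates, whereas here everything must be re-anchored at the ideal iterates $(\lambda_\star\^t, \vec y_\star\^t)$ to match \eqref{eq:rel-error}. I would therefore rerun the proof of \Cref{thm:RVU}---and the auxiliary \Cref{lem:oftrl regret}, \Cref{lem:stab}, \Cref{lem:neg}---with $(\lambda\^t, \vec y\^t)$ systematically replaced by $(\lambda\^t_\star, \vec y\^t_\star)$; since these lemmas are purely statements about the exact OFTRL update rule and never actually use the played iterates, this substitution is legitimate and the negative term emerges naturally in ideal-centered form. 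The only remaining subtlety is verifying that the multiplicative-stability guarantee of \Cref{lem:multi_stab}, which the proofs of \Cref{lem:stab} and \Cref{lem:neg} rely on, continues to hold among the ideal iterates; but this is immediate because it is a property of the exact sequence, independent of any approximation. Once these re-anchored lemmas are in hand, the two steps above assemble directly into the claimed bound.
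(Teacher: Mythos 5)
Your proposal is correct and takes essentially the same route as the paper's proof: the identical decomposition of the regret into the regret of the exact (starred) iterates plus a correction term, the identical H\"older bound on the correction using $\|\tilde{\vec u}^{(t)}\|_{*,(\lambda_\star^{(t)},\vec y_\star^{(t)})} \le 2$, and the identical application of \Cref{thm:RVU} to the exact sequence. The ``main obstacle'' you flag is resolved exactly as you say---\Cref{thm:RVU} and its supporting lemmas are statements about the exact \OFTRL update and never reference the played iterates, so they apply verbatim with all norms centered at the starred points---which is why the paper treats that final step as immediate.
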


\begin{proof}
Fix any $(\lambda^*, \vec{y}^*) \in \tilde \cX$. Then, 
\begin{align*}
\sum_{t=1}^T \mleft\langle \tilde{\vec\ut}\^t, \vstack{\lambda^*}{\vec{y}^*} - \vstack{\lambda\^t}{\vec y\^t}\mright\rangle &= \sum_{t=1}^T \mleft\langle \tilde{\vec\ut}\^t, \vstack{\lambda^*}{\vec{y}^*} - \vstack{\lambda_\star\^t}{\vec y_\star\^t}\mright\rangle + \sum_{t=1}^T \mleft\langle \tilde{\vec\ut}\^t, \vstack{\lambda_\star^{(t)}}{\vec{y}_\star^{(t)}} - \vstack{\lambda\^t}{\vec y\^t}\mright\rangle \\
&\leq \sum_{t=1}^T \mleft\langle \tilde{\vec\ut}\^t, \vstack{\lambda^*}{\vec{y}^*} - \vstack{\lambda_\star\^t}{\vec y_\star\^t}\mright\rangle + 2 \sum_{t=1}^T \mleft\|\vstack{\lambda\^t}{\vec y\^t} - \vstack{\lambda\^t_\star}{\vec y\^t_\star}\mright\|_{(\lambda_\star\^t, \vec y_\star\^t)},
\end{align*}
where the last inequality uses H\"older's inequality along with the fact that $\|\tilu^{(t)}\|_{*, (\lambda^{(t)}, \vec{y}^{(t)})} \leq 2$, which in turn follows since $\|\vec{u}^{(t)}\|_\infty \|\cX\|_1 \leq 1$ (by assumption). Finally, the proof follows as an immediate consequence of \Cref{thm:RVU}.
\end{proof}

We next proceed with the extension of \Cref{lemma:gamma}.

\begin{lemma}[Extension of \Cref{lemma:gamma}]
    \label{lemma:approx-gamma}
    Suppose that $\epsilon^{(t)} \leq \frac{1}{8}$, for any $t \in \range{T}$. Then, for any time $t \in \range{T-1}$ and learning rate $\eta \leq \frac{1}{256}$,
    \[
        \|\vec x\^{t+1} - \vec x\^t\|_1 \le 8 \|\cX\|_1 \mleft\|\vstack{\lambda_\star\^{t+1}}{\vec{y}_\star\^{t+1}} - \vstack{\lambda\^t_\star}{\vec{y}\^t_\star}\mright\|_{(\lambda_\star\^t, \vec y_\star\^t)} + 16 \|\cX\|_1 \epsilon^{(t+1)} + 8 \|\cX\|_1 \epsilon^{(t)},
    \]
    where $\vec{x}^{(t)} \defeq \vec{y}^{(t)}/ \lambda^{(t)}$.
\end{lemma}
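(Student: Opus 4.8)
The goal is to bound $\|\vec x\^{t+1} - \vec x\^t\|_1$ when the iterates $(\lambda\^t,\vec y\^t)$ are only approximate solutions to the \OFTRL step, deviating from the exact solutions $(\lambda_\star\^t,\vec y_\star\^t)$ by at most $\epsilon\^t$ in the local norm centered at $(\lambda_\star\^t,\vec y_\star\^t)$. The natural strategy is to route through the exact iterates by a triangle inequality: decompose the movement $\vec x\^{t+1}-\vec x\^t$ (where $\vec x\^s = \vec y\^s/\lambda\^s$) into three pieces, namely the approximation error at time $t+1$, the exact movement $\vec x_\star\^{t+1}-\vec x_\star\^t$, and the approximation error at time $t$, where $\vec x_\star\^s \defeq \vec y_\star\^s/\lambda_\star\^s$. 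The middle term is controlled exactly as in \Cref{lemma:gamma} via \Cref{lem:ratio close}, giving a factor $4\|\cX\|_1$ times the local-norm movement of the exact iterates; the outer two terms must be translated from local-norm closeness of the lifted points into $\ell_1$-closeness of the normalized points.

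The plan is as follows. First, for each time $s \in \{t,t+1\}$ I would invoke \Cref{lem:ratio close} with the pair $\bigl((\lambda_\star\^s,\vec y_\star\^s),(\lambda\^s,\vec y\^s)\bigr)$: since $\epsilon\^s \le \tfrac18 \le \tfrac12$, the hypothesis of that lemma holds, and it yields $\|\vec x\^s - \vec x_\star\^s\|_1 \le 4\|\cX\|_1\,\epsilon\^s$. Second, I would apply \Cref{lem:ratio close} to the two \emph{exact} iterates to get $\|\vec x_\star\^{t+1}-\vec x_\star\^t\|_1 \le 4\|\cX\|_1\,\bigl\|(\lambda_\star\^{t+1},\vec y_\star\^{t+1})-(\lambda_\star\^t,\vec y_\star\^t)\bigr\|_{(\lambda_\star\^t,\vec y_\star\^t)}$ — this requires checking that the exact iterates themselves satisfy the multiplicative-stability bound $\le \tfrac12$ in the relevant local norm, which follows from \Cref{cor:ftrl stability} (the stability bound $22\eta < \tfrac12$ applies to the exact \OFTRL iterates). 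Assembling the triangle inequality
\[
\|\vec x\^{t+1}-\vec x\^t\|_1 \le \|\vec x\^{t+1}-\vec x_\star\^{t+1}\|_1 + \|\vec x_\star\^{t+1}-\vec x_\star\^t\|_1 + \|\vec x_\star\^t - \vec x\^t\|_1
\]
then gives $4\|\cX\|_1\,\epsilon\^{t+1} + 4\|\cX\|_1\cdot(\text{local-norm movement}) + 4\|\cX\|_1\,\epsilon\^t$.

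There are two technical wrinkles to iron out, and the first is the main obstacle. \Cref{lem:ratio close} measures the deviation of $(\lambda\^s,\vec y\^s)$ from $(\lambda_\star\^s,\vec y_\star\^s)$ in the local norm centered at $(\lambda_\star\^s,\vec y_\star\^s)$, which is exactly what $\epsilon\^s$ controls — good. However, the exact-iterate term naturally comes out in the norm centered at $(\lambda_\star\^t,\vec y_\star\^t)$, whereas the statement I want to prove is already phrased in that same center, so no norm-change is needed there. The genuine difficulty is accounting for the slack between the constants $4$ (appearing in \Cref{lem:ratio close}) and the $8$ and $16$ in the target: the factor-of-two inflation on the middle term and on $\epsilon\^{t+1}$ presumably arises because one must convert the local norm centered at $(\lambda_\star\^{t+1},\vec y_\star\^{t+1})$ into the one centered at $(\lambda_\star\^t,\vec y_\star\^t)$, using multiplicative stability of consecutive exact iterates (each coordinate ratio lies in $[\tfrac12,\tfrac32]$ by \Cref{lem:multi_stab} and \Cref{cor:ftrl stability}) to bound the two local norms against each other by a factor of at most $2$; this is where the stricter hypotheses $\epsilon\^t \le \tfrac18$ and $\eta \le \tfrac1{256}$ get used to keep all the coordinatewise ratios safely bounded. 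I would make these norm-comparison constants explicit, verify they combine to the claimed $8$ and $16$, and then collect terms to conclude.
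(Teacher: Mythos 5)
Your proof is correct, but it takes a genuinely different route from the paper's. You perform the triangle inequality in the \emph{original} space, writing $\|\vec x^{(t+1)}-\vec x^{(t)}\|_1 \le \|\vec x^{(t+1)}-\vec x_\star^{(t+1)}\|_1 + \|\vec x_\star^{(t+1)}-\vec x_\star^{(t)}\|_1 + \|\vec x_\star^{(t)}-\vec x^{(t)}\|_1$, and invoke \cref{lem:ratio close} three times, each time with the local norm centered at the first point of the pair---which is exactly what \eqref{eq:rel-error} controls for the two outer terms, and what \cref{cor:ftrl stability} controls for the middle term (applied to the exact-solution sequence, valid since $22\eta \le 22/256 < 1/2$; note this stability claim holds even though the cumulative utilities were generated by approximate play, because the stability proof only uses $\vec x^{(\tau)} \in \cX$ and the normalization of the utilities). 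The paper instead performs the triangle inequality in the \emph{lifted} space, with all local norms centered at $(\lambda_\star^{(t)},\vec y_\star^{(t)})$; this forces two recentering steps, each costing a factor of $2$ via \cref{lem:multi_stab} (moving the $\epsilon^{(t+1)}$-deviation from center $(\lambda_\star^{(t+1)},\vec y_\star^{(t+1)})$ to center $(\lambda_\star^{(t)},\vec y_\star^{(t)})$, and moving the displacement of the approximate iterates from center $(\lambda_\star^{(t)},\vec y_\star^{(t)})$ to center $(\lambda^{(t)},\vec y^{(t)})$), plus a separate argument that the \emph{approximate} iterates are multiplicatively stable, before applying the ratio-close argument once. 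Your route avoids all recentering, only needs $\epsilon^{(s)}\le \frac{1}{2}$ rather than $\frac{1}{8}$, and yields the sharper constants $4,4,4$ in place of $8,16,8$, which implies the stated bound a fortiori. One remark: your second ``wrinkle'' is a non-issue in your own decomposition---the factor-of-two inflations in the target constants are artifacts of the paper's recentering steps, not something your argument must reproduce; since \cref{lem:ratio close} outputs $\ell_1$ bounds directly, your constants simply come out smaller, which is perfectly acceptable for an upper bound.
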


\begin{proof}
First, by the triangle inequality,
\begin{align*}
    \mleft\|\vstack{\lambda_\star\^{t+1}}{\vec{y}_\star\^{t+1}} - \vstack{\lambda\^t_\star}{\vec{y}\^t_\star}\mright\|_{(\lambda_\star\^t, \vec y_\star\^t)} &\geq  \mleft\|\vstack{\lambda\^{t+1}}{\vec{y}\^{t+1}} - \vstack{\lambda\^t}{\vec{y}\^t}\mright\|_{(\lambda_\star\^t, \vec y_\star\^t)} \\
    &- \mleft\|\vstack{\lambda_\star\^{t+1}}{\vec{y}_\star\^{t+1}} - \vstack{\lambda\^{t+1}}{\vec{y}\^{t+1}}\mright\|_{(\lambda_\star\^t, \vec y_\star\^t)} - \mleft\|\vstack{\lambda_\star\^{t}}{\vec{y}_\star\^{t}} - \vstack{\lambda\^{t}}{\vec{y}\^{t}}\mright\|_{(\lambda_\star\^t, \vec y_\star\^t)}.
\end{align*}
Now given that $\eta \leq \frac{1}{50}$, it follows from \Cref{cor:ftrl stability} that
\begin{equation*}
    \mleft\|\vstack{\lambda_\star\^{t+1}}{\vec{y}_\star\^{t+1}} - \vstack{\lambda\^t_\star}{\vec{y}\^t_\star}\mright\|_{(\lambda_\star\^t, \vec y_\star\^t)} \leq \frac{1}{2},
\end{equation*}
which in turn---combined with \Cref{lem:multi_stab}---implies that
\begin{equation*}
    \mleft\|\vstack{\lambda_\star\^{t+1}}{\vec{y}_\star\^{t+1}} - \vstack{\lambda\^{t+1}}{\vec{y}\^{t+1}}\mright\|_{(\lambda_\star\^t, \vec y_\star\^t)} \leq 2 \mleft\|\vstack{\lambda_\star\^{t+1}}{\vec{y}_\star\^{t+1}} - \vstack{\lambda\^{t+1}}{\vec{y}\^{t+1}}\mright\|_{(\lambda_\star\^{t+1}, \vec y_\star\^{t+1})}.
\end{equation*}
Similarly, since $\epsilon^{(t)} \leq \frac{1}{8}$, it follows that
\begin{equation*}
    \mleft\|\vstack{\lambda\^{t+1}}{\vec{y}\^{t+1}} - \vstack{\lambda\^t}{\vec{y}\^t}\mright\|_{(\lambda_\star\^t, \vec y_\star\^t)} \geq \frac{1}{2} \mleft\|\vstack{\lambda\^{t+1}}{\vec{y}\^{t+1}} - \vstack{\lambda\^t}{\vec{y}\^t}\mright\|_{(\lambda\^t, \vec y\^t)}.
\end{equation*}
As a result,
\begin{equation}
    \label{eq:penl}
    \mleft\|\vstack{\lambda_\star\^{t+1}}{\vec{y}_\star\^{t+1}} - \vstack{\lambda\^t_\star}{\vec{y}\^t_\star}\mright\|_{(\lambda_\star\^t, \vec y_\star\^t)} \geq \frac{1}{2} \mleft\|\vstack{\lambda\^{t+1}}{\vec{y}\^{t+1}} - \vstack{\lambda\^t}{\vec{y}\^t}\mright\|_{(\lambda\^t, \vec y\^t)} - 2 \epsilon^{(t+1)} - \epsilon^{(t)}.
\end{equation}
Next, we will prove that
\begin{equation}
    \label{eq:mul-claim}
    \max \left\{ \left| \frac{\lambda^{(t+1)}}{\lambda^{(t)}} - 1 \right|, \max_{\ind \in \range{d}} \left| \frac{\vec{y}^{(t+1)}[\ind]}{\vec{y}^{(t)}[\ind]} - 1 \right| \right\} \leq \frac{1}{2}.
\end{equation}
Indeed, since $\epsilon^{(t)}, \epsilon^{(t+1)} \leq \frac{1}{8}$, it holds that
\begin{equation*}
    \left| 1 - \frac{\lambda^{(t)}}{\lambda_\star^{(t)}} \right| \leq \frac{1}{8} \implies \frac{7}{8} \lambda_\star^{(t)} \leq \lambda^{(t)} \leq \frac{9}{8} \lambda_\star^{(t)},
\end{equation*}
and
\begin{equation*}
    \left| 1 - \frac{\lambda^{(t+1)}}{\lambda_\star^{(t+1)}} \right| \leq \frac{1}{8} \implies \frac{7}{8} \lambda_\star^{(t+1)} \leq \lambda^{(t+1)} \leq \frac{9}{8} \lambda_\star^{(t+1)}.
\end{equation*}
Furthermore, for $\eta \leq \frac{1}{256}$,
\begin{equation*}
    \left| 1 - \frac{\lambda_\star^{(t+1)}}{\lambda_\star^{(t)}} \right| \leq \frac{1}{10} \implies \frac{9}{10} \lambda_\star^{(t)} \leq \lambda_\star^{(t+1)} \leq \frac{11}{10} \lambda_\star^{(t)},
\end{equation*}
by \Cref{cor:ftrl stability} and \Cref{lem:multi_stab}. Thus,
\begin{equation*}
    \frac{2}{3} \lambda^{(t+1)} \leq \frac{7}{8} \frac{10}{11} \frac{8}{9} \lambda^{(t+1)} \leq  \lambda^{(t)} \leq \frac{9}{8} \frac{10}{9} \frac{8}{7} \lambda^{(t+1)} \leq \frac{3}{2} \lambda^{(t+1)},
\end{equation*}
in turn implying that 
\begin{equation*}
    \left| 1 - \frac{\lambda^{(t+1)}}{\lambda^{(t)}} \right| \leq \frac{1}{2}.
\end{equation*}
Similarly, we conclude that for any $\ind \in \range{d}$,
\begin{equation*}
    \left| 1 - \frac{\vec{y}^{(t+1)}[\ind]}{\vec{y}^{(t)}[\ind]} \right| \leq \frac{1}{2},
\end{equation*}
confirming \eqref{eq:mul-claim}. Hence, following the proof of \Cref{lem:ratio close}, we derive that
\begin{equation*}
    \mleft\|\vstack{\lambda\^{t+1}}{\vec{y}\^{t+1}} - \vstack{\lambda\^t}{\vec{y}\^t}\mright\|_{(\lambda\^t, \vec y\^t)} \geq \frac{1}{4\|\cX\|_1} \left\| \frac{\vec{y}^{(t+1)}}{\lambda^{(t+1)}} - \frac{\vec{y}^{(t)}}{\lambda^{(t)}} \right\|_1 = \frac{1}{4\|\cX\|_1} \|\vec{x}^{(t+1)} - \vec{x}^{(t)} \|_1.
\end{equation*}
Combining this bound with \eqref{eq:penl} concludes the proof.
\end{proof}

We also state the following immediate implication of \Cref{lemma:approx-gamma}.

\begin{corollary}
    Suppose that $\epsilon^{(t)} \leq \frac{1}{8}$, for any $t \in \range{T}$. Then, for any $t \in \range{T-1}$ and learning rate $\eta \leq \frac{1}{256}$,
    \begin{equation*}
        \|\vec{x}^{(t+1)} - \vec{x}^{(t)} \|_1^2 \leq 192 \|\cX\|_1^2 \mleft\|\vstack{\lambda_\star\^{t+1}}{\vec{y}_\star\^{t+1}} - \vstack{\lambda\^t_\star}{\vec{y}\^t_\star}\mright\|_{(\lambda_\star\^t, \vec y_\star\^t)}^2 + 768 \|\cX\|_1^2 (\epsilon^{(t+1)})^2 + 192 \|\cX\|_1^2 (\epsilon^{(t)})^2,
    \end{equation*}
    where $\vec{x}^{(t)} \defeq \vec{y}^{(t)}/ \lambda^{(t)}$.
\end{corollary}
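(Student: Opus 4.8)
The plan is to derive the stated quadratic bound directly from the (unsquared) estimate of \Cref{lemma:approx-gamma}, of which this corollary is an immediate consequence. The first thing I would check is that the hypotheses here---namely $\epsilon^{(t)} \le \frac{1}{8}$ for every $t \in \range{T}$ and $\eta \le \frac{1}{256}$---coincide exactly with those required by \Cref{lemma:approx-gamma}. Hence that lemma applies without modification and, writing $N_t \defeq \mleft\|\vstack{\lambda_\star\^{t+1}}{\vec{y}_\star\^{t+1}} - \vstack{\lambda\^t_\star}{\vec{y}\^t_\star}\mright\|_{(\lambda_\star\^t, \vec y_\star\^t)}$ for brevity, it yields
\[
\|\vec x\^{t+1} - \vec x\^t\|_1 \le 8\|\cX\|_1\, N_t + 16\|\cX\|_1\, \epsilon^{(t+1)} + 8\|\cX\|_1\, \epsilon^{(t)}
\]
for each $t \in \range{T-1}$.

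The right-hand side is a sum of three nonnegative terms, so the second step is simply to square both sides and invoke the elementary inequality $(a+b+c)^2 \le 3(a^2 + b^2 + c^2)$ (a consequence of Cauchy--Schwarz, equivalently of the power-mean inequality). Taking $a = 8\|\cX\|_1 N_t$, $b = 16\|\cX\|_1\epsilon^{(t+1)}$, and $c = 8\|\cX\|_1\epsilon^{(t)}$ gives $3a^2 = 192\|\cX\|_1^2 N_t^2$, $3b^2 = 768\|\cX\|_1^2 (\epsilon^{(t+1)})^2$, and $3c^2 = 192\|\cX\|_1^2 (\epsilon^{(t)})^2$, which are precisely the three terms appearing in the claimed bound. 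No further estimation is required.

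Because the whole argument collapses to a single squaring step, there is essentially no genuine obstacle: the only care needed is in the constant bookkeeping---absorbing the cross terms uniformly through the factor of $3$ rather than tracking them individually---and in confirming at the outset that the parameter regime assumed in the corollary matches that of \Cref{lemma:approx-gamma}, so that the latter may be applied verbatim.
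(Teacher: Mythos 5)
Your proof is correct and matches the paper's intent: the paper itself presents this corollary as an ``immediate implication'' of \cref{lemma:approx-gamma}, and the intended argument is exactly yours---apply that lemma (whose hypotheses coincide with the corollary's, as you verify) and square via $(a+b+c)^2 \le 3(a^2+b^2+c^2)$, which reproduces the constants $192$, $768$, $192$ exactly.
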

As a result, combining this bound with \Cref{prop:approx-rvu} extends \Cref{cor:rvu} with an error term proportional to $\sum_{t=1}^T \epsilon^{(t)}$. Finally, the rest of the extension is identical to our proof of \Cref{theorem:main-detailed}.

\section{Implementation via Proximal Oracles}\label{app:proximal}

In this section we provide the omitted proofs from \Cref{sec:implementation} regarding the implementation of \algoshort using proximal oracles (recall \Cref{eq:quad_approx}).

\subsection{The Proximal Newton Method}
\label{sec:newton}

In this subsection we describe the proximal Newton algorithm of~\citet{Dinh15:Composite}, leading to \Cref{theorem:prox-Newton} we presented in \Cref{sec:implementation}. More precisely, \citet{Dinh15:Composite} studied the following composite minimization problem:
\begin{equation}
	\label{eq:comp}
	\min_{\tilx \in \R^{d+1}} \left\{ F(\tilx) \defeq f(\tilx) + g(\tilx) \right\},
\end{equation}
where $f$ is a (standard) self-concordant and convex function, and $g : \R^{d+1} \to \R \cup \{+\infty \}$ is a proper, closed and convex function. In our setting, we will let $g$ be defined as
\begin{equation*}
	g(\tilx) \defeq
	\begin{cases}
		0       & \text{if } \tilx \in \Tilde{\cX}, \\
		+\infty & \text{otherwise}.
	\end{cases}
\end{equation*}

Further, for a given time $t \in \N$, we let
\begin{equation*}
	f : \tilde{\vec{x}} \mapsto - \eta \mleft\langle \tilde{\Ut}\^t + \tilde{\vec\ut}\^{t-1}, \tilde{\vec{x}} \mright\rangle - \sum_{\ind=1}^{d+1} \log \tilde{\vec{x}}[\ind].
\end{equation*}
Before we describe the proximal Newton method, let us define $\tils_k$ as follows.

\begin{equation}
	\label{eq:s}
	\tils_k \defeq \argmin_{\tilx \in \tilde \cX} \left\{ f(\tilx_k) + (\nabla f(\tilx_k))^\top (\tilx - \tilx_k) + \frac{1}{2} (\tilx - \tilx_k)^\top \nabla^2 f(\tilx_k) (\tilx - \tilx_k) \right\},
\end{equation}
for some $\tilx_k \in \R^{d+1}_{> 0}$. We point out that the optimization problem \eqref{eq:s} can be trivially solved when we have access to a (local) proximal oracle---given in \Cref{eq:quad_approx}.

In this context, the proximal Newton method of~\citet{Dinh15:Composite} is given in~\Cref{algo:newton}. Their algorithm proceeds in two phases. In the first phase we perform \emph{damped steps} of proximal Newton until we reach the region of quadratic convergence. Afterwards, we perform \emph{full steps} of proximal Newton until the desired precision $\epsilon > 0$ has been reached. Below we summarize the main guarantee regarding \Cref{algo:newton}, namely~\citep[Theorem 9]{Dinh15:Composite}.

\begin{theorem}[~\citep{Dinh15:Composite}]
	\label{theorem:composite}
	\Cref{algo:newton} returns $\tilx_K \in \R^{d+1}_{> 0}$ such that $\|\tilx_K - \tilx^*\|_{\tilx^*} \leq 2 \epsilon$ after at most
	\begin{equation*}
		K = \left\lfloor \frac{f(\tilx_0) - f(\tilx^*)}{0.017} \right\rfloor + \left\lfloor 1.5 \ln \ln \left( \frac{0.28}{\epsilon} \right) \right\rfloor + 2
	\end{equation*}
	iterations, for any $\epsilon > 0$, where $\tilx^* = \argmin_{\tilx} F(\tilx)$, for the composite function $F$ defined in \eqref{eq:comp}.
\end{theorem}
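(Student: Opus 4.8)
The plan is to reconstruct the standard two-phase convergence analysis of the self-concordant proximal Newton method; the very form of the bound on $K$ (a linear term plus a $\ln\ln$ term) signals that the proof splits into a \emph{damped-step} phase giving a constant per-step decrease and a \emph{full-step} phase exhibiting quadratic local convergence. Throughout, write $\|\vec v\|_{\tilx} \defeq \sqrt{\vec v^\top \nabla^2 f(\tilx) \vec v}$ for the local norm induced by the Hessian of the self-concordant part $f$, and define the \emph{proximal Newton decrement} $\lambda_k \defeq \|\tils_k - \tilx_k\|_{\tilx_k}$, where $\tils_k$ is the subproblem solution \eqref{eq:s}. Since $g$ is the indicator of $\tilde\cX$ and every iterate remains in $\tilde\cX \cap \R^{d+1}_{>0}$, one has $F(\tilx) - F(\tilx^*) = f(\tilx) - f(\tilx^*)$ along the iterates, so the tracked decrease lives entirely in $f$. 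The two self-concordance tools that drive everything are the Dikin-ellipsoid Hessian comparison $(1-\|\tilde{\vec y}-\tilx\|_{\tilx})^2 \nabla^2 f(\tilx) \preceq \nabla^2 f(\tilde{\vec y}) \preceq (1-\|\tilde{\vec y}-\tilx\|_{\tilx})^{-2}\nabla^2 f(\tilx)$, valid when $\|\tilde{\vec y}-\tilx\|_{\tilx}<1$, and the scalar auxiliary function $\omega(t) \defeq t - \ln(1+t)$ that controls function values.

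For the damped phase, when $\lambda_k$ exceeds a fixed threshold $\beta$ (of order $0.2$) I would show that the update $\tilx_{k+1} = \tilx_k + \frac{1}{1+\lambda_k}(\tils_k - \tilx_k)$ yields a guaranteed decrease $F(\tilx_k) - F(\tilx_{k+1}) \ge \omega(\lambda_k) \ge \omega(\beta)$. The key inequality couples the first-order optimality of $\tils_k$ in the quadratic subproblem \eqref{eq:s}, which certifies via a variational inequality that $\tils_k - \tilx_k$ is a descent direction for $F$, with the self-concordant upper bound on $f$ along the segment. Choosing $\beta$ so that $\omega(\beta) \ge 0.017$ then caps the number of damped steps at $\lfloor (f(\tilx_0) - f(\tilx^*))/0.017 \rfloor$, which is the first term of $K$.

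Once $\lambda_k < \beta$ I would switch to full steps $\tilx_{k+1} = \tils_k$ and establish the quadratic recursion $\lambda_{k+1} \le \big(\tfrac{\lambda_k}{1-\lambda_k}\big)^2$; unrolling from $\lambda_k < \beta < 1$ gives $\lambda_k \le c^{2^k}$ for some $c<1$, so the decrement decays doubly exponentially and $O(\ln\ln(1/\epsilon))$ full steps suffice to reach $\lambda_K \le \epsilon$, with the constants tracked to produce the $\lfloor 1.5\ln\ln(0.28/\epsilon)\rfloor + 2$ term. To convert this into the claimed distance bound I would invoke self-concordance once more near $\tilx^*$: when the decrement is small, $\|\tilx_K - \tilx^*\|_{\tilx^*}$ is governed by $\lambda_K$ up to a factor that the Dikin comparison keeps below $2$, yielding $\|\tilx_K - \tilx^*\|_{\tilx^*} \le 2\epsilon$.

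The main obstacle is squarely the quadratic-convergence step. Unlike the smooth Newton iteration, the proximal step is defined through a nonsmooth variational inequality rather than a linear solve, so successive residuals cannot be compared by a plain Taylor expansion. Instead I expect to rely on the monotonicity of the proximal operator together with the Dikin Hessian bounds to control the error introduced by forming the subproblem with $\nabla^2 f(\tilx_k)$ rather than $\nabla^2 f(\tilx^*)$; getting the constants in this estimate sharp enough to match the stated $\beta$-threshold and the $0.28$ factor is the delicate part of the argument.
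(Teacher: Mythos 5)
The paper does not actually prove \cref{theorem:composite}---it is imported verbatim as Theorem 9 of \citep{Dinh15:Composite}---and your two-phase reconstruction (damped steps decreasing the objective by at least $\omega(\lambda_k)\ge\omega(0.2)\approx 0.0177>0.017$, explaining the first term of $K$; then full steps with a quadratic decrement recursion unrolled doubly-exponentially to give the $\ln\ln$ term; and a final self-concordance argument converting $\lambda_K\le\epsilon$ into $\|\tilx_K-\tilx^*\|_{\tilx^*}\le 2\epsilon$) is exactly the argument of that reference, down to the threshold $\sigma=0.2$ appearing in \cref{algo:newton}. The only caution is that in the composite setting the local recursion established by \citet{Dinh15:Composite} carries slightly different constants than the smooth-Newton bound $\lambda_{k+1}\le\bigl(\lambda_k/(1-\lambda_k)\bigr)^2$ you quote---precisely the constant-tracking issue you yourself flag as the delicate part---so your plan is structurally faithful and correct in approach.
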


To establish \Cref{theorem:prox-Newton} from this guarantee, it suffices to initialize \Cref{algo:newton} at every iteration $t \geq 2$ with $\tilx_0 \defeq \tilx^{(t-1)} = (\lambda^{(t-1)}, \vec{y}^{(t-1)})$. Then, as long as $\epsilon^{(t-1)}$ is sufficiently small, the number of iterations predicted by \Cref{theorem:composite} will be bounded by $O(\log \log (1/\epsilon))$, in turn establishing \Cref{theorem:prox-Newton}.

\begin{algorithm}[htp]
	\caption{Proximal Newton~\citep{Dinh15:Composite}}
	\label{algo:newton}
	\DontPrintSemicolon
	\KwData{Initial point $\tilx_{0}$ \\
		Precision $\epsilon > 0$ \\
		Constant $\sigma \defeq 0.2$
	}\vspace{2mm}
	\For{$k = 1, \dots, K$}{
	Obtain the proximal Newton direction $\tild_k \gets \tils_k - \tilx_k $, where $\tils_k$ is defined in \eqref{eq:s} \;
	Set $\lambda_k \gets \| \tild_k \|_{\tilx_k}$\;
	\If{$\lambda_k > 0.2$}{
		$\tilx_{k+1} \gets \tilx_k + \alpha_k \tild_k$, where $\alpha_k \defeq (1 + \lambda_k)^{-1}$\Comment*{\color{commentcolor}Damped Step]\!\!\!\!\!}
	}
	\ElseIf{$\lambda_k > \epsilon$}{
		$\tilx_{k+1} \gets \tilx_k + \tild_k$ \Comment*{\color{commentcolor}Full Step]\!\!\!\!\!}
	}
	\Else{\textbf{return} $ \tilx_k$}
	}
\end{algorithm}

\subsection{Proximal Oracle for Normal-Form and Extensive-Form Games}

In order to show that the proximal oracle of \Cref{sec:implementation} can be implemented efficiently for probability simplexes (\emph{i.e.}, the strategy sets of normal-form games) and sequence-form strategy spaces (\emph{i.e.}, the strategy sets of extensive-form games), we will prove a slightly stronger result concerning \emph{treeplex} sets, of which sequence-form strategy spaces are instances.

\begin{definition}
	A set $Q \subseteq [0,+\infty)^d$, $d \ge 1$, is \emph{treeplex} if it is:
	\begin{enumerate}
		\item a simplex $Q = \Delta^d$;
		\item a Cartesian product of treeplex sets $Q_1 \times \dots \times Q_K$; or
		\item (Branching operation) a set of the form
		      \[
			      \triangle(Q_1,\dots,Q_K) \defeq \{(\vec{x}, \vec{x}[1]\vec{q}_1, \dots, \vec{x}[K]\vec{q}_K): \vec{x} \in \Delta^K, \vec{q}_k \in Q_k ~~\forall k \in \range K\},
		      \]
		      where $Q_1,\dots,Q_K$ are treeplex.
	\end{enumerate}
\end{definition}

We will show that any treeplex $Q$ is such that $[0,1]Q$ admits an efficient (positive-definite) quadratic optimization oracle. This is sufficient, since it is well-known that every sequence-form strategy space $\cX$ is treeplex (e.g., \citet{Hoda10:Smoothing}) and therefore, by definition, so is the set $\{(1, \vec x) : \vec x \in \cX\}$.

Introduce the \emph{value function}
\NewDocumentCommand{\val}{O{t}mmm}{V_{#2}(#1; #3, #4)}
\NewDocumentCommand{\lam}{O{t}mmm}{\lambda_{#2}(#1; #3, #4)}
\NewDocumentCommand{\bk}{O{t}mmm}{B_{#2}(#1; #3, #4)}
\begin{align*}
	\val{Q}{\vec g}{\vec w} \defeq \min_{\vec x \in t Q} \left\{ - \vec g^\top \vx + \frac{1}{2}\sum_{\ind=1}^d \mleft(\frac{\vec{x}[\ind]}{\vec{w}[\ind]}\mright)^2 \right\} \qquad\qquad (t \geq 0, \vec w > \vec 0)
	\numberthis{eq:prox}
\end{align*}
(note the rescaling by $t$ in the domain of the minimization).
We will be interested in the derivative of $\val{Q}{\vec g}{\vec w}$, which we will denote as\footnote{For $t=0$ we define $\lam{Q}{\vec g}{\vec w}$ in the usual way as
	\[
		\lam[0]{Q}{\vec g}{\vec w} = \lim_{t\to 0^+} \frac{\val{Q}{\vec g}{\vec w} - \val[0]{Q}{\vec g}{\vec w}}{t} = \lim_{t\to 0^+} \frac{\val{Q}{\vec g}{\vec w}}{t}.
	\]
}
\[
	\lam{Q}{\vec g}{\vec w} \defeq \frac{d}{dt}\val{Q}{\vec g}{\vec w}.
\]


\newcommand{\smpl}{SMPL\xspace}
\paragraph{Preliminaries on Strictly Monotonic Piecewise-Linear (\smpl) Functions}
\begin{definition}[\smpl function and standard representation]
	Given an interval $I\subseteq \R$ and a function $f : I \to \bbR$, we say that $f$ is \smpl if it is strictly monotonically increasing and piecewise-linear on $I$.
\end{definition}
\begin{definition}[Quasi-\smpl function]
	A quasi-\smpl function is a function $f: \R \to [0,+\infty)$ of the form $f(x) = [g(x)]^+$ where $g(x): \R \to \R$ is \smpl and $[\,\cdot\,]^+\defeq\max\{0, \,\cdot\,\}$.
\end{definition}
\begin{definition}
	Given a \smpl or quasi-\smpl function $f$, a \emph{standard representation} for it is an expression of the form
	\[
		f(x) = \zeta + \alpha_0 x + \sum_{s=1}^S \alpha_s [x - \beta_s]^+,
	\]
	valid for all $x$ in the domain of $f$, where $S \in \N \cup \{0\}$ and $\beta_1 < \cdots < \beta_S$. The size of the standard representation is defined as the natural number $S$.
\end{definition}

We now mention four basic results about \smpl and quasi-\smpl functions. The proofs are elementary and omitted.

\begin{lemma}\label{lem:smpl mul}
	Let $f: I \to \R$ be \smpl, and consider a standard representation of $f$ of size $S$. Then, for any $\zeta\in\R$ and $\alpha \ge 0$, a standard representation for the \smpl function $I \ni x\mapsto \zeta + \alpha f(x)$ can be computed in $O(S+1)$ time.
\end{lemma}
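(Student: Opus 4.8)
The plan is to proceed by a direct substitution, exploiting the fact that both scaling by a nonnegative constant and adding a constant act \emph{linearly} on a standard representation while leaving its breakpoints untouched. Writing the given standard representation of $f$ (renaming its constant term to $\zeta_f$ to avoid a clash with the added constant $\zeta$) as
\[
f(x) = \zeta_f + \alpha_0 x + \sum_{s=1}^S \alpha_s\,[x - \beta_s]^+, \qquad \beta_1 < \cdots < \beta_S,
\]
I would plug this into $x \mapsto \zeta + \alpha f(x)$ and distribute, obtaining
\[
\zeta + \alpha f(x) = (\zeta + \alpha\zeta_f) + (\alpha\alpha_0)\,x + \sum_{s=1}^S (\alpha\alpha_s)\,[x - \beta_s]^+.
\]

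This expression is already of the required form: the new constant term is $\zeta + \alpha\zeta_f$, the new leading coefficient is $\alpha\alpha_0$, the new breakpoint coefficients are $\alpha\alpha_s$, and crucially the breakpoints $\beta_1 < \cdots < \beta_S$ are inherited verbatim. Because the breakpoints are unchanged, their strict ordering is preserved automatically, so no re-sorting is needed and the representation is valid by construction. Computing it amounts to scaling the $S+2$ coefficients by $\alpha$ and performing one addition for the constant term, which is $O(S+1)$ time.

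The only substantive point to verify is that the output genuinely represents a \smpl function when $\alpha > 0$. On each linear piece the slope of $f$ equals $\alpha_0 + \sum_{s\,:\,\beta_s < x} \alpha_s$, which is strictly positive since $f$ is \smpl; multiplying every such slope by $\alpha > 0$ keeps it strictly positive, so $\zeta + \alpha f$ remains strictly increasing and piecewise-linear with exactly the representation above. I do not expect any real obstacle, since the lemma is essentially a bookkeeping statement; the only mild subtlety is the degenerate case $\alpha = 0$, where the formula collapses to the constant $\zeta$ (all non-constant coefficients vanish). I would simply remark that the same formula produces a correct representation in this case as well, even though the resulting function is then no longer strictly monotonic.
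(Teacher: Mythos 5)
Your proof is correct. The paper does not actually include a proof of this lemma---it states that the proofs of these four basic facts about \smpl functions are ``elementary and omitted''---and your direct substitution argument (scale the $S+2$ coefficients by $\alpha$, absorb $\zeta$ into the constant term, keep the breakpoints and their ordering verbatim) is precisely the intended bookkeeping argument, including the worthwhile observation that strict monotonicity is only preserved when $\alpha > 0$, a degenerate case the lemma statement itself glosses over.
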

\begin{lemma}\label{lem:quasi smpl sum}
	The sum $f_1+\dots+f_n$ of $n$ \smpl (resp., quasi-\smpl) functions $f_i : I\to\R$ is a \smpl (resp., quasi-\smpl) function $I\to\R$. Furthermore, if each $f_i$ admits a standard representation of size $S_i$, then a standard representation of size at most $S_1 + \dots + S_n$ for their sum can be computed in $O((S_1 + \dots + S_n + 1) \log n)$ time.
\end{lemma}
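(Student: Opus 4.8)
The plan is to decouple two separate assertions: the \emph{structural} one, that $f_1+\dots+f_n$ stays in the same class, and the \emph{algorithmic} one, that a standard representation of the stated size can be built in the stated time. The algorithmic part turns out to be identical in the \smpl and quasi-\smpl cases, so the only genuinely different work is the structural classification. For the \smpl case the structural claim is immediate: a finite sum of strictly increasing functions is strictly increasing, and a finite sum of continuous piecewise-linear functions is continuous and piecewise-linear, with breakpoints contained in the union of the summands' breakpoints; hence $f_1 + \dots + f_n$ is again \smpl.

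The quasi-\smpl case is where the real content lies, and I expect it to be the main obstacle. Write $f_i = [g_i]^+$ with each $g_i$ \smpl. Since $g_i$ is strictly increasing, $[g_i]^+$ vanishes on $I \cap (-\infty, \beta_i^*]$ and equals the strictly increasing function $g_i$ on $I \cap (\beta_i^*, \infty)$, where $\beta_i^*$ is the (unique, if it exists) zero of $g_i$; in particular each $f_i$ is non-negative and non-decreasing. Setting $\beta^* \defeq \min_i \beta_i^*$, the sum $F \defeq \sum_i f_i$ therefore vanishes on $I \cap (-\infty,\beta^*]$ and is strictly increasing on $I \cap (\beta^*, \infty)$ --- strict monotonicity holds because just to the right of $\beta^*$ at least one summand is strictly increasing while the rest are non-decreasing. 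To exhibit $F$ as $[g]^+$ for a single \smpl $g$, I would set $g \defeq F$ on $I \cap [\beta^*,\infty)$ (which is \smpl there, with $g(\beta^*)=0$) and extend $g$ to the left of $\beta^*$ by any line of positive slope through $(\beta^*,0)$; the resulting $g$ is \smpl on all of $I$ and satisfies $[g]^+ = F$. The degenerate cases --- every $g_i$ negative throughout $I$, so $F\equiv 0$, or some $g_i$ positive throughout, so $\beta^*$ is the left endpoint of $I$ --- are handled the same way. This shows $F$ is quasi-\smpl.

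For the algorithmic claim, observe that standard representations add termwise: summing the given expressions for the $f_i$ yields $F(x) = (\sum_i \zeta_i) + (\sum_i \alpha_0^{(i)})\,x + \sum_{i,s}\alpha_s^{(i)}[x-\beta_s^{(i)}]^+$, and after collecting the $[\,\cdot\,]^+$ terms that share a common breakpoint this is an expression of the required form whose number of distinct breakpoints is at most $S_1+\dots+S_n$. Computing it amounts to summing the affine parts, costing $O(n)$, and merging the $n$ already-sorted breakpoint lists while accumulating coefficients at coincident breakpoints; a heap-based $n$-way merge does the latter in $O((S_1+\dots+S_n)\log n)$ time, which dominates and yields the claimed $O((S_1+\dots+S_n+1)\log n)$ bound. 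Crucially, this construction is insensitive to whether the $f_i$ are \smpl or quasi-\smpl: the termwise sum is a valid standard representation of $F$ in either case, and the preceding two paragraphs certify that $F$ lies in the appropriate class, which completes the proof.
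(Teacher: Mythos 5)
The paper never actually proves this lemma: it is one of the four facts about \smpl and quasi-\smpl functions whose proofs are declared ``elementary and omitted,'' so there is no official argument to compare yours against --- your write-up supplies exactly what the paper leaves out, and it is correct. The termwise addition of standard representations plus a heap-based $n$-way merge of sorted breakpoint lists is the intended routine part, and you correctly identify that the only step with content is closure of quasi-\smpl functions under addition. Your construction there is sound: each $f_i = [g_i]^+$ vanishes up to the unique zero $\beta_i^*$ of $g_i$ and coincides with the strictly increasing $g_i$ thereafter, so the sum $F$ vanishes on $(-\infty,\beta^*]$ with $\beta^* \defeq \min_i \beta_i^*$, is strictly increasing to the right of $\beta^*$ (one summand is strict there, the rest are non-decreasing), and gluing a positively sloped ray through $(\beta^*,0)$ onto $F$ restricted to $[\beta^*,\infty)$ yields an \smpl function $g$ with $[g]^+ = F$. (The degenerate cases you mention are in fact vacuous: a strictly increasing piecewise-linear function on all of $\R$ with finitely many pieces is unbounded in both directions, so each $g_i$ always has a zero.)

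One accounting point deserves a flag, though it is inherited from the lemma statement rather than introduced by you. Your claim that the $O((S_1+\dots+S_n)\log n)$ merge ``dominates'' the $O(n)$ cost of summing the affine parts fails when $S \defeq S_1+\dots+S_n \ll n$ (e.g., all $S_i=0$), in which case your algorithm runs in $O(n)$ time, not $O((S+1)\log n)$. But no algorithm can do better than $\Omega(n)$ there, since the $n$ affine parts must at least be read, so the paper's stated bound tacitly presumes $S_i \geq 1$; and in the only place the lemma is invoked (\cref{lem:central}), each summand's representation has size $d_k \geq 1$, hence $S \geq n$ and both your argument and the stated bound are valid as used.
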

\begin{lemma}\label{lem:smpl thresh}
	Let $f: \R \to \R$ be \smpl, and consider a standard representation of $f$ of size $S$. Then, for any $\beta \in \bbR$, a standard representation of size at most $S$ for the quasi-\smpl function $I \ni x\mapsto [f(x) - \beta]^+$ can be computed in $O(S+1)$ time.
\end{lemma}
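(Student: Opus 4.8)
The plan is to reduce the statement to understanding the positive part of a single \smpl function, and then to track how clipping at zero acts on a standard representation.

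\emph{Reduction.} First note that $f-\beta$ is again \smpl, and a standard representation for it of the \emph{same} size $S$ is obtained by subtracting $\beta$ from the constant term $\zeta$ while leaving $\alpha_0$ and every pair $(\alpha_s,\beta_s)$ untouched. Thus it suffices to exhibit a standard representation of $[g]^+$, where $g\defeq f-\beta$ is \smpl with a known representation of size $S$; by definition this is exactly a quasi-\smpl function, so the remaining task is purely about clipping an \smpl function at zero.

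\emph{Structure of $[g]^+$.} Since $g$ is continuous and strictly increasing, it has at most one zero. I would dispose of the two degenerate cases first: if $g\ge 0$ on all of $\R$, then $[g]^+=g$ and the given size-$S$ representation already works; if $g\le 0$ everywhere, then $[g]^+\equiv 0$ has the trivial representation of size $0$. Otherwise there is a unique root $x_0$, which I locate with a single left-to-right pass over the sorted breakpoints $\beta_1<\cdots<\beta_S$ that maintains the running slope $\alpha_0+\sum_{s:\beta_s\le x}\alpha_s$ and the running value of $g$; this identifies the affine piece on which $g$ changes sign and then solves one linear equation, all in $O(S+1)$ time. On $(-\infty,x_0]$ we have $[g]^+=0$, and on $[x_0,+\infty)$ we have $[g]^+=g$.

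\emph{New representation and size.} Let $m$ be the slope of $g$ on the piece containing $x_0$, i.e.\ $m=\alpha_0+\sum_{s:\beta_s\le x_0}\alpha_s$. Because $[g]^+$ vanishes to the left of $x_0$, its constant and linear coefficients are both $0$; the breakpoint $x_0$ carries slope $m$, every original breakpoint lying strictly to the right of $x_0$ retains its coefficient $\alpha_s$, and every breakpoint at or below $x_0$ is absorbed into the flat zero piece. Concretely,
\[
[g(x)]^+ = m\,[x-x_0]^+ + \sum_{s\,:\,\beta_s>x_0}\alpha_s\,[x-\beta_s]^+ .
\]
Its breakpoints are $x_0$ together with the $\beta_s>x_0$; since the discarded breakpoints $\beta_s\le x_0$ are replaced by the single breakpoint $x_0$ (which merely merges with an existing $\beta_s$ when $x_0$ coincides with one), the number of breakpoints does not increase, giving size at most $S$. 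Assembling the new coefficient/breakpoint list is one further pass, so the total cost is $O(S+1)$.

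\emph{Main obstacle.} The only real care is the breakpoint bookkeeping in the sign-change case: correctly accumulating the slope $m$ at $x_0$, discarding exactly the breakpoints absorbed by the zero piece, and verifying the boundary subcases — the root landing in the first or last piece, coinciding with some $\beta_s$, or $g$ having no root at all — so that both the size bound and the $O(S+1)$ running time hold in every case.
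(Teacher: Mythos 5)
Your construction is the right one, and most of your proof is sound: the reduction to $g \defeq f-\beta$, the explicit formula
\begin{align*}
[g(x)]^+ \;=\; m\,[x-x_0]^+ \;+\; \sum_{s\,:\,\beta_s>x_0}\alpha_s\,[x-\beta_s]^+,
\qquad
m \;\defeq\; \alpha_0+\sum_{s\,:\,\beta_s\le x_0}\alpha_s,
\end{align*}
and the $O(S+1)$ running time are all correct (the paper declares this proof elementary and omits it, so correctness is the only standard to meet). The genuine gap is the size bound, and it sits precisely in the boundary subcase you flag at the end but never verify: the root $x_0$ landing strictly inside the first piece, $x_0<\beta_1$. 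In that case no breakpoint is discarded, your representation has $S+1$ hinge terms, and the claim that ``the number of breakpoints does not increase'' is false. This is not a fixable bookkeeping slip: on the domain $\R$ an SMPL function must have $\alpha_0>0$, so $[g]^+$ has a genuine kink at $x_0$ on top of the $S$ kinks at $\beta_1<\cdots<\beta_S$ (all of which survive thresholding when the $\alpha_s$ are nonzero, since they lie in the region where $[g]^+=g$), and a piecewise-linear function with $S+1$ slope discontinuities admits no standard representation of size $S$. Concretely, $f(x)=x+[x-1]^+$ has size $S=1$, yet for $\beta=0$ the function $[f(x)]^+=[x]^+ + [x-1]^+$ requires size $2$; the lemma as literally stated is false in this edge case. (Also, your two ``degenerate cases'' are vacuous: $\alpha_0>0$ and positive terminal slope force exactly one root.)

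The correct conclusion of your argument is therefore ``size at most $S+1$, and at most $S$ whenever some $\beta_s\le x_0$''---equivalently, whenever $f(\beta_1)\le\beta$. The stronger bound is what the paper actually relies on, and it does hold at the lemma's point of use: in \cref{lem:smpl fp} the thresholded function is $(x+f)^{-1}$, whose root $f(0)$ is exactly the image of the kink of $x\mapsto x+f([x]^+)$ at $x=0$, so the root coincides with a breakpoint and your ``merging'' step always applies there. This matters downstream, since the accounting in \cref{lem:central} sums the sizes to exactly $K+\sum_k d_k = d$ and has no slack to absorb an extra unit per child. So to complete the proof you should either weaken the conclusion to $S+1$ (and then redo that accounting), or prove the lemma under the additional---and, in this paper's application, satisfied---hypothesis $\beta_1\le x_0$.
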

\begin{lemma}\label{lem:smpl inv}
	The inverse $f^{-1}: \mathrm{range}(f) \to \R$ of a \smpl function $f: I \to \R$ is \smpl. Furthermore, if $f$ admits a standard representation of size $S$, then a standard representation for $f^{-1}$ of size at most $S$ can be computed in $O(S + 1)$ time.
\end{lemma}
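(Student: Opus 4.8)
The plan is to use the fact that the inverse of a strictly increasing function is obtained by reflecting its graph across the diagonal, and to track how this reflection transforms a standard representation. Starting from a standard representation $f(x) = \zeta + \alpha_0 x + \sum_{s=1}^S \alpha_s[x-\beta_s]^+$ with $\beta_1 < \cdots < \beta_S$, I would first record the slope of $f$ on each of its linear pieces as the partial sum $m_s \defeq \alpha_0 + \alpha_1 + \cdots + \alpha_s$, so that $m_0 = \alpha_0$ is the slope to the left of $\beta_1$, $m_s$ the slope on $(\beta_s,\beta_{s+1})$, and $m_S$ the slope to the right of $\beta_S$. The key observation is that $f$ being \smpl is equivalent to $m_s > 0$ for every $s \in \{0,1,\dots,S\}$: strict monotonicity is exactly the positivity of all these partial slopes, and this is what will make the inverse well-defined and again \smpl.

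Next I would check that $f^{-1}$ is \smpl. Since $f$ is continuous, strictly increasing, and piecewise-linear, it is a bijection from $I$ onto the interval $\mathrm{range}(f)$, and its inverse is continuous and strictly increasing. Writing $y_s \defeq f(\beta_s)$ for the images of the breakpoints, strict monotonicity gives $y_1 < \cdots < y_S$, and on each maximal subinterval of $\mathrm{range}(f)$ delimited by consecutive $y_s$ the map $f^{-1}$ inverts an affine function of positive slope $m_s$, hence is itself affine with slope $1/m_s > 0$. Therefore $f^{-1}$ is strictly increasing and piecewise-linear with breakpoints contained in $\{y_1,\dots,y_S\}$, i.e.\ it is \smpl with at most $S$ breakpoints.

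Finally I would assemble the standard representation of $f^{-1}$ by reading off its leftmost slope $1/m_0$ together with the jump in slope at each breakpoint $y_s$, namely
\begin{equation*}
f^{-1}(y) = \zeta' + \frac{1}{m_0}\,y + \sum_{s=1}^S \alpha_s'\,[y-y_s]^+, \qquad \alpha_s' \defeq \frac{1}{m_s} - \frac{1}{m_{s-1}},
\end{equation*}
with the constant $\zeta'$ pinned down by the single known value $f^{-1}(y_1) = \beta_1$, giving $\zeta' = \beta_1 - y_1/m_0$. All the required quantities are computable in $O(S+1)$ time: the partial slopes $m_0,\dots,m_S$ by one left-to-right sweep, the breakpoint images incrementally via $y_{s+1} = y_s + m_s(\beta_{s+1}-\beta_s)$, and each jump $\alpha_s'$ together with the constant $\zeta'$ by $O(1)$ arithmetic. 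Dropping any indices with $\alpha_s' = 0$ only shrinks the representation, so its size is at most $S$.

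I do not expect a genuine obstacle here; the content is elementary bookkeeping. The one point deserving explicit care is that the breakpoints of $f^{-1}$ emerge \emph{already sorted}, as the definition of a standard representation demands: this is guaranteed simultaneously by the strict monotonicity of $f$ (which forces $y_1 < \cdots < y_S$) and by the positivity of every partial slope $m_s$ (which makes each inverse slope $1/m_s$ well-defined and positive). Verifying these two invariants, and noting that the reflection neither creates nor destroys breakpoints, is what keeps the size bounded by $S$.
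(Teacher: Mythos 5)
Your proof is correct, and it is essentially the argument the paper intends: the paper itself omits the proof of this lemma, stating only that the four basic SMPL results are ``elementary and omitted.'' Your bookkeeping---partial slopes $m_s = \alpha_0 + \cdots + \alpha_s > 0$, breakpoint images $y_s = f(\beta_s)$ computed incrementally, slope jumps $\alpha_s' = \frac{1}{m_s} - \frac{1}{m_{s-1}}$, and the constant pinned down at $(y_1, \beta_1)$---is exactly the natural reflection argument, and it correctly delivers the sorted breakpoints, the size bound $S$, and the $O(S+1)$ running time.
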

\begin{lemma}\label{lem:quasi smpl inv}
	Let $f: \R \to [0, +\infty)$ be quasi-\smpl.
	The restricted inverse $f^{-1}: (0, +\infty) \to \R$  of $f$ is \smpl, where we restrict the domain to $(0,+\infty)$ because $f^{-1}(0)$ may be multivalued. Furthermore, if $f$ admits a standard representation of size $S$, then a standard representation of size at most $S$ for $f^{-1}$ can be computed in $O(S + 1)$ time.
\end{lemma}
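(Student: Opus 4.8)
The plan is to reduce the statement to \cref{lem:smpl inv}, the analogous fact for honest \smpl functions, by isolating the branch on which $f$ is strictly increasing. Write $f = [g]^+$ with $g : \R \to \R$ \smpl. Since $g$ is strictly increasing and piecewise-linear on all of $\R$, its leftmost and rightmost pieces have positive slope, so $g(x) \to -\infty$ as $x \to -\infty$ and $g(x) \to +\infty$ as $x \to +\infty$. Hence, setting $x_0 \defeq g^{-1}(0)$, we have $f \equiv 0$ on $(-\infty, x_0]$ and $f = g$ strictly increasing on $[x_0, +\infty)$, with $\mathrm{range}(f) = [0,+\infty)$. In particular the restricted inverse is well-defined on all of $(0,+\infty)$, and for every $y > 0$ one has $f^{-1}(y) = g^{-1}(y) \in (x_0, +\infty)$; thus $f^{-1}$ on $(0,+\infty)$ depends only on the strictly increasing branch $f|_{[x_0,\infty)}$, which is precisely why excluding $0$ makes the multivaluedness of $f^{-1}(0)$ harmless.

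Next I would read off this branch from the given standard representation $f(x) = \zeta + \alpha_0 x + \sum_{s=1}^S \alpha_s [x - \beta_s]^+$. Since $f$ vanishes on the nonempty interval $(-\infty, \min\{\beta_1, x_0\})$, where it equals the affine function $\zeta + \alpha_0 x$, matching a line to $0$ on an interval forces $\zeta = 0$ and $\alpha_0 = 0$; after a linear scan to discard any breakpoints with zero effective slope, the first breakpoint coincides with $x_0$, where $f$ turns on with some slope $\alpha_1 > 0$, giving $f(x) = \sum_{s=1}^S \alpha_s [x-\beta_s]^+$ with $\beta_1 = x_0$. To put myself in a position to invoke \cref{lem:smpl inv}, I then build a genuinely \smpl function on all of $\R$ that agrees with $f$ on $[x_0,\infty)$, by extending the first linear piece leftward:
\[
    \bar f(x) \defeq \alpha_1 (x - \beta_1) + \sum_{s=2}^S \alpha_s [x - \beta_s]^+ .
\]
This is already in standard form, with constant $-\alpha_1 \beta_1$, linear coefficient $\alpha_1$, and breakpoints $\beta_2 < \cdots < \beta_S$, so of size $S-1$; it is strictly increasing because every one of its partial slopes equals a partial slope of $f$ on $[x_0,\infty)$ (all positive), and it satisfies $\bar f = f$ on $[\beta_1,\infty)$ while $\bar f < 0$ to the left of $\beta_1$. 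Consequently $\bar f(x) > 0 \iff x > \beta_1$, which yields $\bar f^{-1}(y) = f^{-1}(y)$ for every $y > 0$.

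Finally I would apply \cref{lem:smpl inv} to $\bar f$, obtaining a standard representation of $\bar f^{-1}$ of size at most $S-1 \le S$ in $O(S+1)$ time, and restrict it to $(0,+\infty)$: the same representation remains valid there, and any breakpoints lying in $(-\infty,0]$ can be folded into the constant and linear coefficients if one prefers all breakpoints inside the domain, which only lowers the size. Since forming $\bar f$ from $f$ is an $O(S)$ rewrite, the total cost is $O(S+1)$, as claimed. I expect the main obstacle to be the structural bookkeeping of the first two paragraphs rather than any computation: one must argue carefully that a standard representation of a quasi-\smpl function necessarily has vanishing constant and linear terms and turns on exactly at $x_0$, and that extending the initial slope leftward produces a globally \smpl function whose inverse agrees with $f^{-1}$ exactly on the positive reals. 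Once that is established, the size and runtime bounds follow mechanically from \cref{lem:smpl inv}.
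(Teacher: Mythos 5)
Your proof is correct and takes essentially the same approach as the paper: isolate the strictly increasing branch of $f$ and reduce to \cref{lem:smpl inv}. The only difference is cosmetic---the paper restricts $g$ to the interval $I = \{x : g(x) > 0\}$ and inverts there, while you extend the positive branch to a globally defined \smpl function $\bar f$ on $\R$; your explicit bookkeeping (forcing $\zeta = \alpha_0 = 0$, discarding the vanishing coefficients left of $x_0$, and writing $\bar f$ in standard form) supplies the size and runtime details that the paper's one-line proof leaves implicit.
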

\begin{proof}
	We have $f(x) = [g(x)]^+$ where $g$ is SMPL. It follows that the function $\bar g: I \rightarrow \R$ defined as $\bar g(x) = g(x)$ for the interval $I= \{x : g(x) > 0\}$ is SMPL as well.
	For any $x$ such that $f(x) > 0$ we have $x\in I$, and thus $f^{-1} = g^{-1}$, and it follows from \cref{lem:smpl inv} that $f^{-1}$ is SMPL.
\end{proof}

\begin{lemma}\label{lem:smpl fp}
	Let $f: [0,+\infty) \to \R$ be a \smpl function, and consider the function $g$ that maps $y$ to the unique solution to the equation $x = [y - f(x)]^+$. Then, $g$ is quasi-\smpl and satisfies $g(y) = [(x + f)^{-1}(y)]^+$, where $(x + f)^{-1}$ denotes the inverse of the \smpl function $x \mapsto x + f([x]^+)$.
\end{lemma}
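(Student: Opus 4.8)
The plan is to reduce everything to the auxiliary map $\phi(x) \defeq x + f([x]^+)$ and to identify $g$ with $[\phi^{-1}]^+$. First I would verify that $\phi$ is \smpl on all of $\R$: on $[0,+\infty)$ we have $\phi(x) = x + f(x)$, which is strictly increasing and piecewise-linear since $f$ is; on $(-\infty,0)$ we have $\phi(x) = x + f(0)$, again strictly increasing and (affine hence) piecewise-linear; and the two branches agree at $x=0$, where both equal $f(0)$, so $\phi$ is continuous, strictly increasing and piecewise-linear on $\R$, i.e.\ \smpl. By \cref{lem:smpl inv} its inverse $(x+f)^{-1} = \phi^{-1}$ is therefore \smpl as well, and (being a bijection $\R\to\R$) it is defined on all of $\R$.

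Next I would pin down the solution of $x = [y - f(x)]^+$ by a case analysis on the sign of $y - f(x)$. Any solution satisfies $x \ge 0$, so it lies in the domain of $f$. If $y \le f(0)$, then $x=0$ gives $[y-f(0)]^+ = 0 = x$, so $x=0$ solves the equation; conversely the map $x \mapsto [y-f(x)]^+ - x$ is non-increasing plus strictly decreasing, hence strictly decreasing, so the root is unique and $g(y)=0$. If instead $y > f(0)$, a solution with $x=0$ is impossible, so the active case is $x = y - f(x)$, i.e.\ $x + f(x) = y$ with $x>0$; since $x \mapsto x+f(x)$ is a strictly increasing bijection of $[0,+\infty)$ onto $[f(0),+\infty)$, this has a unique positive root, which is $g(y)$.

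It remains to match these two cases with $[\phi^{-1}(y)]^+$. Because $\phi(0)=f(0)$ and $\phi$ is strictly increasing, $\phi^{-1}(y)\le 0$ exactly when $y \le f(0)$, in which case $[\phi^{-1}(y)]^+ = 0 = g(y)$; and for $y > f(0)$ we get $\phi^{-1}(y) > 0$ with $\phi^{-1}(y) + f(\phi^{-1}(y)) = y$, so $[\phi^{-1}(y)]^+ = \phi^{-1}(y)$ coincides with the unique positive root found above. Hence $g(y) = [(x+f)^{-1}(y)]^+$ for all $y$, and since $(x+f)^{-1}$ is \smpl this exhibits $g$ in the form $[\,\text{\smpl}\,]^+$, i.e.\ $g$ is quasi-\smpl, as claimed.

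The main obstacle I anticipate is the bookkeeping at the threshold $y = f(0)$ (equivalently $x=0$): one must use the $[\,\cdot\,]^+$ inside the definition of $\phi$ precisely so that $\phi$ extends to a strictly increasing \smpl function on \emph{all} of $\R$ — otherwise its inverse would only be defined on $[f(0),+\infty)$ and the clean identity $g = [\phi^{-1}]^+$ would break down on the regime $y \le f(0)$, where $g$ must return $0$. Everything else is the routine monotonicity argument establishing existence and uniqueness of the fixed point.
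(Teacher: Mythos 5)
Your proposal is correct and follows essentially the same route as the paper's proof: both reduce the fixed-point equation to the \smpl map $x \mapsto x + f([x]^+)$, establish uniqueness via strict monotonicity of $x \mapsto x - [y-f(x)]^+$, and split into two cases that are equivalent (your split on $y \lessgtr f(0)$ matches the paper's split on the sign of $(x+f)^{-1}(y)$). Your write-up is, if anything, slightly more careful, since you explicitly verify that $x \mapsto x + f([x]^+)$ is \smpl on all of $\R$ (a fact the paper's proof takes for granted from the lemma statement).
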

\begin{proof}
	For any $y \in \R$, the function $h_y : x \mapsto x - [y - f(x)]^+$ is clearly \smpl on $[0, +\infty)$. Furthermore, $h_y(0) \le 0$ and $h_y(+\infty) = +\infty$, implying that $h_y(x) = 0$ has a unique solution. We now show that $g(y) = [(x+f)^{-1}(y)]^+$ is that solution, that is, it satisfies $g(y) = [y - f(g(y))]^+$ for all $y \in \R$. Fix any $y \in \R$ and let
	\[
		\bar g \defeq (x + f)^{-1} (y)
		\quad\iff\quad
		\bar g + f([\bar g]^+) = y
		\quad\iff\quad
		\bar g = y - f([\bar g]^+)\numberthis{eq:barg}
	\]
	There are two cases:
	\begin{itemize}
		\item If $\bar g \ge 0$, then $g(y) = [\bar g]^+ = \bar g$, and so we have
		      \[
			      g(y) = [\bar g]^+ = [y - f([\bar g]^+)]^+ = [y - f(g(y))]^+,
		      \]
		      as we wanted to show.
		\item Otherwise, $\bar g < 0$ and $g(y) = 0$. From~\eqref{eq:barg}, the condition $\bar g < 0$ implies $y < f([\bar g]^+) = f(0)$. So, it is indeed the case that
		      \[
			      0 = g(y) = [y - f(0)]^+ = [y - f(g(0))]^+,
		      \]
		      as we wanted to show.
	\end{itemize}
	Finally, we note that the function $(x+f)^{-1} : \R \to \R$ is \smpl due to \cref{lem:smpl inv}, implying that $g(y)$ is quasi-\smpl.
\end{proof}

\paragraph{Central result}
The following result is central in our analysis.
\begin{lemma}\label{lem:central}
	For any treeplex $Q \subseteq \bbR^d$, gradient $\vec g \in \bbR^d$, and center $\vec w \in \bbR^d_{>0}$, the function $t\mapsto \lam{Q}{\vec g}{\vec w}$ is \smpl, and a standard representation of it of size $d$ can be computed in polynomial time in $d$.
\end{lemma}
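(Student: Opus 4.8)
The plan is to argue by structural induction on the three clauses of the treeplex definition, carrying along the inductive hypothesis that for every treeplex $Q\subseteq\bbR^{d}$ the marginal value $t\mapsto\lam{Q}{\vec g}{\vec w}$ is \smpl and admits a standard representation of size at most $d$ computable in $\poly(d)$ time. The conceptual backbone throughout is that $\val{Q}{\vec g}{\vec w}$ is the optimal value of a strictly convex program whose feasible region $tQ$ is governed by the scalar budget $t$; by the envelope theorem (equivalently, KKT sensitivity analysis) its derivative $\lam{Q}{\vec g}{\vec w}$ equals the optimal dual variable attached to that budget, and the monotonicity of this dual variable in $t$ is exactly what makes it \smpl. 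Every operation needed to propagate the \smpl property through the recursion is supplied by \cref{lem:smpl mul,lem:quasi smpl sum,lem:smpl thresh,lem:smpl inv,lem:quasi smpl inv,lem:smpl fp}.

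For the base case $Q=\Delta^{d}$, I would form the Lagrangian of $\min_{\vec x\in t\Delta^{d}}\{-\vec g^{\top}\vec x+\tfrac12\sum_{\ind}(\vec x[\ind]/\vec w[\ind])^{2}\}$ with a single multiplier $\nu$ for $\sum_{\ind}\vec x[\ind]=t$, obtaining $\vec x[\ind]=\vec w[\ind]^{2}[\vec g[\ind]-\nu]^{+}$ and the defining relation $\sum_{\ind}\vec w[\ind]^{2}[\vec g[\ind]-\nu]^{+}=t$; the envelope theorem yields $\lam{Q}{\vec g}{\vec w}=-\nu(t)$. Writing $\mul=-\nu$, the left-hand side becomes $\Psi(\mul)\defeq\sum_{\ind}\vec w[\ind]^{2}[\vec g[\ind]+\mul]^{+}$, a sum of $d$ scaled thresholds, hence quasi-\smpl of size $d$ by \cref{lem:quasi smpl sum}; inverting it via \cref{lem:quasi smpl inv} exhibits $\lam{Q}{\vec g}{\vec w}=\Psi^{-1}(t)$ as \smpl of size at most $d$. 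The Cartesian-product case is immediate, since the objective separates across blocks so that $\val{Q}{\vec g}{\vec w}=\sum_{k}\val{Q_{k}}{\vec g_{k}}{\vec w_{k}}$, giving $\lam{Q}{\vec g}{\vec w}=\sum_{k}\lam{Q_{k}}{\vec g_{k}}{\vec w_{k}}$, a sum of \smpl functions that is itself \smpl of size $\sum_{k}d_{k}=d$ by \cref{lem:quasi smpl sum}.

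The branching case $Q=\triangle(Q_{1},\dots,Q_{K})$ is the crux. Reparametrizing $s_{k}\defeq t\,\vec x[k]$ turns the feasible set into $\{\sum_{k}s_{k}=t,\ s_{k}\ge 0,\ \text{block }k\in s_{k}Q_{k}\}$, so that, separating the root gradient/center $(\vec g_{0},\vec w_{0})$ from those of the children, the program collapses to the one-dimensional resource-allocation problem $\val{Q}{\vec g}{\vec w}=\min_{\sum_{k}s_{k}=t,\,s_{k}\ge 0}\sum_{k}\phi_{k}(s_{k})$, where $\phi_{k}(s)\defeq-\vec g_{0}[k]\,s+\tfrac12(s/\vec w_{0}[k])^{2}+\val[s]{Q_{k}}{\vec g_{k}}{\vec w_{k}}$ (the block minimization being exactly $\val[s_k]{Q_k}{\vec g_k}{\vec w_k}$ by definition). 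Each $\phi_{k}$ is convex, since its derivative $\phi_{k}'(s)=-\vec g_{0}[k]+s/\vec w_{0}[k]^{2}+\lam[s]{Q_{k}}{\vec g_{k}}{\vec w_{k}}$ is a strictly increasing affine function plus the increasing $\lam[s]{Q_{k}}{\vec g_{k}}{\vec w_{k}}$ from the inductive hypothesis, hence \smpl in $s$. Water-filling characterizes the optimum by a common marginal cost $\rho$, with $\frac{d}{dt}\val{Q}{\vec g}{\vec w}=\rho(t)=\lam{Q}{\vec g}{\vec w}$ and each $s_{k}$ solving $\phi_{k}'(s_{k})=\rho$ clamped to $s_{k}\ge 0$. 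Clearing $\vec w_{0}[k]^{2}$ recasts this as the fixed-point equation $s_{k}=[\,y_{k}(\rho)-\vec w_{0}[k]^{2}\lambda_{Q_{k}}(s_{k})\,]^{+}$ with $y_{k}(\rho)=\vec w_{0}[k]^{2}(\rho+\vec g_{0}[k])$, which is precisely the form solved by \cref{lem:smpl fp}, yielding the quasi-\smpl function $s_{k}(\rho)=[(\mathrm{id}+\vec w_{0}[k]^{2}\lambda_{Q_{k}})^{-1}(y_{k}(\rho))]^{+}$. Summing over $k$ (\cref{lem:quasi smpl sum}) shows that the total budget $t(\rho)=\sum_{k}s_{k}(\rho)$ is quasi-\smpl, and inverting it (\cref{lem:quasi smpl inv}) finally exhibits $\lam{Q}{\vec g}{\vec w}=\rho(t)$ as \smpl.

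For the size and complexity accounting I would track the standard-representation sizes through each invoked lemma: each child contributes size $d_{k}$, the affine perturbation, inversion, and thresholding all preserve size (\cref{lem:smpl mul,lem:smpl inv,lem:smpl thresh,lem:smpl fp}), and the $K$-fold sum followed by one inversion accumulates to $\sum_{k}d_{k}+K=d$, with every step polynomial by the explicit time bounds in those lemmas. I expect the branching case to be the main obstacle, for two reasons: first, one must justify that $\lam{Q}{\vec g}{\vec w}$ is genuinely the derivative of the value function—i.e. that $\val{Q}{\vec g}{\vec w}$ is $C^{1}$ with marginal equal to the common dual variable—which rests on strict convexity of the objective together with the continuity the \smpl structure guarantees; and second, matching the clamped water-filling equation to the exact hypotheses of \cref{lem:smpl fp} (domain $[0,+\infty)$, placement of $[\,\cdot\,]^{+}$, and the affine dependence on $\rho$) demands the bookkeeping only sketched here. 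The simplex and product cases, by contrast, are routine.
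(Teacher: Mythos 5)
Your proposal is correct and takes essentially the same approach as the paper: structural induction on the treeplex structure, with the branching case resolved via KKT/water-filling conditions, the fixed-point result (\cref{lem:smpl fp}) to express each block's mass as a quasi-\smpl function of the common dual variable, and then summation and inversion (\cref{lem:quasi smpl sum,lem:quasi smpl inv}) to obtain the derivative as a \smpl function of $t$, identified with the optimal dual multiplier by an envelope/sensitivity argument. The differences are cosmetic---you treat the simplex as a standalone base case where the paper folds it into the branching operation with empty children, and you absorb the root gradient term into the argument $y$ of the fixed-point equation rather than into the function $f$---and the points you flag as delicate (differentiability of the value function and the behavior at $t=0$) are precisely the ones the paper settles with its subgradient-uniqueness and continuity arguments.
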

\begin{proof}
	We will prove the result by structural induction on $Q$.
	\begin{itemize}[left=3mm]
		\item First, we consider the case where $Q$ is a Cartesian product,
		      \[
			      Q = Q_1 \times \dots \times Q_K.
		      \]
		      In that case, the value function decomposes as follows
		      \begin{align*}
			      \val{Q}{\vec g}{\vec w} = \sum_{k=1}^K \min_{\vec x_k\in tQ_k} \mleft\{ -\vec g_k^\top \vec x_k + \frac{1}{2}\sum_{\ind=1}^{d_k} \mleft(\frac{\vec x_k[\ind]}{\vec w_k[\ind]}\mright)^2 \mright\} = \sum_{k=1}^K \val{Q_k}{\vec g_k}{\vec w_k}.
		      \end{align*}
		      By linearity of derivatives, we have
		      \begin{align*}
			      \lam{Q}{\vec g}{\vec w} = \sum_{k=1}^K \lam{Q_k}{\vec g_k}{\vec w_k}.
		      \end{align*}
		      From \cref{lem:quasi smpl sum}, we conclude that $\lam{Q}{\vec g}{\vec w}$ is a \smpl function with domain $[0,+\infty)$ which admits a standard representation of size at most $d = d_1 + \dots +d_K$ computable in time $O(d \log K)$ starting from the standard representation of each of the $\lam{Q_k}{\vec g_k}{\vec w_k}$.

		\item Second, consider the case where $Q$ is a simplex or the result of a branching operation
		      \[
			      \triangle(Q_1,\dots,Q_K) = \{(\vec{x}, \vec{x}[1]\vec{q}_1, \dots, \vec{x}[K]\vec{q}_K): \vec{x} \in \Delta^K, \vec{q}_k \in Q_k ~~\forall k \in \range K\},
		      \]
		      where $Q_k \in \bbR^{d_k}$. With a slighty abuse of notation, we will treat the two cases together, considering the $K$-simplex $\Delta^K$ as a branching operation over empty sets $Q_k = \emptyset$.

		      In this case, we can write
		      \newcommand{\bul}{\bullet}
		      \[
			      \vec g &= (\vec g_\bul[1], \dots, \vec g_\bul[K], \vec g_1 \in \bbR^{d_1}, \cdots, \vec g_K \in \bbR^{d_K}),\text{ and}\\
			      \vec w &= (\vec w_\bul[1], \dots, \vec w_\bul[K], \vec w_1 \in \bbR_{>0}^{d_1}, \cdots, \vec w_K \in \bbR_{>0}^{d_K}).
		      \]
		      The value function then decomposes recursively as
		      \[
			      \val{Q}{\vec g}{\vec w} &= \min_{\vec x_\bul\in t\Delta^K} \mleft\{\mleft( -\sum_{\ind=1}^K \vec g_\bul[\ind] \vec x_\bul[\ind] + \frac{1}{2}\sum_{\ind=1}^K \mleft(\frac{\vec x_\bul[\ind]}{\vec w_\bul[\ind]}\mright)^2\mright)\mright.\\
			      &\hspace{3cm} \mleft.+ \sum_{k=1}^K \min_{\vec x_k \in \vec x_\bul[k] Q_k} \mleft\{ - \vec g_k^\top \vec x_k + \sum_{\ind=1}^{d_k} \mleft(\frac{\vec x_k[\ind]}{\vec w_k[\ind]}\mright)^2 \mright\}\mright\}\\
			      &= \min_{\vec x_\bul\in t\Delta^K} \mleft\{\mleft( -\sum_{\ind=1}^K \vec g_\bul[\ind] \vec x_\bul[\ind] + \frac{1}{2}\sum_{\ind=1}^K \mleft(\frac{\vec x_\bul[\ind]}{\vec w_\bul[\ind]}\mright)^2\mright) + \val[\vec x_\bul[k]]{Q_k}{\vec g_k}{\vec w_k}\mright\}.\numberthis{eq:vf branch}
		      \]

		      Suppose that for each $k\in \range{K}$, $\lam{Q_k}{\vec g_k}{\vec w_k}$ is piecewise linear and monotonically increasing in $t$.
		      Now we consider the KKT conditions for $\vec x_\bul$ in \cref{eq:vf branch}:
		      \[
			      -\vec g_\bul[k] + \frac{\vx_\bul[k]}{\vec w_\bul[k]^2} + \lam[\vx_\bul[k]]{Q_k}{\vg_k}{\vec w_k} = \lambda_\bul + \vec\mu[k] &\qquad \forall k \in \range K && \text{(Stationarity)} \\
			      \vx_\bul \in t\cdot \Delta^K &&& \text{(Primal feasibility)} \\
			      \lambda_\bul \in \R, \vec\mu \in \R_{\ge0}^d &&& \text{(Dual feasibility)}\\
			      \vec \mu[k] \vx_\bul[k] = 0 & \qquad\forall k \in \range K&& \text{(Compl. slackness)}
		      \]
		      Solving for $\vx_\bul[k]$ in the stationarity condition, and using the conditions $\vx_\bul[k]\vec\mu[k] = 0$ and $\vec\mu[k] \geq 0$, it follows that for all $k\in\range K$
		      \[
			      \vx_\bul[k] &= \vec w_\bul[k]^2\Big(\lambda_\bul + \vec\mu[k] + \vg_\bul[k] - \lam[\vx_\bul[k]]{Q_k}{\vg_k}{\vec w_k}\Big)\\
			      &= \vec w_\bul[k]^2\Big[\lambda_\bul + \vg_\bul[k]  - \lam[\vx_\bul[k]]{Q_k}{\vg_k}{\vec w_k}\Big]^+.\numberthis{eq:stationarity no mu}
		      \]

		      \paragraph{Strict monotonicity and piecewise-linearity of $\vec x_\bul[k]$ as a function of $\lambda_\bul$.}

		      Given the preliminaries on \smpl functions, it is now immediate to see that $\vx_\bul[k]$ is unique as a function of $\lambda_\bul$. Indeed, note that \eqref{eq:stationarity no mu} can be rewritten as
		      \[
			      \vx_\bul[k] = \Big[(\vec w_\bul[k]^2)\lambda_\bul - \vec w_\bul[k]^2\mleft(-\vg_\bul[k] + \lam[\vx_\bul[k]]{Q_k}{\vg_k}{\vec w_k}\mright)\Big]^+,
		      \]
		      which is a fixed-point problem of the form studied in \cref{lem:smpl fp} for $y = (\vec w_\bul[k]^2)\lambda_\bul$ and function $f_k$ defined as
		      \[
			      f_k(\vx_\bul[k]) = \vec w_\bul[k]^2\mleft(-\vg_\bul[k] + \lam[\vx_\bul[k]]{Q_k}{\vg_k}{\vec w_k}\mright),
		      \]
		      which is clearly \smpl by inductive hypothesis. Hence, the unique solution to the previous fixed-point equation is given by the quasi-\smpl function
		      \[
			      g_k: \lambda_\bul \mapsto \frac{1}{\vec w_\bul[k]^2}\mleft[(\vx_\bul[k]+f_k)^{-1}(\lambda_\bul)\mright]^+,
		      \]
		      a standard representation of which can be computed in time $O(d + 1)$ by combining the results of \cref{lem:smpl mul,lem:smpl thresh,lem:smpl inv} given that a standard representation of $\lam{Q_k}{\vg_k}{\vec w_k}$ of size $d$ is available by inductive hypothesis.



		      \paragraph{Strict monotonicity and piecewise-linearity of $\lambda_\bullet$ as a function of $t$.}

		      At this stage, we know that given any value of the dual variable $\lambda_\bul$, the unique value of the coordinate $\vec x_\bul[k]$ that solves the KKT system can be computed using the quasi-\smpl function $g_k$. In turn, this means that we can remove the primal variables $\vec x_\bul$ from the KKT system, leaving us a system in $\lambda_\bul$ and $t$ only. We now show that the solution $\lambda_\bul^\star$ of that system is a \smpl function of $t\in [0, +\infty)$.

		      Indeed, the value of $\lambda^\star_\bul$ that solves the KKT system has to satisfy the primal feasibility condition
		      \[
			      t = \sum_{k = 1}^K \vx_\bul[k] = \sum_{k=1}^K g_k(\lambda_\bul).
		      \]
		      Fix any $t > 0$. The right-hand side of the equation is a sum of quasi-\smpl functions. Hence, from \cref{lem:quasi smpl sum}, we have that the right-hand side has a standard representation of size at most $K+\sum_{k=1}^K d_k = d$ can be computed in time $O(d \log K)$.
		      Furthermore, from \cref{lem:quasi smpl inv}, we have that the $\lambda^\star_\bul$ that satisfies the equation is unique, and in fact that the mapping $(0,+\infty) \ni t \mapsto \lambda^\star_\bul$ is \smpl with standard representation of size at most $d$.

		      \paragraph{Relating $\lambda_\bul$ and $\lam{Q}{\vg}{\vec w}$.}
		      Since $\lambda_\bul^\star(t)$ is the coefficient on $t$ in the Lagrangian relaxation of \eqref{eq:vf branch}, it is a subgradient of $\val{Q}{\vg}{\vec w}$, and since there is a unique solution, we get that it is the derivative, that is,
		      \[\lambda_\bul^\star(t) = \lam{Q}{\vg}{\vec w}\]
		      for all $t \in (0, +\infty)$. To conclude the proof by induction, we then need to analyze the case $t=0$, which has so far been excluded. When $t=0$, the feasible set $tQ$ is a singleton, and $\val[0]{Q}{\vg}{\vec w} = 0$. Since $\val{Q}{\vg}{\vec w}$ is continuous on $[0,+\infty)$, and since $\lim_{t\to 0^+} \lam{Q}{\vg}{\vec w} = \lim_{t\to0^+} \lambda_\bul^\star(t)$ exists since $\lambda_\bul^\star(t)$ is piecewise-linear, then by the mean value theorem,
		      \[
			      \lam[0]{Q}{\vg}{\vec w} = \lim_{t\to0^+} \lambda_\bul^\star(t),
		      \]
		      that is, the continuous extension of $\lambda_\bul^\star$ must be (right) derivative of $\val{Q}{\vg}{\vec w}$ in $0$. As extending continuously $\lambda_\bul^\star(t)$ clearly does not alter its being \smpl nor its standard representation, we conclude the proof of the inductive case.
	\end{itemize}
\end{proof}



\cref{lem:central} also provides a constructive way of computing the argmin of \eqref{eq:prox} in polynomial time for any $t\in[0,+\infty)$. To conclude the construction of the proximal oracle, it is then enough to show how to pick the optimal value of $t \in [0,1]$ that minimizes
\[
	\min_{\vec x \in [0,1] Q} \left\{ - \vec g^\top \vx + \frac{1}{2}\sum_{\ind=1}^d \mleft(\frac{\vec{x}[\ind]}{\vec{w}[\ind]}\mright)^2 \right\} = \min_{t\in [0,1]} \val{Q}{\vec g}{\vec w}.
\]
That is easy starting from the derivative $\lam{Q}{\vec g}{\vec w}$, which is a \smpl function by \cref{lem:central}. Indeed, if $\lam[0]{Q}{\vec g}{\vec w} \ge 0$, then by monotonicity of the derivative we know that the optimal value of $t$ is $t=0$. Else, if $\lam[1]{Q}{\vec g}{\vec w} \le 0$, again by monotonicity we know that the optimal value of $t$ is $t=1$. Else, there exists a unique value of $t\in (0,1)$ at which the derivative of the objective is $0$, and such a value can be computed exactly using \cref{lem:smpl inv}.
\section{Experimental Results}
\label{section:experiments}

In this section we provide preliminary experimental results in order to verify our theoretical findings, and in particular the per-player regret bound established in \Cref{theorem:main-detailed}. More specifically, we investigate the behavior of our learning dynamics (\algoshort) in four standard extensive-form games used in the literature: $2$-player and $3$-player \emph{Kuhn poker}~\citep{Kuhn50:Extensive}; $2$-player \emph{Goofspiel}~\citep{Ross71:Goofspiel};\footnote{We consider instances of Goofspiel with $r = 3$ cards and \emph{limited information}---the actions of the other player are only observed at the end of the game. Also, we note that the tie-breaking mechanism makes the game general-sum.} and the baseline version of ($2$-player) \emph{Sheriff}~\citep{Farina19:Correlation}. From those games, only $2$-player Kuhn poker is a zero-sum game. Our findings are summarized in \Cref{fig:kuhn}.

\begin{figure}[!ht]
    \centering
    \def\sc{.76}
    \includegraphics[scale=\sc]{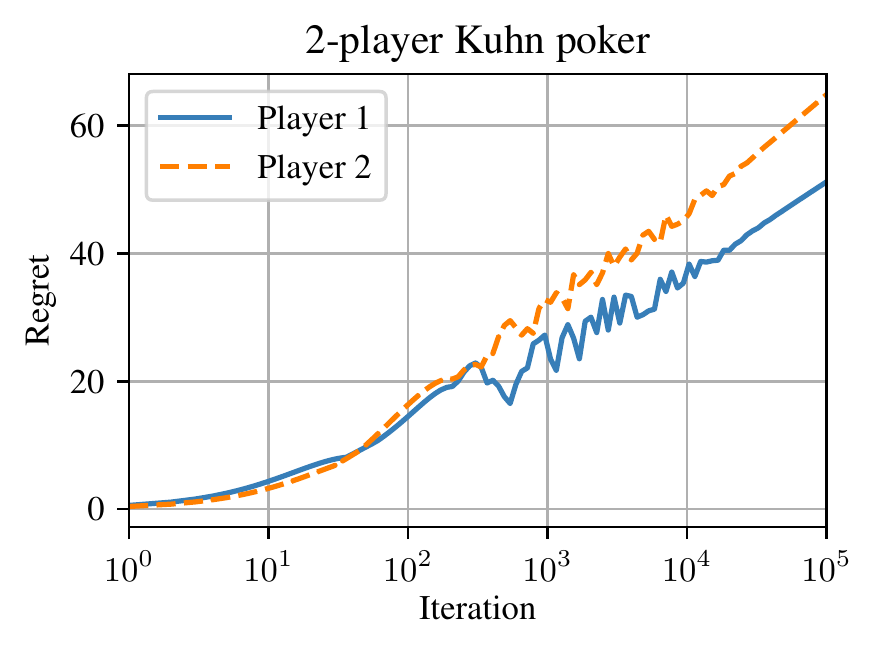}\hfill%
    \includegraphics[scale=\sc]{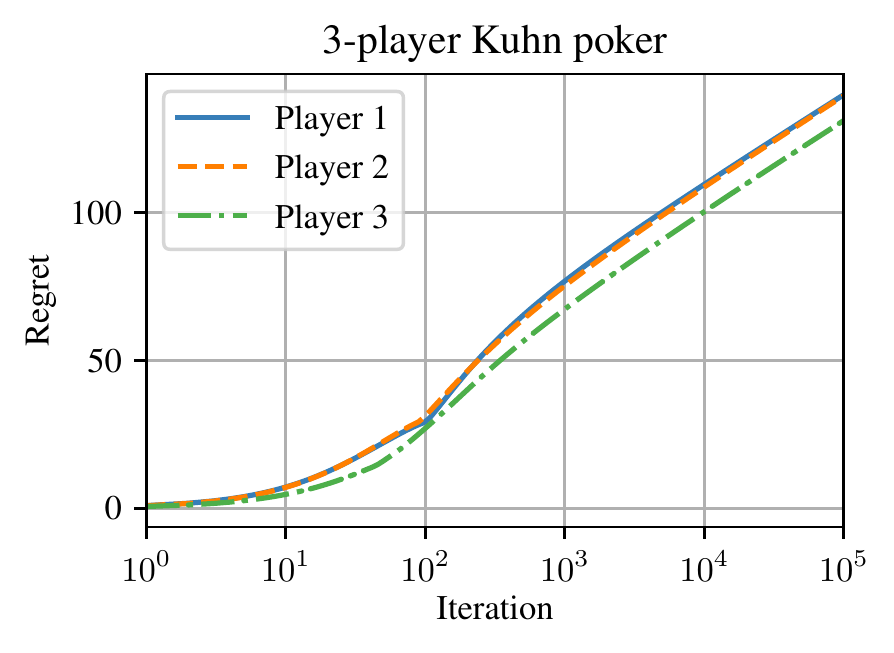}\\
    \includegraphics[scale=\sc]{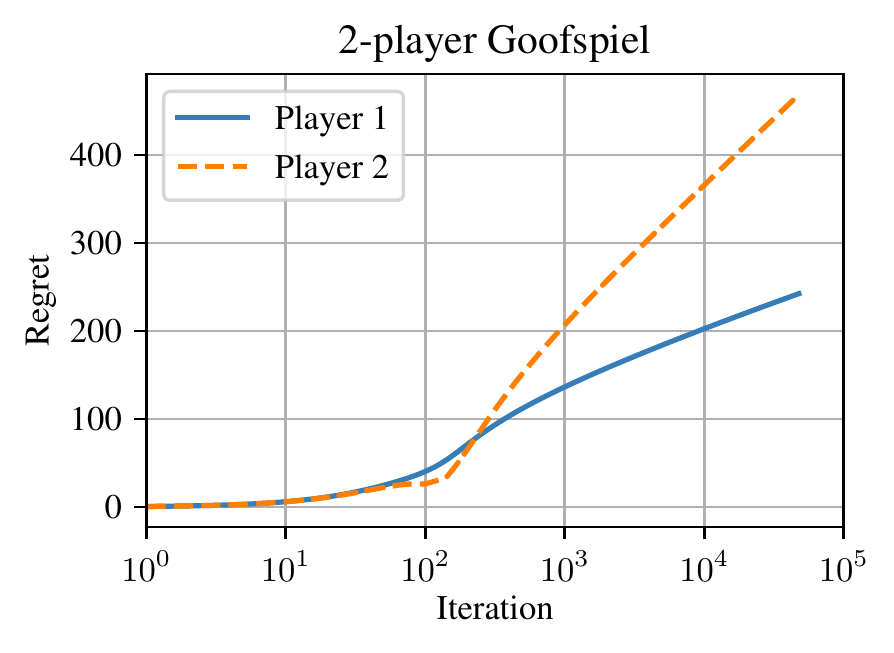}\hfill%
    \includegraphics[scale=\sc]{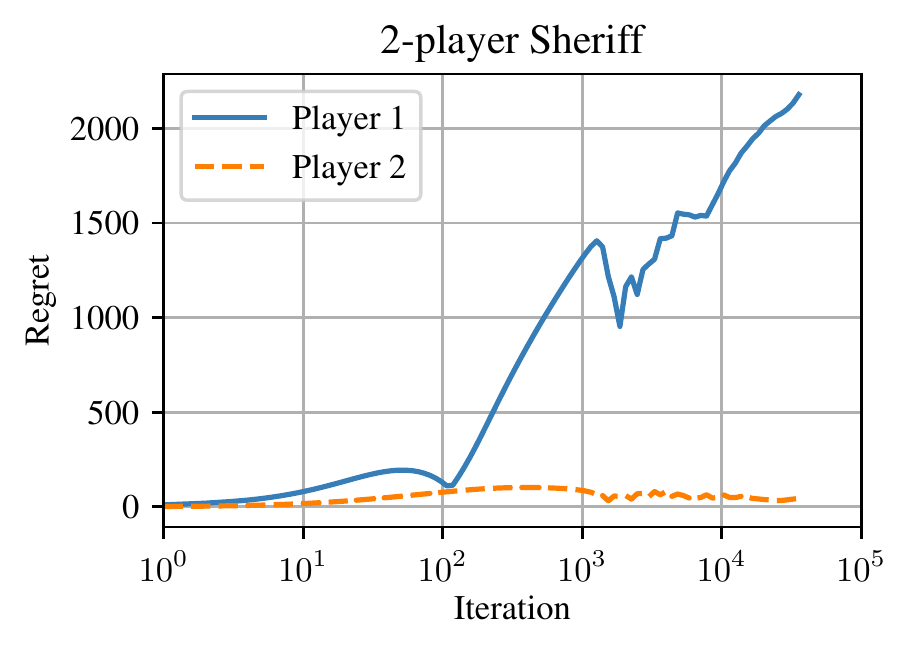}\\
    \caption{The regret of the players when they follow our learning dynamics, \algoshort; after a very mild tuning process, we selected the same learning rate $\eta \defeq 0.5$ for all games. The $x$-axis indexes the iteration, while the $y$-axis the regret. The scale on the $x$-axis is \emph{logarithmic}. We observe that the regret of each player grows as $O(\log T)$, verifying \Cref{theorem:main-detailed}.}
    \label{fig:kuhn}
\end{figure}

\end{document}